\definecolor{winered}{rgb}{0.8,0,0}
\definecolor{myblue}{rgb}{0,0,0.8}
\newtheorem{definition}{Definition}
\newtheorem{theorem}{Theorem}
\newtheorem{lemma}{Lemma}
\newtheorem{corollary}{Corollary}
\newtheorem{remark}{Remark}
\newtheorem{assumption}{Assumption}
\DeclareMathOperator*{\argmax}{\arg\!\max}
\DeclareMathOperator*{\argmin}{\arg\!\min}
\begin{document}
%
\title{A New Approach to Distributed Hypothesis Testing and Non-Bayesian Learning: Improved Learning Rate and Byzantine-Resilience} 
%
%
%

\author{Aritra Mitra, John A. Richards and Shreyas Sundaram
\thanks{A. Mitra and S. Sundaram are with the School of Electrical and Computer Engineering at Purdue University. J. A.  Richards is with Sandia National Laboratories.   Email: {\tt \{mitra14, sundara2\}@purdue.edu},  {\tt{jaricha@sandia.gov}}. This work was supported in part by NSF CAREER award
1653648, and by the Laboratory Directed Research and Development program at Sandia National Laboratories. Sandia National Laboratories is a multimission laboratory managed and operated by National Technology \& Engineering Solutions of Sandia, LLC, a wholly owned subsidiary of Honeywell International Inc., for the U.S. Department of Energy's National Nuclear Security Administration under contract DE-NA0003525. The views expressed in the article do not necessarily represent the views of the U.S. Department of Energy or the United States Government.}}

\maketitle
\begin{abstract}
We study a setting where a group of agents, each receiving partially informative private signals, seek to collaboratively learn the true
underlying state of the world (from a finite set of hypotheses) that generates their joint observation profiles.  To solve this problem, we propose a distributed learning rule that differs fundamentally from existing approaches, in that it does not employ any form of ``belief-averaging". Instead, agents update their beliefs based on a min-rule. Under standard assumptions on the observation model and the network structure, we establish that each agent learns the truth asymptotically almost surely. As our main contribution, we prove that with probability 1, each false hypothesis is ruled out by every agent exponentially fast at a network-independent rate that is strictly larger than existing rates. We then develop a computationally-efficient variant of our learning rule that is provably resilient to agents who do not behave as expected (as represented by a Byzantine adversary model) and deliberately try to spread misinformation.  
\end{abstract}
\maketitle
\section{Introduction}
Given noisy data, the task of making meaningful inferences about a quantity of interest is at the heart of various complex estimation and detection problems arising in signal processing, information theory, machine learning, and control systems. When the information required to solve such problems is dispersed over a network, several interesting questions arise. How should the individual entities in the network combine their own private observations with the information received from neighbors to learn the quantity of interest? What are the minimal requirements on the information structure of the entities and the topology of the network for this to happen? How fast does information spread as a function of the diffusion rule and the structure of the network? What can be said when the underlying network changes with time and/or certain entities deviate from nominal behavior? In this paper, we provide rigorous theoretical answers to such questions for the setting where a group of agents receive a stream of private signals generated by an unknown quantity known as the ``true state of the world". Communication among such agents is modeled by a graph. The goal of each agent is to eventually identify the true state from a finite set of hypotheses. However, while the \textit{collective} signals across all agents might facilitate identification of the true state, signals received by any given agent may, in general, not be rich enough for identifying the state in isolation.  Thus, the problem of interest is to develop and analyze local interaction rules that facilitate inference of the true state at every agent. The setup described above serves as a common mathematical abstraction for modeling and analyzing various decision-making problems in social and economic networks (e.g., opinion formation and spreading), and classification/detection problems arising in large-scale engineered systems (e.g., object recognition by a group of aerial robots). While the former is typically studied under the moniker of non-Bayesian social learning, the latter usually goes by the name of distributed detection/hypothesis testing. In what follows, we discuss relevant literature.

\textbf{Related Literature}: Much of the earlier work on this topic of interest assumed the existence of a centralized fusion center for performing computational tasks \cite{fusion1,fusion2,fusion3}. Our work in this paper, however, belongs to a more recent body of literature wherein  individual agents are endowed with computational capabilities, and interactions among them are captured by a graph \cite{GEBjad,jad2,liu,rad,shahinparam,shahinTAC,nedic,nedic2,lalitha1,lalitha2,su1}. These works are essentially inspired by the model in \cite{GEBjad}, where each agent maintains a belief vector (over the set of hypotheses) that is sequentially updated as the convex combination of its own Bayesian posterior and the priors of its neighbors. Subsequent approaches share a common theme: they typically involve a learning rule that combines a local Bayesian update with a consensus-based opinion pooling of neighboring beliefs. The key point of distinction among such rules stems from the specific manner in which neighboring opinions are aggregated. Specifically, linear opinion pooling is studied in \cite{GEBjad,jad2,liu}, whereas log-linear opinion pooling is studied in \cite{rad,shahinparam,shahinTAC,nedic,nedic2,lalitha1,lalitha2,su1}. Under appropriate conditions on the observation model and the network structure, each of these approaches enable every agent to learn the true state exponentially fast, with probability 1. The rate of convergence, however, depends on the specific nature of the learning rule. Notably, finite-time concentration results are derived in \cite{nedic,nedic2,shahinTAC}, and a large-deviation analysis is conducted in \cite{lalitha1,lalitha2} for a broad class of distributions that generate the agents' observation profiles. Extensions to different types of time-varying graphs have also been considered in \cite{liu,nedic,nedic2,shahinparam,shahinTAC}. In a recent paper \cite{molavi}, the authors go beyond specific functional forms of belief-update rules and, instead, adopt an axiomatic framework that identifies the fundamental factors responsible for social learning. We point out that belief-consensus algorithms on graphs have been studied prior to \cite{GEBjad} as well as in \cite{olfati,saligrama}. The model in \cite{olfati,saligrama} differs from that in \cite{GEBjad,jad2,liu,nedic,nedic2,lalitha1,lalitha2,shahinparam,shahinTAC,rad,su1} in one key aspect: while in the former each agent has access to only one observation, the latter allows for influx of new information into the network in the form of a time-series of observations at every agent. 

\textbf{Our Contributions}: In light of the above developments, we now elaborate on the main contributions of this work. 

\textbf{1) A Novel Distributed Learning Rule}: In \cite[Section III]{nedic}, the authors explain that the commonly studied linear and log-linear forms of belief aggregation are specific instances of a more general class of opinion pooling known as g-Quasi-Linear Opinion pools (g-QLOP), introduced in \cite{pools}. Our first contribution is the development of a novel belief update rule that deviates fundamentally from the broad family of g-QLOP learning rules. Specifically, the learning algorithm that we propose in Section \ref{sec:Algo} does not rely on any linear consensus-based belief aggregation protocol. Instead, each agent maintains two sets of belief vectors: a local belief vector and an actual belief vector. Each agent updates its local belief vector in a  Bayesian manner based on only its private observations, i.e., without the influence of neighbors. The actual belief on each hypothesis is updated (up to normalization) as the \textit{minimum} of the agent's own local belief and the actual beliefs of its neighbors on that particular hypothesis. We provide theoretical guarantees on the performance of this algorithm in Section \ref{sec:Results}. As we explain later in the paper, establishing such guarantees requires proof techniques that differ substantially from those existing. 

\textbf{2) Strict Improvement in Rate of Learning}: While data-aggregation via arithmetic or geometric averaging of neighboring beliefs allows asymptotic learning, such schemes may potentially dilute the rate at which false hypotheses are eliminated. In particular, for the linear consensus protocol introduced in  \cite{GEBjad}, the limiting rate at which a particular false hypothesis is eliminated is almost surely upper-bounded by a quantity that depends on the relative entropies and centralities of the agents \cite{jad2}. The log-linear rules in \cite{nedic,nedic2,lalitha1,lalitha2,shahinTAC} improve upon such a rate: with probability 1, the asymptotic rate of rejection of a false hypothesis under such rules is a convex combination of the agents' relative entropies, where the convex weights correspond to the eigenvector centralities of the agents. In contrast, based on our approach, each false hypothesis is rejected by every agent exponentially fast, at a rate that is almost surely lower-bounded by the \textit{best} relative entropy (between the true state and the false hypothesis) among all agents, provided the underlying network is static and strongly-connected. In Theorem \ref{thm:main}, we show that the above result continues to hold even when the network changes with time, as long as a mild joint strong-connectivity condition is met. \textit{Thus, to the best of our knowledge, our approach leads to a \textit{strict improvement} in the rate of learning over all existing approaches: this constitutes our main contribution}. 

\textbf{3) Resilience to Adversaries}: Despite the wealth of literature on distributed inference, there is limited understanding of the impact of misbehaving agents who do not follow the prescribed learning algorithm. Such agents may represent stubborn individuals or ideological extremists in the context of a social network, or model faults (either benign or malicious) in a networked control system. \textit{In the presence of such misbehaving entities, how should the remaining agents process their private observations and the beliefs of their neighbors to eventually learn the truth?}  To answer this question, we capture deviant behavior via the classical Byzantine adversary model \cite{lynch}, and develop a provably correct, resilient version of our proposed learning rule in Section \ref{sec:LFRHE}. Theorem \ref{thm:Byz} characterizes the performance of this rule and, in particular, reveals that each regular agent can infer the truth exponentially fast. Furthermore, we identify conditions on the observation model and the network structure that guarantee applicability of our Byzantine-resilient learning rule, and argue that such conditions can be checked in polynomial time. The only related work that we are aware of in this regard is \cite{su1}. As we discuss in detail in Section \ref{sec:LFRHE}, our proposed approach has various computational advantages relative to those in \cite{su1}. 

In addition to the main contributions discussed above, a minor contribution of this paper is the following. For static graphs where all agents behave normally, Theorem \ref{thm:minimal} establishes consistency of our learning rule under conditions that are necessary for \textit{any} belief update rule to work, when agents make conditionally independent observations. In particular, we show that the typical assumption of strong-connectivity on the network can be relaxed, and identify the minimal requirement for uniquely learning any state that gets realized.\footnote{A strongly-connected graph has a path between every pair of nodes.}  Despite its various advantages, our approach cannot, in general, handle the scenario where there does not exist any single true state that generates signals consistent with those seen by every agent. The method in \cite{nedic,nedic2}, however, is applicable to this case as well, and enables each agent to identify the hypothesis that \textit{best} explains the groups' observations. 

A preliminary version of this paper appeared as \cite{mitraACC19}. We significantly expand upon the content in \cite{mitraACC19}
by (i) providing detailed convergence rate analyses of our algorithms, (ii) extending our results to the case of time-varying graphs, (iii) elaborating on the significance of our results relative to prior work, and validating them via suitable simulation studies. 

\section{Model and Problem Formulation}
\label{sec:model}
\textbf{Network Model:} Let $\mathbb{N}$ and $\mathbb{N}_{+}$ denote the set of non-negative integers and positive integers, respectively. We consider a group of  agents $\mathcal{V}=\{1,2,\ldots,n\}$ interacting over a time-varying, directed communication graph $\mathcal{G}[t]=(\mathcal{V},\mathcal{E}[t])$, where $t\in\mathbb{N}$. An edge $(i,j)\in\mathcal{E}[t]$ indicates that agent $i$ can directly transmit information to agent $j$ at time-step $t$. If $(i,j)\in\mathcal{E}[t]$, then at time $t$,  agent $i$ will be called a neighbor of agent $j$, and agent $j$ will be called an out-neighbor of agent $i$. The set $\mathcal{N}_i[t]$ will be used to denote the neighbors of agent $i$ (excluding itself) at time $t$, whereas the set $\mathcal{N}_i[t]\cup\{i\}$ will be referred to as the inclusive neighborhood of agent $i$ at time $t$. We will use $|\mathcal{C}|$ to denote the cardinality of a set $\mathcal{C}$.

\textbf{Observation Model:} Let $\Theta=\{\theta_1,\theta_2,\ldots,\theta_m\}$ denote $m$ possible states of the world; each $\theta_i\in\Theta$ will be called a hypothesis. At each time-step $t\in\mathbb{N}_{+}$, every agent $i\in\mathcal{V}$  privately observes a signal $s_{i,t}\in\mathcal{S}_i$, where $\mathcal{S}_i$ denotes the signal space of agent $i$. The joint observation profile so generated across the network is denoted ${s}_{t}=(s_{1,t},s_{2,t},\ldots,s_{n,t})$, where $s_t\in\mathcal{S}$, and $\mathcal{S}=\mathcal{S}_1\times\mathcal{S}_2\times\ldots \mathcal{S}_n$. 
The signal $s_{t}$ is generated based on a conditional likelihood function $l(\cdot|\theta^{\star})$, governed by the true state of the world $\theta^{\star}\in\Theta$. Let $l_i(\cdot|\theta^{\star}), i\in\mathcal{V}$ denote the $i$-th marginal of $l(\cdot|\theta^{\star})$. 
The signal structure of each agent $i\in\mathcal{V}$ is then characterized by a family of parameterized marginals $\{l_i(w_i|\theta): \theta\in\Theta, w_i\in\mathcal{S}_i\}$.\footnote{Whereas $w_i\in\mathcal{S}_i$ will be used to refer to a generic element of the signal space of agent $i$,  $s_{i,t}$  will denote the random variable (with distribution $l_i(\cdot|\theta^{\star})$) that corresponds to the observation of agent $i$ at time-step $t$.} 

We make the following standard assumptions \cite{GEBjad,jad2,liu,nedic,nedic2,lalitha1,lalitha2,shahinparam,shahinTAC,su1}: (i) The signal space of each agent $i$, namely $\mathcal{S}_i$, is finite.\footnote{The analysis in \cite{rad} applies to continuous parameter spaces.} (ii) Each agent $i$ has knowledge of its local likelihood functions $\{l_i(\cdot|\theta_p)\}_{p=1}^{m}$, and it holds that $l_i(w_i|\theta) > 0, \forall w_i\in\mathcal{S}_i$, and $\forall \theta \in \Theta$. (iii) The observation sequence of each agent is described by an i.i.d. random process over time; however, at any given time-step, the observations of different agents may potentially be correlated. (iv) There exists a fixed true state of the world $\theta^{\star}\in\Theta$ (unknown to the agents) that generates the observations of all the agents.\footnote{The approach in \cite{nedic} applies to a more general setting where there may not exist such a true hypothesis.} Finally, we define a probability triple $(\Omega,\mathcal{F},\mathbb{P}^{\theta^{\star}})$, where $\Omega\triangleq\{\omega: \omega=(s_1,s_2,\ldots),  s_t\in\mathcal{S}, t \in \mathbb{N}_{+}\}$, $\mathcal{F}$ is the $\sigma$-algebra generated by the observation profiles, and $\mathbb{P}^{\theta^{\star}}$ is the probability measure induced by sample paths in $\Omega$. Specifically, $\mathbb{P}^{\theta^{\star}}=\prod \limits_{t=1}^{\infty}l(\cdot|\theta^{\star})$. For the sake of brevity, we will say that an event occurs almost surely to mean that it occurs almost surely w.r.t. the probability measure $\mathbb{P}^{\theta^{\star}}$. 

Note that assumptions (i) and (ii) on the observation model imply the existence of a constant $L\in(0,\infty)$ such that:
\begin{equation}
    \max_{i\in\mathcal{V}} \max_{w_i\in\mathcal{S}_i} \max_{\theta_p,\theta_q\in\Theta} \left|\log\frac{l_i(w_i|\theta_p)}{l_i(w_i|\theta_q)}\right| \leq L.
\label{eqn:bounded_lograt}
\end{equation}
We will make use of the above fact later in our analysis.

Given the above setup, the goal of each agent in the network is to discern the true state of the world $\theta^{\star}$. The challenge associated with such a task stems from the fact that the private signal structure of any given agent is in general only partially informative. To make this notion precise, define $\Theta^{\theta^{\star}}_i\triangleq\{\theta\in\Theta : l_i(w_i|\theta)=l_i(w_i|\theta^{\star}), \forall w_i\in\mathcal{S}_i\}.$ In words, $\Theta^{\theta^{\star}}_i$ represents the set of hypotheses that are \textit{observationally equivalent} to the true state $\theta^{\star}$ from the perspective of agent $i$. In general, for any agent $i\in\mathcal{V}$, we may have $|\Theta^{\theta^{\star}}_i| > 1$, necessitating collaboration among agents subject to the restrictions imposed by the time-varying communication topology. 

Our \textbf{objective} in this paper will be to design a distributed learning rule that allows each agent $i\in\mathcal{V}$ to identify the true state of the world asymptotically almost surely. To this end, we now introduce the following notion of source agents that will be useful in our subsequent developments.

\begin{definition}(\textbf{Source agents}) An agent $i$ is said to be a source agent for a pair of distinct hypotheses $\theta_p,\theta_q\in\Theta$, if $D(l_i(\cdot|\theta_p)||l_i(\cdot|\theta_q)) > 0$, where $D(l_i(\cdot|\theta_p)||l_i(\cdot|\theta_q))$ represents the KL-divergence between the distributions $l_i(\cdot|\theta_p)$ and $l_i(\cdot|\theta_q)$, and is given by:
\begin{equation}
D(l_i(\cdot|\theta_p)||l_i(\cdot|\theta_q))=\sum \limits_{w_i\in\mathcal{S}_i}l_i(w_i|\theta_p)\log\frac{l_i(w_i|\theta_p)}{l_i(w_i|\theta_q)}.
\end{equation}
The set of all source agents for the pair $\theta_p,\theta_q$ is denoted by $\mathcal{S}(\theta_p,\theta_q)$.
\end{definition}

In words, a source agent for a pair $\theta_p,\theta_q\in\Theta$ is an agent that can distinguish between the pair of hypotheses $\theta_p,\theta_q$ based on its private signal structure. It should be noted that $\mathcal{S}(\theta_p,\theta_q)=\mathcal{S}(\theta_q,\theta_p)$, since $D(l_i(\cdot|\theta_p)||l_i(\cdot|\theta_q))>0 \iff D(l_i(\cdot|\theta_q)||l_i(\cdot|\theta_p))>0$ \cite{cover}. To avoid cluttering the exposition, we will henceforth use $K_i(\theta_p,\theta_q)$ as a shorthand for $D(l_i(\cdot|\theta_p)||l_i(\cdot|\theta_q))$. In this work, we will assume that each state $\theta\in\Theta$ is \textit{globally identifiable} w.r.t. the joint observation model of the entire network. Based on our terminology of source agents, this translates to the following.

\begin{assumption}
(\textbf{Global Identifiability}) For each pair $\theta_p,\theta_q\in\Theta$ such that $\theta_p \neq \theta_q$, the set $\mathcal{S}(\theta_p,\theta_q)$ of agents that can distinguish between the pair $\theta_p,\theta_q$ is non-empty.
\label{ass:identifiability}
\end{assumption}

The above assumption is standard in the related literature. We will additionally make a mild assumption on the time-varying communication topology. To this end, let the union graph over an interval $[t_1,t_2], 0 \leq t_1 < t_2$, indicate a graph with vertex set equal to $\mathcal{V}$, and  edge set equal to $\bigcup^{t_2}_{\tau=t_1}\mathcal{E}[\tau]$. Based on this convention, we will assume (unless stated otherwise) that the sequence of communication graphs $\{\mathcal{G}[t]\}_{t=0}^{\infty}$ is \textit{jointly strongly-connected}, in the following sense.
\begin{assumption}\textbf{(Joint Strong-Connectivity)} There exists ${T}\in\mathbb{N}_{+}$ such that the union graph over every interval of the form $[rT,(r+1)T)$ is strongly-connected, where $r\in\mathbb{N}$.
\label{assump:connectivity}
\end{assumption}

While the above assumption on the network connectivity pattern is not necessary for solving the problem at hand, it is fairly standard in the analysis of distributed algorithms over time-varying networks \cite{jadTAC2013,nedicopt,nedic}. Having established the model and the problem formulation, we now proceed to a formal description of our distributed learning algorithm.
\section{Proposed Learning Rule}
\label{sec:Algo}
In this section, we propose a novel belief update rule (Algorithm \ref{Algo:Algo1}) and discuss the intuition behind it. Every agent $i$ maintains and updates (at every time-step $t$) two separate sets of belief vectors, namely, $\boldsymbol{\pi}_{i,t}$ and $\boldsymbol{\mu}_{i,t}$. Each of these vectors are probability distributions over the hypothesis set $\Theta$. We will refer to $\boldsymbol{\pi}_{i,t}$ and $\boldsymbol{\mu}_{i,t}$ as the ``local" belief vector (for reasons that will soon become obvious), and the ``actual" belief vector, respectively, maintained by agent $i$. The \textbf{goal} of each agent $i\in\mathcal{V}$ in the network will be to use its own private signals and the information available from its neighbors to update $\boldsymbol{\mu}_{i,t}$ sequentially, so that $\lim_{t\to\infty}\mu_{i,t}(\theta^{*})=1$ almost surely. To do so, at each time-step $t+1$ (where $t\in\mathbb{N}$), agent $i$ does the following for each $\theta\in\Theta$. It first generates $\pi_{i,t+1}(\theta)$ via a local Bayesian update rule that incorporates the private observation $s_{i,t+1}$ using $\pi_{i,t}(\theta)$ as a prior (line 5 in Algo. \ref{Algo:Algo1}). Having generated $\pi_{i,t+1}(\theta)$, agent $i$ updates $\mu_{i,t+1}(\theta)$ (up to normalization) by setting it to be the \emph{minimum} of its locally generated belief $\pi_{i,t+1}(\theta)$, and the actual beliefs $\mu_{j,t}(\theta), j\in \mathcal{N}_i[t]\cup\{i\}$ of its inclusive  neighborhood at the previous time-step (line 6 in Algo. \ref{Algo:Algo1}). It then reports $\boldsymbol{\mu}_{i,t+1}$ to each of its out-neighbors at time $t+1$.\footnote{Note that based on our algorithm, agents only exchange their actual beliefs, and not their local beliefs.} 

\begin{algorithm}[t]
	\caption{Belief update rule for each $i\in\mathcal{V}$}
\label{Algo:Algo1}
	\begin{algorithmic}[1]
	\State \textbf{Initialization:} $\mu_{i,0}(\theta)>0$,  $\pi_{i,0}(\theta)>0$, $\forall \theta\in\Theta$, and  $\sum_{\theta\in\Theta}\mu_{i,0}(\theta)=1$, $\sum_{\theta\in\Theta}\pi_{i,0}(\theta)=1$
	\State Transmit  $\boldsymbol{\mu}_{i,0}$ to  out-neighbors at time $0$
\For{$t+1\in\mathbb{N}_{+}$}
\For{$\theta\in\Theta$}
 \State Update local belief on $\theta$ as
   \begin{equation}
\pi_{i,t+1}(\theta)=\frac{l_i(s_{i,t+1}|\theta)\pi_{i,t}(\theta)}{\sum  \limits_{p=1}^{m} l_i(s_{i,t+1}|\theta_p)\pi_{i,t}(\theta_p)}
\label{eqn:Bayes}
\end{equation}

\State Update actual belief on $\theta$ as
\begin{equation}
\mu_{i,t+1}(\theta)=\frac{\min\{\{\mu_{j,t}(\theta)\}_{{j\in\mathcal{N}_i[t]\cup\{i\}}},\pi_{i,t+1}(\theta)\}}{\sum\limits_{p=1}^{m}\min\{\{\mu_{j,t}(\theta_p)\}_{{j\in\mathcal{N}_i[t]\cup\{i\}}},\pi_{i,t+1}(\theta_p)\}}
\label{eqn:rule1}
\end{equation}
\EndFor
 \State Transmit $\boldsymbol{\mu}_{i,t+1}$ to out-neighbors at time $t+1$
\EndFor
	\end{algorithmic}
\end{algorithm}

{\bf Intuition behind the learning rule}: 
At the core of our learning algorithm are two key principles: (1) \textit{Preservation of the intrinsic discriminatory capabilities of the agents}, and (2) \textit{Propagation of low beliefs on each false hypothesis}. We now elaborate on these features.

Consider the set of source agents $\mathcal{S}(\theta^{*},\theta)$ that can differentiate between a certain false hypothesis $\theta$ and the true state $\theta^{\star}$. By definition, the signal structures of such agents are rich enough for them to be able to eliminate $\theta$ on their own, i.e., without the support of their neighbors. To achieve this, we require each agent to maintain a local belief vector that is updated (via \eqref{eqn:Bayes}) \textit{without any network influence} using only the agent's own private signals. Doing so ensures that $\pi_{i,t}(\theta)\rightarrow 0$ a.s. for each $i\in\mathcal{S}(\theta^{\star},\theta)$. Next, leveraging this property, we want to be able to propagate low beliefs on $\theta$ from $\mathcal{S}(\theta^{\star},\theta)$ to $\mathcal{V}\setminus \mathcal{S}(\theta^{\star},\theta)$, i.e., the agents in $\mathcal{S}(\theta^{*},\theta)$ should contribute towards driving the actual beliefs of their out-neighbors (and eventually, of all the agents in the set $\mathcal{V}\setminus \mathcal{S}(\theta^{\star},\theta)$) on the hypothesis $\theta$ to zero. Using a min-rule of the form \eqref{eqn:rule1}, with $\pi_{i,t+1}(\theta)$ featuring as an external network-independent input, facilitates such propagation without compromising the abilities of agents in $\mathcal{S}(\theta^{\star},\theta)$ to eliminate $\theta$. When set in motion, our learning rule triggers a process of belief reduction on $\theta$  originating at $\mathcal{S(\theta^{\star},\theta)}$ that eventually propagates to each agent in the network reachable from $\mathcal{S(\theta^{\star},\theta)}$. 

\begin{remark}
We emphasize that the proposed learning rule given by Algorithm \ref{Algo:Algo1} does not employ any form of ``belief-averaging''. This feature is in stark contrast with existing approaches to distributed hypothesis testing that rely either on linear opinion pooling \cite{GEBjad,jad2,liu}, or log-linear opinion pooling\cite{rad,shahinparam,shahinTAC,nedic,nedic2,lalitha1,lalitha2,su1}. As such, the lack of linearity in our belief update rule precludes (direct or indirect) adaptation of existing analysis techniques to suit our needs. 
\end{remark}

\section{Analysis of Algorithm \ref{Algo:Algo1}}
\label{sec:Results}
\subsection{Statement of the Results}
In this section, we characterize the performance of Algorithm \ref{Algo:Algo1}. We start with one of the main results of the paper, proven in Appendix \ref{sec:App1}.
\begin{theorem}
Suppose the observation model satisfies the global identifiability condition (Assumption \ref{ass:identifiability}), and the sequence of communication graphs $\{\mathcal{G}[t]\}_{t=0}^{\infty}$ is jointly strongly-connected (Assumption \ref{assump:connectivity}).
Then, Algorithm \ref{Algo:Algo1} provides the following guarantees.
\begin{itemize}
    \item \textbf{(Consistency)}: For each agent  $i\in\mathcal{V}$,  $\mu_{i,t}(\theta^{\star}) \rightarrow 1$ a.s.
    \item \textbf{(Asymptotic Rate of Rejection of False Hypotheses)}: Consider any false hypothesis $\theta\in\Theta\setminus\{\theta^{\star}\}$. Then, the following holds for each agent $i\in\mathcal{V}$:
    \begin{equation}
        \liminf_{t\to\infty}-\frac{\log\mu_{i,t}(\theta)}{t} \geq \max_{v\in\mathcal{S}(\theta^{\star},\theta)}K_v(\theta^{\star},\theta) \hspace{1mm} a.s.
        \label{eqn:asymprate}
    \end{equation}
\end{itemize}
\label{thm:main}
\end{theorem}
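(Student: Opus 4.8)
The plan is to reduce both assertions to a single statement: for every false hypothesis $\theta\neq\theta^{\star}$ and every agent $i\in\mathcal{V}$, $\mu_{i,t}(\theta)\to 0$ almost surely, at an exponential rate lower-bounded by $\max_{v\in\mathcal{S}(\theta^{\star},\theta)}K_v(\theta^{\star},\theta)$. Once this is established, consistency is immediate, since $\boldsymbol{\mu}_{i,t}$ is a probability vector over the finite set $\Theta$, so $\mu_{i,t}(\theta^{\star})=1-\sum_{\theta\neq\theta^{\star}}\mu_{i,t}(\theta)\to 1$; and the displayed bound \eqref{eqn:asymprate} is exactly the rate claim. The core difficulty — and the reason the linearity-based techniques in prior work do not transfer — is the nonlinear normalization in \eqref{eqn:rule1}; most of the argument is devoted to showing this normalization is harmless, i.e., that the denominator in \eqref{eqn:rule1} stays uniformly bounded away from $0$ almost surely, so that the normalized min-rule behaves like an unnormalized $\min$-recursion up to a bounded multiplicative constant.

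First I would analyze the local beliefs. Since \eqref{eqn:Bayes} is a pure Bayesian update driven by the i.i.d. stream $\{s_{i,t}\}$, writing $\log\bigl(\pi_{i,t}(\theta)/\pi_{i,t}(\theta^{\star})\bigr)$ as a telescoping sum of single-step log-likelihood ratios and applying the strong law of large numbers (the summands are bounded by $L$ via \eqref{eqn:bounded_lograt}, hence integrable under $\mathbb{P}^{\theta^{\star}}$) yields $-\tfrac1t\log\bigl(\pi_{i,t}(\theta)/\pi_{i,t}(\theta^{\star})\bigr)\to K_i(\theta^{\star},\theta)$ a.s. A companion observation is that $1/\pi_{i,t}(\theta^{\star})=\sum_{p}\pi_{i,t}(\theta_p)/\pi_{i,t}(\theta^{\star})$, where each ratio with $\theta_p\in\Theta^{\theta^{\star}}_i$ is a fixed constant and each ratio with $\theta_p\notin\Theta^{\theta^{\star}}_i$ tends to $0$ a.s.; hence $\pi_{i,t}(\theta^{\star})$ converges to a strictly positive limit, and so $\inf_{t}\pi_{i,t}(\theta^{\star})>0$ a.s. Combining the two facts gives $\liminf_{t}-\tfrac1t\log\pi_{i,t}(\theta)\geq K_i(\theta^{\star},\theta)$ a.s. for every source agent $i\in\mathcal{S}(\theta^{\star},\theta)$ — the ``preservation of discriminatory power'' property.

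Next I would establish a uniform positive floor on the actual belief of the truth. In \eqref{eqn:rule1} each term of the denominator is at most $\pi_{i,t+1}(\theta_p)$, so the denominator is at most $1$, giving $\mu_{i,t+1}(\theta^{\star})\geq\min\{\{\mu_{j,t}(\theta^{\star})\}_{j\in\mathcal{N}_i[t]\cup\{i\}},\pi_{i,t+1}(\theta^{\star})\}$. With $a_t:=\min_{i\in\mathcal{V}}\mu_{i,t}(\theta^{\star})$ and $b_t:=\min_{i\in\mathcal{V}}\pi_{i,t}(\theta^{\star})$ this reads $a_{t+1}\geq\min\{a_t,b_{t+1}\}$, hence by induction $a_t\geq\min\{a_0,\inf_{\tau\geq1}b_\tau\}=:\rho$, which is strictly positive a.s. by the previous paragraph and finiteness of $\mathcal{V}$. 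Therefore the denominator in \eqref{eqn:rule1} is $\geq\rho$ a.s. for all $t$, which yields the key inequality $\mu_{i,t+1}(\theta)\leq\tfrac1\rho\min\{\{\mu_{j,t}(\theta)\}_{j\in\mathcal{N}_i[t]\cup\{i\}},\pi_{i,t+1}(\theta)\}$ for every $\theta$ — in particular $\mu_{i,t+1}(\theta)\leq\tfrac1\rho\mu_{j,t}(\theta)$ for any $j$ in $i$'s inclusive neighborhood at time $t$, and $\mu_{i,t+1}(\theta)\leq\tfrac1\rho\pi_{i,t+1}(\theta)$. Now fix $\theta\neq\theta^{\star}$ and pick $v^{\star}\in\argmax_{v\in\mathcal{S}(\theta^{\star},\theta)}K_v(\theta^{\star},\theta)$ (nonempty by Assumption \ref{ass:identifiability}). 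Under the joint strong-connectivity condition (Assumption \ref{assump:connectivity}) there is a constant $B$ depending only on $n$ and $T$ such that, for all large $t$, a time-respecting walk of length $L=L(t)\leq B$ carries information from $v^{\star}$ to $i$ and ends at time $t$; chaining the key inequality along it gives $\mu_{i,t}(\theta)\leq\rho^{-L}\mu_{v^{\star},t-L}(\theta)\leq\rho^{-(L+1)}\pi_{v^{\star},t-L}(\theta)\leq\rho^{-(B+1)}\pi_{v^{\star},t-L}(\theta)$ (using $\rho\leq1$). Taking $-\tfrac1t\log(\cdot)$, noting $(t-L(t))/t\to1$ because $L(t)$ is bounded, and invoking the local-belief rate at $v^{\star}$, we get $\liminf_{t}-\tfrac1t\log\mu_{i,t}(\theta)\geq K_{v^{\star}}(\theta^{\star},\theta)=\max_{v\in\mathcal{S}(\theta^{\star},\theta)}K_v(\theta^{\star},\theta)$ a.s.; intersecting over the finitely many false hypotheses and source agents gives \eqref{eqn:asymprate}, and since the right-hand side is strictly positive, $\mu_{i,t}(\theta)\to0$ a.s., which yields consistency as noted.

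I expect the main obstacle to be the middle step — taming the normalization. The uniform, almost sure, strictly positive lower bound $\rho$ on $\mu_{i,t}(\theta^{\star})$ is what makes the ``treat the min-rule as an unnormalized min'' strategy go through, and it rests on the subtle fact that likelihood ratios against the true state can only stay bounded or vanish, never blow up — which is precisely where the observational-equivalence sets $\Theta^{\theta^{\star}}_i$ and assumption (ii) enter. A secondary technicality is extracting bounded-length time-respecting walks from Assumption \ref{assump:connectivity} and carrying the accumulated factors $\rho^{-L}$ through the chaining; this is innocuous because $L$ is bounded by the deterministic constant $B$, so the resulting $O(1/t)$ correction vanishes in the $\liminf$.
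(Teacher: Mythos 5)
Your proposal is correct and mirrors the paper's own proof: SLLN analysis of the purely local Bayesian beliefs, a strictly positive floor on the true-state beliefs that tames the normalization in \eqref{eqn:rule1}, and propagation of the best source agent's rejection rate to every agent along bounded-length time-respecting paths extracted from joint strong connectivity. Your execution is marginally cleaner in two spots — the all-time floor $\rho$ via the recursion $a_{t+1}\geq\min\{a_t,b_{t+1}\}$ (the paper only proves an eventual bound $\eta(\omega)$ beyond a random time $t'(\omega)$), and evaluating the source's local-belief rate directly at the delayed index $t-L(t)$, which sidesteps the paper's correction term relating $\pi_{v,a(t)}(\theta)$ to $\pi_{v,t}(\theta)$ — but the decomposition and key lemmas are essentially the same.
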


The above result tells us that with probability $1$, every agent $i$ will be able to rule out each false hypothesis $\theta$ exponentially fast, at a rate that is eventually lower-bounded by the \textit{best} KL-divergence across the network between the pair of hypotheses $\theta^{\star}$ and $\theta$. In particular, this implies that given any $\epsilon >0$, the probability that agent $i$'s instantaneous rate of rejection of $\theta$, namely  $-\log\mu_{i,t}(\theta)/{t}$, is lower than the quantity $\max_{v\in\mathcal{S}(\theta^{\star},\theta)}K_v(\theta^{\star},\theta)$ by an additive factor of $\epsilon$, decays to zero. The next result, proven in Appendix \ref{sec:App2}, sheds some light on the rate of decay of this probability. 

\begin{theorem}
Suppose the conditions in Theorem \ref{thm:main} hold. Fix $\theta\in\Theta\setminus\{\theta^{\star}\}$, and let $\bar{K}(\theta^{\star},\theta)=\max_{v\in\mathcal{S}(\theta^{\star},\theta)}K_v(\theta^{\star},\theta)$. Then for every $\epsilon>0$ and $\delta\in(0,1)$, there exists a set $\Omega'(\delta)\subseteq\Omega$ with $\mathbb{P}^{\theta^{\star}}(\Omega'(\delta))\geq 1-\delta$, such that the following holds for each agent $i\in\mathcal{V}$:
\begin{equation}
   \liminf \limits_{t\to\infty} -\frac{1}{t}  \log \mathbb{P}^{\theta^{\star}}\left(\left\{-\frac{\log\mu_{i,t}(\theta)}{t} \leq \bar{K}(\theta^{\star},\theta)-\epsilon\right\}\cap\Omega'(\delta)\right) \geq \frac{\epsilon^2}{8L^2}.
\label{eqn:conc_main}
\end{equation}
\label{thm:conc}
\end{theorem}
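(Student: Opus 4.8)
The plan is to reduce \eqref{eqn:conc_main} to a one-sided Hoeffding estimate for the empirical mean of the i.i.d.\ log-likelihood ratios of a ``best'' source agent. The essential preliminary step is to show that, on a suitable event $\Omega'(\delta)$ of probability at least $1-\delta$, the actual belief $\mu_{i,t}(\theta)$ of \emph{every} agent is dominated --- up to a time-independent multiplicative constant and a bounded time-lag --- by the \emph{local} belief on $\theta$ of that source agent. One then turns ``$-\log\mu_{i,t}(\theta)$ is small'' into a large-deviation event for a bounded i.i.d.\ sum and applies Hoeffding's inequality.

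To construct $\Omega'(\delta)$: by Theorem~\ref{thm:main}, $\mu_{j,t}(\theta^{\star})\to1$ a.s.\ for every $j$, so (since $\mu_{j,0}(\theta^{\star})>0$) the quantity $\inf_t\mu_{j,t}(\theta^{\star})$ is a.s.\ strictly positive; moreover, for each $p$ the sequence $\{\log(\pi_{j,t}(\theta^{\star})/\pi_{j,t}(\theta_p))\}$ is a random walk whose increments are either a.s.\ zero (when $\theta_p$ is observationally equivalent to $\theta^{\star}$ at $j$) or have strictly positive mean $K_j(\theta^{\star},\theta_p)$, hence is a.s.\ bounded below; this forces $\inf_t\pi_{j,t}(\theta^{\star})>0$ a.s. Therefore one can fix $\rho(\delta)>0$ so that $\Omega'(\delta)\triangleq\big\{\inf_{t,j}\mu_{j,t}(\theta^{\star})\ge\rho(\delta)\big\}\cap\big\{\inf_{t,j}\pi_{j,t}(\theta^{\star})\ge\rho(\delta)\big\}$ has probability at least $1-\delta$. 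On $\Omega'(\delta)$, the normalizing sum in \eqref{eqn:rule1} is bounded below by $\rho(\delta)$ at every agent and every time, since it dominates the $\theta^{\star}$-term.

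Next, fix a source agent $v\in\mathcal{S}(\theta^{\star},\theta)$ with $K_v(\theta^{\star},\theta)=\bar{K}(\theta^{\star},\theta)$. From the min-update \eqref{eqn:rule1}, on $\Omega'(\delta)$ one has $\mu_{i,t+1}(\theta)\le\rho(\delta)^{-1}\mu_{j,t}(\theta)$ for every $j\in\mathcal{N}_i[t]\cup\{i\}$, and $\mu_{j,s}(\theta)\le\rho(\delta)^{-1}\pi_{j,s}(\theta)$ for every $j,s$ because the min defining $\mu_{j,s}(\theta)$ includes $\pi_{j,s}(\theta)$ as an argument. Under Assumption~\ref{assump:connectivity} there is a network-dependent $\bar{t}\in\mathbb{N}$ (of order $nT$) such that, for every $t\ge\bar{t}$, a time-respecting walk from $v$ to $i$ exists using only edges active in $[t-\bar{t},t)$; composing the above inequalities along this walk (padding with self-loops as needed) yields, on $\Omega'(\delta)$ and for all $t\ge\bar{t}$ and $i\in\mathcal{V}$, $\mu_{i,t}(\theta)\le C(\delta)\,\pi_{v,t-\bar{t}}(\theta)$ for a finite constant $C(\delta)$. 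This is the quantitative form of the ``propagation of low beliefs'' that already underlies Theorem~\ref{thm:main}. Now set $X_{v,\tau}\triangleq\log\big(l_v(s_{v,\tau}|\theta)/l_v(s_{v,\tau}|\theta^{\star})\big)$; these are i.i.d.\ over $\tau$, valued in $[-L,L]$ by \eqref{eqn:bounded_lograt}, with mean $-\bar{K}(\theta^{\star},\theta)$. Since $\pi_{v,s}(\theta^{\star})\le1$, we have $-\log\pi_{v,s}(\theta)\ge-\log\big(\pi_{v,s}(\theta)/\pi_{v,s}(\theta^{\star})\big)=\log\big(\pi_{v,0}(\theta^{\star})/\pi_{v,0}(\theta)\big)-\sum_{\tau=1}^{s}X_{v,\tau}$. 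Combining with $\mu_{i,t}(\theta)\le C(\delta)\pi_{v,t-\bar{t}}(\theta)$ and writing $s=t-\bar{t}$, on $\Omega'(\delta)$ the event $\{-\log\mu_{i,t}(\theta)/t\le\bar{K}(\theta^{\star},\theta)-\epsilon\}$ implies $\tfrac1s\sum_{\tau=1}^{s}\big(X_{v,\tau}+\bar{K}(\theta^{\star},\theta)\big)\ge\epsilon-c(\delta)/s$, where $c(\delta)$ is a finite constant lumping together the contributions of $\bar{t}$, $C(\delta)$ and the initial beliefs; hence, for all $t$ large enough, this event is contained in $A_t\triangleq\big\{\tfrac1s\sum_{\tau=1}^{s}\big(X_{v,\tau}+\bar{K}(\theta^{\star},\theta)\big)\ge\epsilon/2\big\}$. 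Because the centred summands lie in an interval of length $2L$, Hoeffding's inequality gives $\mathbb{P}^{\theta^{\star}}(A_t)\le\exp\big(-2s(\epsilon/2)^2/(2L)^2\big)=\exp\big(-(t-\bar{t})\epsilon^2/(8L^2)\big)$, whence $-\tfrac1t\log\mathbb{P}^{\theta^{\star}}\big(\{-\log\mu_{i,t}(\theta)/t\le\bar{K}(\theta^{\star},\theta)-\epsilon\}\cap\Omega'(\delta)\big)\ge\tfrac{t-\bar{t}}{t}\cdot\tfrac{\epsilon^2}{8L^2}$, and letting $t\to\infty$ gives \eqref{eqn:conc_main}. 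The same $\Omega'(\delta)$ serves every agent $i$ simultaneously.

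The main obstacle is the quantitative propagation bound $\mu_{i,t}(\theta)\le C(\delta)\pi_{v,t-\bar{t}}(\theta)$: one needs a \emph{uniform-in-time} domination with a \emph{bounded} lag $\bar{t}$ and a constant $C(\delta)$ that does not grow with $t$. This rests on (i) choosing $\Omega'(\delta)$ so that the normalization in \eqref{eqn:rule1} never collapses at any agent or time, and (ii) the combinatorial consequence of joint strong-connectivity that every agent is reachable from every source agent within a fixed number of time-steps. The remaining ingredients --- relating $-\log\pi_{v,s}(\theta)$ to a bounded i.i.d.\ sum and invoking Hoeffding --- are routine; the mildly conservative constant $\epsilon^2/(8L^2)$ is exactly the price of absorbing all the $O(1/t)$ slack of the preceding steps into one half of the margin $\epsilon$.
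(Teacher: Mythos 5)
Your proposal is correct, and it reaches the same Hoeffding-based endgame as the paper --- dominate $\mu_{i,t}(\theta)$ by the best source agent's local belief up to controlled nuisance terms, restrict to a high-probability set on which the nuisance is uniform in time, and spend half the margin $\epsilon$ to absorb the $O(1/t)$ slack --- but you get there by a genuinely different technical device. The paper first derives (in the proof of Theorem \ref{thm:main}) the pathwise inequality $-\frac{1}{t}\log\mu_{i,t}(\theta)\geq-\frac{1}{t}\log\pi_{v_\theta,t}(\theta)+b(t)$ with a variable lag $c_{i,t}(\theta)\leq 2(n-1)T$, shifts the source agent's belief back to time $t$ by bounding the ratio $\pi_{v,t}(\theta)/\pi_{v,a(t)}(\theta)$ via \eqref{eqn:bounded_lograt}, and then invokes Egoroff's theorem on the a.s.\ convergence $b(t)\to 0$ to manufacture $\Omega'(\delta)$ on which $|b(t)|\leq\epsilon/2$ uniformly for large $t$; Hoeffding is then applied to the full sum of $t$ increments. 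You instead build $\Omega'(\delta)$ explicitly as a uniform-in-time lower bound $\rho(\delta)$ on every agent's local and actual belief on $\theta^{\star}$ (justified by the a.s.\ positivity of $\inf_t\pi_{j,t}(\theta^{\star})$ and $\inf_t\mu_{j,t}(\theta^{\star})$, which follows from the drift argument for the local log-ratios and consistency), obtain the time-uniform domination $\mu_{i,t}(\theta)\leq C(\delta)\pi_{v,t-\bar t}(\theta)$ with a \emph{fixed} lag $\bar t=O(nT)$ (padding with the agent's own inclusion in its inclusive neighborhood), and apply Hoeffding to $t-\bar t$ increments, with the ratio $(t-\bar t)/t\to1$ recovering the exponent $\epsilon^2/(8L^2)$. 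The fixed-lag reachability fact you rely on is exactly the content of the paper's Lemmas \ref{lemma:delay_bound} and \ref{lemma:belief_relation}, so nothing is missing there. What your route buys is a more concrete exceptional set (a belief floor rather than an Egoroff set for the abstract sequence $b(t)$) and the avoidance of Egoroff altogether; what the paper's route buys is that it recycles the already-proved inequality \eqref{eqn:ineq_main} verbatim, so Theorem \ref{thm:conc} becomes a short corollary of the Theorem \ref{thm:main} machinery. Both yield the same constant and the same qualitative $\delta$-dependence of $\Omega'(\delta)$.
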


Our next result pertains to the special case when the communication graph does not change over time, i.e., when $\mathcal{G}[t]=\mathcal{G}, \forall t\in\mathbb{N}$. To state the result, we will employ the following terminology. Given two disjoint sets $\mathcal{C}_1,\mathcal{C}_2 \subseteq{\mathcal{V}}$, we say $\mathcal{C}_2$ is reachable from $\mathcal{C}_1$ if for every $i\in\mathcal{C}_2$, there exists a directed path in $\mathcal{G}$ from some $j\in\mathcal{C}_1$ to agent $i$ (note that $j$ will in general be a function of $i$). 
\begin{theorem}
Let the communication graph be time-invariant and be denoted by $\mathcal{G}$. Suppose the following conditions hold. (i) The observation model satisfies the global identifiability condition (Assumption \ref{ass:identifiability}). (ii) For every pair of hypotheses $\theta_p \neq \theta_q\in\Theta$, the set $\mathcal{V}\setminus\mathcal{S}(\theta_p,\theta_q)$ is reachable from the set $\mathcal{S}(\theta_p,\theta_q)$ in $\mathcal{G}$. Then, Algorithm \ref{Algo:Algo1} guarantees consistency as in Theorem \ref{thm:main}. Furthermore, for every $\theta\in\Theta\setminus\{\theta^{\star}\}$, the following holds for each agent $i\in\mathcal{V}$:
    \begin{equation}
        \liminf_{t\to\infty}-\frac{\log\mu_{i,t}(\theta)}{t} \geq \max_{v\in\mathcal{S}_i(\theta^{\star},\theta)}K_v(\theta^{\star},\theta) \hspace{1mm} a.s.,
        \label{eqn:rate2}
    \end{equation}
where $\mathcal{S}_i(\theta^{\star},\theta)\subseteq\mathcal{S}(\theta^{\star},\theta)$ are those source agents from which there exists a directed path to $i$ in $\mathcal{G}$. 
\label{thm:minimal}
\end{theorem}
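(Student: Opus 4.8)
The plan is to exploit the fixed topology so that, for each false hypothesis $\theta$, ``small beliefs'' on $\theta$ are transported from each source agent $v$ to agent $i$ along a single directed path --- which is exactly how the sharper rate $\max_{v\in\mathcal{S}_i(\theta^\star,\theta)}K_v(\theta^\star,\theta)$ arises. First I would pin down the \emph{local} beliefs. Iterating \eqref{eqn:Bayes}, $\log\!\big(\pi_{i,t}(\theta)/\pi_{i,t}(\theta^\star)\big)=\log\!\big(\pi_{i,0}(\theta)/\pi_{i,0}(\theta^\star)\big)+\sum_{\tau=1}^{t}\log\frac{l_i(s_{i,\tau}|\theta)}{l_i(s_{i,\tau}|\theta^\star)}$; since the observations are i.i.d.\ over time and the summands are bounded by $L$ (cf.\ \eqref{eqn:bounded_lograt}), the strong law of large numbers gives $\tfrac1t\sum_{\tau=1}^t\log\frac{l_i(s_{i,\tau}|\theta)}{l_i(s_{i,\tau}|\theta^\star)}\to-K_i(\theta^\star,\theta)$ a.s. Separately, iterating \eqref{eqn:Bayes} and dividing numerator and denominator by $\prod_{\tau=1}^t l_i(s_{i,\tau}|\theta^\star)$, the ratios attached to hypotheses observationally equivalent to $\theta^\star$ are identically $1$ while all others tend to $0$ a.s., so $\pi_{i,t}(\theta^\star)$ converges a.s.\ to a positive deterministic constant. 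Hence, a.s.: (i) there exist a deterministic $\beta>0$ and an a.s.\ finite time $T_0$ with $\pi_{i,t}(\theta^\star)\ge\beta$ for all $i\in\mathcal{V}$ and $t\ge T_0$; and (ii) $-\tfrac1t\log\pi_{i,t}(\theta)\to K_i(\theta^\star,\theta)$ for every $i\in\mathcal{S}(\theta^\star,\theta)$.

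Next I would obtain a uniform positive floor on the actual belief in $\theta^\star$. Each summand in the denominator of \eqref{eqn:rule1} is at most $\mu_{i,t}(\theta_p)$, so $Z_{i,t+1}:=\sum_p\min\{\cdots\}\le\sum_p\mu_{i,t}(\theta_p)=1$, and therefore $\mu_{i,t+1}(\theta)\ge\min\{\{\mu_{j,t}(\theta)\}_{j\in\mathcal{N}_i[t]\cup\{i\}},\pi_{i,t+1}(\theta)\}$ for every $\theta$. Beliefs stay strictly positive at every finite time, so $\alpha_0:=\min_i\mu_{i,T_0}(\theta^\star)>0$ a.s.; setting $\gamma:=\min\{\beta,\alpha_0\}>0$ and inducting on $t\ge T_0$ using (i) and the previous inequality with $\theta=\theta^\star$, one gets $\mu_{i,t}(\theta^\star)\ge\gamma$ for all $i$ and $t\ge T_0$. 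Retaining only the $\theta^\star$-term then gives $\gamma\le Z_{i,t+1}\le1$ for $t\ge T_0$, so \eqref{eqn:rule1} yields, for every false $\theta$, every $t\ge T_0$, and every $j\in\mathcal{N}_i[t]\cup\{i\}$, the one-step bounds $\mu_{i,t+1}(\theta)\le\gamma^{-1}\mu_{j,t}(\theta)$ and $\mu_{i,t+1}(\theta)\le\gamma^{-1}\pi_{i,t+1}(\theta)$.

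The rate estimate \eqref{eqn:rate2} then follows by propagation along fixed paths. Fix a false $\theta$ and any $v\in\mathcal{S}_i(\theta^\star,\theta)$, and choose a directed path $v=j_0\to j_1\to\cdots\to j_\ell=i$ in $\mathcal{G}$ (of length $\ell=0$ when $v=i$). Since $\mathcal{G}$ is time-invariant, each edge $(j_{k-1},j_k)$ is present at every time-step, so chaining the one-step bounds $\mu_{j_k,t}(\theta)\le\gamma^{-1}\mu_{j_{k-1},t-1}(\theta)$ together with $\mu_{v,s}(\theta)\le\gamma^{-1}\pi_{v,s}(\theta)$ gives $\mu_{i,t}(\theta)\le C\,\pi_{v,t-\ell}(\theta)$ for all large $t$, where $C=C(\gamma,\ell)$ is a constant. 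Taking $-\tfrac1t\log(\cdot)$ and letting $t\to\infty$ (so $\ell/t\to0$ and $(t-\ell)/t\to1$), part (ii) yields $\liminf_{t\to\infty}-\tfrac1t\log\mu_{i,t}(\theta)\ge K_v(\theta^\star,\theta)$, and maximizing over $v\in\mathcal{S}_i(\theta^\star,\theta)$ gives \eqref{eqn:rate2}. Moreover $\mathcal{S}_i(\theta^\star,\theta)\neq\emptyset$: either $i\in\mathcal{S}(\theta^\star,\theta)$, or $i\in\mathcal{V}\setminus\mathcal{S}(\theta^\star,\theta)$, which is reachable from $\mathcal{S}(\theta^\star,\theta)$ by condition (ii), and $\mathcal{S}(\theta^\star,\theta)\neq\emptyset$ by Assumption \ref{ass:identifiability}; hence the right-hand side of \eqref{eqn:rate2} is strictly positive, $\mu_{i,t}(\theta)\to0$ a.s., and consistency follows since $\sum_{\theta\in\Theta}\mu_{i,t}(\theta)=1$ forces $\mu_{i,t}(\theta^\star)\to1$.

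I expect the main obstacle to be the uniform-in-$t$ positive floor on $\mu_{i,t}(\theta^\star)$ (second paragraph): the normalization in \eqref{eqn:rule1} could a priori amplify beliefs on false hypotheses, and only the combination of the elementary bound $Z_{i,t+1}\le1$ with the strong-law floor on $\pi_{i,t}(\theta^\star)$ rules this out; once $\gamma$ is available, the one-step contraction and the path-propagation are routine. This floor argument and the local-belief analysis parallel the corresponding steps in the proof of Theorem \ref{thm:main}; the genuinely new ingredient is that the fixed topology lets the propagation terminate at $\mathcal{S}_i(\theta^\star,\theta)$ rather than at all of $\mathcal{S}(\theta^\star,\theta)$, which is precisely what permits weakening the connectivity requirement to condition (ii).
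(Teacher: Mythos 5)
Your proposal is correct and follows essentially the same route as the paper: the paper's proof of Theorem \ref{thm:minimal} simply invokes the machinery of Theorem \ref{thm:main} (the local-belief rate via the strong law as in Lemma \ref{lemma:Bayes}, the uniform lower bound $\eta(\omega)$ on true-state beliefs as in Lemma \ref{lemma:lower_bound}, the one-step min-rule bounds, and propagation from source agents to $i$), which is exactly the structure of your argument. Your chaining along a fixed directed path of length $\ell$ is just the natural static-graph specialization of the paper's time-varying delay bookkeeping (Lemmas \ref{lemma:delay_bound}--\ref{lemma:belief_relation}), and your handling of the floor $\gamma$, the $\ell=0$ case, and the nonemptiness of $\mathcal{S}_i(\theta^{\star},\theta)$ is sound.
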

\begin{proof}
Fix $\theta\in\Theta\setminus\{\theta^{\star}\}$, and consider an agent $i\in\mathcal{V}\setminus\mathcal{S}(\theta^{\star},\theta)$. The sets $\mathcal{S}(\theta^{\star},\theta)$ and $\mathcal{S}_i(\theta^{\star},\theta)$ are non-empty based on conditions (i) and (ii) of the theorem, respectively. Following a similar line of argument as in the proof of Theorem \ref{thm:main}, one can establish  the following for each $v\in\mathcal{S}_i(\theta^{\star},\theta)$.
\begin{equation}
        \liminf_{t\to\infty}-\frac{\log\mu_{i,t}(\theta)}{t} \geq K_v(\theta^{\star},\theta) \hspace{1mm} a.s.
    \end{equation}
The assertion regarding equation \eqref{eqn:rate2} then follows readily. Consistency follows by noting that since $\mathcal{S}_i(\theta^{\star},\theta)\subseteq\mathcal{S}(\theta^{\star},\theta)$, $K_v(\theta^{\star},\theta) >0, \forall v\in\mathcal{S}_i(\theta^{\star},\theta).$
\end{proof}

Our next result reveals that the combination of conditions (i) and (ii) in Theorem \ref{thm:minimal} constitutes \textit{minimal} requirements on the observation model and the network structure for \textit{any} learning algorithm to guarantee consistency, when the observations of the agents are conditionally independent. 

\begin{theorem}
Let the communication graph be time-invariant and be denoted by $\mathcal{G}$. Then, the following assertions hold.
\begin{enumerate}
    \item[(i)] Conditions (i) and (ii) in Theorem \ref{thm:minimal}, taken together, is equivalent to global identifiability of each source component of $\mathcal{G}$.\footnote{A source component of a time-invariant graph $\mathcal{G}$ is a strongly connected component with no incoming edges.} 
    \item[(ii)] Suppose the observations of the agents are independent conditional on the realization of any state, i.e., $l(\cdot|\theta)=\prod\limits_{i=1}^{n}l_i(\cdot|\theta), \forall \theta\in\Theta$. Then, global identifiability of each source component of $\mathcal{G}$ is necessary and sufficient for unique identification of any true state that gets realized, at every agent, with probability 1.  
\end{enumerate}
\label{thm:neccsuff}
\end{theorem}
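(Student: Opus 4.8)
The plan is to prove the two parts in sequence, since part (ii) builds on part (i). For part (i), I would argue both directions of the equivalence. Suppose first that conditions (i) and (ii) of Theorem \ref{thm:minimal} hold, and let $\mathcal{C}$ be any source component of $\mathcal{G}$. Fix a pair $\theta_p \neq \theta_q$. By global identifiability (condition (i)), $\mathcal{S}(\theta_p,\theta_q)$ is nonempty; by condition (ii), $\mathcal{V}\setminus\mathcal{S}(\theta_p,\theta_q)$ is reachable from $\mathcal{S}(\theta_p,\theta_q)$. The key observation is that a source component has no incoming edges, so the only way an agent in $\mathcal{C}$ can be reached from $\mathcal{S}(\theta_p,\theta_q)$ is if the path originates inside $\mathcal{C}$ itself — hence $\mathcal{S}(\theta_p,\theta_q)\cap\mathcal{C}\neq\emptyset$, which is precisely global identifiability of $\mathcal{C}$. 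Conversely, suppose every source component is globally identifiable. Then for any pair $\theta_p\neq\theta_q$, picking any source component gives a source agent, so condition (i) holds. For condition (ii): every node in a directed graph is reachable from some source component (follow edges backwards through the condensation DAG until you hit a component with no incoming edges), and that source component contains a source agent for $\theta_p,\theta_q$; concatenating the path within the DAG with the path inside the source component shows $\mathcal{V}\setminus\mathcal{S}(\theta_p,\theta_q)$ is reachable from $\mathcal{S}(\theta_p,\theta_q)$.

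For part (ii), sufficiency is essentially immediate: under conditional independence, the marginals $l_i(\cdot|\theta)$ fully describe the model, so the hypotheses of Theorem \ref{thm:minimal} are met whenever each source component is globally identifiable (by part (i)), and Theorem \ref{thm:minimal} then gives unique identification at every agent with probability $1$ via Algorithm \ref{Algo:Algo1}. The substantive direction is necessity: I would show that if some source component $\mathcal{C}$ fails to be globally identifiable, then \emph{no} learning algorithm can guarantee unique identification at every agent with probability $1$. The idea is an indistinguishability (coupling) argument. If $\mathcal{C}$ is not globally identifiable, there is a pair $\theta_p\neq\theta_q$ with $\mathcal{S}(\theta_p,\theta_q)\cap\mathcal{C}=\emptyset$, i.e., $K_i(\theta_p,\theta_q)=0$ for all $i\in\mathcal{C}$; since the signal spaces are finite and likelihoods strictly positive, this forces $l_i(\cdot|\theta_p)=l_i(\cdot|\theta_q)$ for every $i\in\mathcal{C}$. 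Now consider two worlds, one with true state $\theta_p$ and one with $\theta_q$. Because $\mathcal{C}$ is a source component, agents in $\mathcal{C}$ receive information \emph{only} from within $\mathcal{C}$ — their signals and all messages they ever see are functions of the signal streams of agents in $\mathcal{C}$ alone. Under conditional independence, the joint law of $\{s_{i,t}: i\in\mathcal{C}, t\geq 1\}$ is identical under $\theta_p$ and under $\theta_q$. Hence the entire transcript (signals plus received messages) at every agent in $\mathcal{C}$ has the same distribution under both hypotheses, so any (possibly randomized) decision rule run by agents in $\mathcal{C}$ produces output with identical law in the two worlds; it cannot output $\theta_p$ with probability $1$ in the first world and $\theta_q$ with probability $1$ in the second.

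The main obstacle I anticipate is making the indistinguishability argument fully rigorous: one must carefully set up the probability spaces for the two candidate true states and exhibit a measure-preserving bijection (coupling) between sample paths that fixes the $\mathcal{C}$-coordinates, then argue by induction on $t$ that the full information state of every agent in $\mathcal{C}$ (beliefs, messages transmitted and received) is the same measurable function of the $\mathcal{C}$-signals under both models — this uses crucially that $\mathcal{C}$ has no incoming edges, so the recursion never pulls in data from outside $\mathcal{C}$. A secondary subtlety is phrasing the impossibility for arbitrary algorithms rather than a specific update rule: the cleanest route is to note that "unique identification with probability $1$" requires, for agents in $\mathcal{C}$, a measurable map from their transcript to $\Theta$ that equals $\theta^\star$ almost surely, and then derive a contradiction from the fact that the pushforward of this map is the same distribution on $\Theta$ whether the true state is $\theta_p$ or $\theta_q$. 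The reachability bookkeeping in part (i) and the reduction of $K_i=0$ to equality of marginals (which uses finiteness of $\mathcal{S}_i$ and Assumption (ii) on strict positivity) are routine and I would state them briefly.
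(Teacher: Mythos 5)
The paper itself omits the proof of Theorem \ref{thm:neccsuff} (it is declared ``fairly straightforward''), so there is no written argument to compare against; your proposal correctly supplies the natural one. Both directions of part (i) are sound (the key facts being that a path terminating inside a source component must lie entirely within it, and that every node is reachable from some source component of the condensation, whose strong connectivity lets you route the path through a source agent), and in part (ii) sufficiency follows from part (i) together with Theorem \ref{thm:minimal}, while your necessity argument---$K_i(\theta_p,\theta_q)=0$ for all $i\in\mathcal{C}$ forces $l_i(\cdot|\theta_p)=l_i(\cdot|\theta_q)$ on $\mathcal{C}$, and conditional independence plus the absence of incoming edges makes the entire transcript of every agent in $\mathcal{C}$ identically distributed under $\theta_p$ and $\theta_q$, defeating any (even randomized) decision rule---is exactly the standard indistinguishability argument the statement requires, and you correctly identify that conditional independence is what rules out distinguishing via correlations among the $\mathcal{C}$-signals.
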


The proof of the above result is fairly straightforward and hence omitted here. We now leverage the above results to quantify the rate at which the overall network uncertainty about the true state decays to zero. To measure such uncertainty, we employ the following metric from \cite{jad2} which captures the total variation distance between the agents' beliefs at time-step $t$, and the probability distribution that is concentrated entirely on the true state of the world, namely $\mathbf{1}_{\theta^{\star}}(\cdot)$:
\begin{equation}
    e_t(\theta^{\star})\triangleq\frac{1}{2}\sum\limits^{n}_{i=1}{\Vert\boldsymbol{\mu}_{i,t}(\cdot)-\mathbf{1}_{\theta^{\star}}(\cdot)\Vert}_{1}=\sum\limits^{n}_{i=1}\sum\limits_{\theta\neq\theta^{\star}}\mu_{i,t}(\theta).
    \label{eqn:TV_err}
\end{equation}
Given that $\theta^{\star}$ gets realized, the \textit{rate of social learning} is then defined as \cite{jad2,lalitha1}:
\begin{equation}
    \rho_L(\theta^{\star})\triangleq \liminf_{t\to\infty} -\frac{1}{t}\log e_t(\theta^{\star}).
\end{equation}
Notice that the above expression depends on the state being realized; to account for the realization of any state, one can simply look at the quantity $\min_{\theta^{\star}\in\Theta}\rho_L(\theta^{\star})$ that provides a sense for the least rate of learning one can expect given a certain observation model, a network, and a consistent learning algorithm. We have the following immediate corollaries of Theorems \ref{thm:main} and \ref{thm:minimal}; their proofs are trivial and hence omitted.

\begin{corollary}
Suppose the conditions stated in Theorem \ref{thm:main} are met. Then, Algorithm \ref{Algo:Algo1} guarantees:
\begin{equation}
    \rho_L(\theta^{\star}) \geq \min_{\theta\neq\theta^{\star}} \max_{v\in\mathcal{S}(\theta^{\star},\theta)}K_v(\theta^{\star},\theta) \hspace{1mm} a.s. 
    \label{eqn:rate_social1}
\end{equation}
\label{corr_main}
\end{corollary}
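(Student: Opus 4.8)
The plan is to reduce the statement entirely to Theorem \ref{thm:main}, exploiting the elementary fact that the exponential decay rate of a finite sum of nonnegative sequences is dictated by its slowest-decaying term. Recall from \eqref{eqn:TV_err} that $e_t(\theta^{\star}) = \sum_{i=1}^{n}\sum_{\theta\neq\theta^{\star}}\mu_{i,t}(\theta)$ is a sum of $N \triangleq n(m-1)$ strictly positive quantities. The first step is to sandwich this sum between its largest summand and $N$ times its largest summand; taking $-\tfrac1t\log(\cdot)$ then yields
\[
-\frac{1}{t}\log e_t(\theta^{\star}) \;\geq\; -\frac{\log N}{t} \;+\; \min_{i\in\mathcal{V}}\;\min_{\theta\neq\theta^{\star}} \left(-\frac{\log\mu_{i,t}(\theta)}{t}\right).
\]

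The second step is to pass to $\liminf_{t\to\infty}$ on both sides. The term $-(\log N)/t$ vanishes, and the one technical point worth isolating is that for a \emph{finite} collection of real sequences the $\liminf$ of their pointwise minimum equals the minimum of their individual $\liminf$s: this holds by extracting a subsequence along which the minimizing index $(i,\theta)$ is constant, which is possible precisely because $\mathcal{V}\times(\Theta\setminus\{\theta^{\star}\})$ is finite. Combining these observations gives
\[
\rho_L(\theta^{\star}) \;\geq\; \min_{i\in\mathcal{V}}\;\min_{\theta\neq\theta^{\star}} \; \liminf_{t\to\infty}\left(-\frac{\log\mu_{i,t}(\theta)}{t}\right).
\]

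The final step invokes the asymptotic-rate bound \eqref{eqn:asymprate} of Theorem \ref{thm:main}: for each fixed pair $(i,\theta)$ with $\theta\neq\theta^{\star}$, the inner $\liminf$ is bounded below by $\max_{v\in\mathcal{S}(\theta^{\star},\theta)}K_v(\theta^{\star},\theta)$ on an event of $\mathbb{P}^{\theta^{\star}}$-probability one; since there are only finitely many such pairs, the intersection of the corresponding full-measure events still has probability one, and minimizing over $\theta\neq\theta^{\star}$ on that event produces \eqref{eqn:rate_social1}. I do not expect a genuine obstacle here — all the substance lives in Theorem \ref{thm:main}, and what remains is the routine bookkeeping that the decay-rate functional interacts with finite sums and finite minima in the obvious way, plus a check that every almost-sure claim is taken over the single probability-one event obtained as a finite intersection.
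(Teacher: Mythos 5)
Your proof is correct, and since the paper omits the proof of Corollary \ref{corr_main} as trivial, your argument is exactly the routine one the authors have in mind: sandwich $e_t(\theta^{\star})$ by its largest summand (up to the factor $N=n(m-1)$, whose contribution $-(\log N)/t$ vanishes), use that $\liminf$ of a pointwise minimum over a finite index set dominates the minimum of the $\liminf$s, and apply \eqref{eqn:asymprate} on the finite intersection of the probability-one events from Theorem \ref{thm:main}. No gaps; the subsequence justification for the finite-minimum/$\liminf$ exchange and the finite-intersection handling of the almost-sure statements are precisely the bookkeeping needed.
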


\begin{corollary}
Suppose the conditions stated in Theorem \ref{thm:minimal} are met. Then, Algorithm \ref{Algo:Algo1} guarantees:
\begin{equation}
    \rho_L(\theta^{\star}) \geq \min_{\theta\neq\theta^{\star}} \min_{i\in\mathcal{V}} \max_{v\in\mathcal{S}_i(\theta^{\star},\theta)}K_v(\theta^{\star},\theta) \hspace{1mm} a.s. 
    \label{eqn:rate_social2}    
\end{equation}
\label{corr_minimal}
\end{corollary}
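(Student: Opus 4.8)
The plan is to derive the social-learning rate bound directly from the per-hypothesis, per-agent rate guarantee of Theorem \ref{thm:minimal}, using only the elementary fact that a finite sum of exponentially vanishing nonnegative sequences decays at the slowest of the individual rates. The starting observation is that, by the second equality in the definition \eqref{eqn:TV_err}, the network error $e_t(\theta^{\star})$ is precisely the finite sum $\sum_{i\in\mathcal{V}}\sum_{\theta\neq\theta^{\star}}\mu_{i,t}(\theta)$ of $N\triangleq n(m-1)$ nonnegative terms, and $\rho_L(\theta^{\star})$ is just the liminf of $-\tfrac{1}{t}\log$ of this sum. Since each $\mu_{i,t}(\theta)$ remains strictly positive for every finite $t$ (the update \eqref{eqn:rule1} is a ratio of strictly positive quantities), the sum is positive and the logarithm is well defined throughout.

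First I would fix the relevant full-measure event. Theorem \ref{thm:minimal} guarantees, for each fixed pair $(i,\theta)$ with $i\in\mathcal{V}$ and $\theta\neq\theta^{\star}$, that $\liminf_{t\to\infty}-\tfrac{1}{t}\log\mu_{i,t}(\theta)\geq\max_{v\in\mathcal{S}_i(\theta^{\star},\theta)}K_v(\theta^{\star},\theta)$ on a set of probability one. Because there are only finitely many such pairs, their intersection is again a probability-one event $\tilde{\Omega}$, and the remainder of the argument proceeds deterministically on every sample path in $\tilde{\Omega}$.

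The core step is to upper-bound the sum by $N$ times its largest summand. On $\tilde{\Omega}$, writing $r_{i,\theta}\triangleq\max_{v\in\mathcal{S}_i(\theta^{\star},\theta)}K_v(\theta^{\star},\theta)$ and using $e_t(\theta^{\star})\leq N\max_{i,\theta}\mu_{i,t}(\theta)$, I obtain $-\tfrac{1}{t}\log e_t(\theta^{\star})\geq -\tfrac{\log N}{t}+\min_{i,\theta}(-\tfrac{1}{t}\log\mu_{i,t}(\theta))$. The vanishing of $\tfrac{\log N}{t}$ as $t\to\infty$, together with the fact that a minimum over a \emph{finite} index set commutes with the liminf, namely $\liminf_{t\to\infty}\min_{i,\theta}(-\tfrac{1}{t}\log\mu_{i,t}(\theta))=\min_{i,\theta}\liminf_{t\to\infty}(-\tfrac{1}{t}\log\mu_{i,t}(\theta))$, converts the single-summand rates into the network rate. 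Invoking the per-term bounds then yields $\rho_L(\theta^{\star})\geq\min_{i,\theta}r_{i,\theta}$, which after reordering the two commuting minima is exactly \eqref{eqn:rate_social2}.

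I do not anticipate any genuine obstacle here; the only points requiring a line of care are (a) that the almost-sure qualifier survives because we intersect finitely many probability-one events, and (b) the interchange of the liminf with a finite minimum, which holds with equality (one direction from $\min_k f_k\leq f_j$ for each $j$, the other from the fact that each $f_j$ is eventually above $\liminf f_j-\epsilon$). These are precisely the elementary observations that justify labeling the corollary trivial, and the identical argument, applied with the rate bound \eqref{eqn:asymprate} of Theorem \ref{thm:main} in place of that of Theorem \ref{thm:minimal}, establishes Corollary \ref{corr_main}.
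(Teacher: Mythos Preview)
Your proposal is correct and supplies exactly the elementary argument the paper omits; the paper states that the proofs of Corollaries~\ref{corr_main} and~\ref{corr_minimal} ``are trivial and hence omitted,'' and your bound-the-sum-by-$N$-times-the-max step together with the finite-min/liminf interchange is the natural way to fill this in. Your justification of the interchange (each $f_j$ eventually exceeds $\liminf f_j-\epsilon$, then take $\max_j T_j$) is valid precisely because the index set is finite, and your remark on positivity of $\mu_{i,t}(\theta)$ is consistent with the inductive reasoning used in the proof of Lemma~\ref{lemma:lower_bound}.
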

\subsection{Discussion of the Results}
\label{sec:discussion}
\noindent \textbf{Comments on Theorem \ref{thm:main}}: Let us compare the rate of learning based on our method to those existing in literature. Under identical assumptions of global identifiability of the observation model, and strong-connectivity (or joint strong-connectivity as in \cite{nedic}) of the underlying communication graph, both linear \cite{GEBjad,jad2} and log-linear \cite{shahinTAC,lalitha1,nedic} opinion pooling lead to an asymptotic rate of rejection of the form $\sum_{i\in\mathcal{V}}\nu_iK_i(\theta^{\star},\theta)$ for each false hypothesis $\theta\in\Theta\setminus\{\theta^{\star}\}$, for each agent $i\in\mathcal{V}$.\footnote{In \cite{nedic}, the consensus weights are chosen to obtain a network-structure independent (albeit network-size dependent) rate of rejection of $\theta$ of the form $\frac{1}{n}\sum_{i\in\mathcal{V}}K_i(\theta^{\star},\theta)$.} Here, $\nu_i$ represents the eigenvector centrality of agent $i\in\mathcal{V}$, which is strictly positive for a strongly-connected graph. 
Thus, referring to equation \eqref{eqn:asymprate}  reveals that the asymptotic rate of rejection of each false hypothesis (and hence, the rate of social learning) resulting from our algorithm (see \eqref{eqn:rate_social1}), is a \textit{strict improvement} over all existing rates - this constitutes a significant contribution of our paper. Furthermore, observe from Corollary \ref{corr_main} that the lower bound on the rate of social learning is \textit{independent of both the size and structure of the network}. A key implication of this result is the fact that as long as the total information content of the network remains the same, the specific manner in which signals are allocated to agents does not impact the long-run learning rate of our approach. In sharp contrast, existing learning rates that depend on the agents' eigenvector centralities may suffer under poor signal allocations; see \cite{jad2} for a  discussion on this topic.

\noindent \textbf{Comments on Theorem \ref{thm:conc}}: At any given time $t$, for some $i\in\mathcal{V}$ and $\theta\neq\theta^{\star}$, let us consider the set of all sample paths where agent $i$'s instantaneous rate of rejection of $\theta$ is lower than its asymptotic lower bound by a constant additive factor of $\epsilon$. Theorem \ref{thm:conc} complements Theorem \ref{thm:main} by telling us that an arbitrarily accurate approximation of the measure of such ``bad" sample paths eventually decays to zero at an exponential rate no smaller than $\epsilon^2/8L^2$ (the approximation is arbitrarily accurate since the set $\Omega'(\delta)$ can be chosen to have measure arbitrarily close to $1$). It is instructive to compare the concentration result of Theorem \ref{thm:conc} with \cite[Theorem 2]{nedic}, \cite[Theorem 2]{lalitha1}, and \cite[Lemma 3]{shahinTAC}. The analogous results in these papers are more elegant relative to ours, since they do not involve a set of the form $\Omega'(\delta)$ that shows up in our analysis. A refinement of Theorem \ref{thm:conc} to obtain a cleaner non-asymptotic result would require a precise characterization of the transient dynamics generated by our learning rule: we reserve investigations along this line as future work.

\noindent \textbf{Comments on Theorem \ref{thm:minimal}}: While Theorem \ref{thm:neccsuff} identifies an \textit{algorithm-independent} necessary condition for ensuring unique identifiability of any realized state at every agent (when the communication graph is time-invariant and agents receive conditionally independent signals),  Theorem \ref{thm:minimal} reveals that such a condition is also sufficient for our proposed learning algorithm to work. We believe that a result of this flavor is missing in the existing literature on distributed hypothesis testing, where strong-connectivity is a standard assumption. The authors in \cite{molavi2013} do relax the strong-connectivity assumption, but require \textit{every} strongly-connected component of $\mathcal{G}$ to be globally identifiable for learning to take place \cite[Proposition 4]{molavi2013}. In contrast, Theorem \ref{thm:minimal} requires only the source components of $\mathcal{G}$ to satisfy the global identifiability requirement. Interestingly, our conclusions in this context align with an analogous result that identifies joint detectability of each source component as the minimal requirement for solving the related problem of distributed state estimation \cite{martins,mitraTAC}. 

The more general network condition in Theorem \ref{thm:minimal} (as opposed to strong-connectivity) comes at the cost of a potential reduction in the rate of social learning, as reflected in Corollary \ref{corr_minimal}. When the underlying graph is strongly-connected, $\mathcal{S}_i(\theta^{\star},\theta)=\mathcal{S}(\theta^{\star},\theta)$. Consequently, the min w.r.t. the agent set $\mathcal{V}$ in equation \eqref{eqn:rate_social2} goes away, and we recover Corollary \ref{corr_main}.

\section{Learning despite Misinformation}
\label{sec:LFRHE}
In this section, we will address the problem of learning the true state of the world despite the presence of certain agents who do not behave as expected and deliberately try to spread misinformation. In order to isolate the challenges introduced by such malicious entities, we will consider a time-invariant communication graph $\mathcal{G}$ for our subsequent discussion; we anticipate that our proposed approach will extend to the time-varying case with suitable modifications. We now describe the model of agent-misbehavior that we consider.\footnote{Different from our setting, the \textit{forceful
}agents in \cite{acemoglu} do not behave arbitrarily and, in fact, update their beliefs (even if infrequently) by interacting with their neighbors; our adversary model makes no such assumptions.}

\textbf{Adversary Model:} We assume that a certain subset of the agents are adversarial, and model their behavior based on the Byzantine fault model \cite{Byz}. Specifically, Byzantine agents possess complete knowledge of the observation model, the network model, the algorithms being used, the information being exchanged, and the true state of the world. Leveraging such information, adversarial agents can behave arbitrarily and in a coordinated manner, and can in particular, send incorrect, potentially inconsistent information to their out-neighbors. In return for allowing such worst-case adversarial behavior and knowledge by the adversaries, we will restrict the number of such adversaries; in particular, we will consider an $f$-local adversarial model, i.e., we assume that there are at most $f$ adversaries in the neighborhood of any non-adversarial agent, where $f\in\mathbb{N}$.  Finally, we emphasize that the non-adversarial agents are unaware of the identities of the adversaries in their neighborhood. As is fairly standard in the distributed fault-tolerant literature \cite{vaidyacons,rescons,dibaji,suopt,sundaramopt,mitraarxiv,usevitch1,broad1}, we only assume that non-adversarial agents know the upper bound $f$ on the number of adversaries in their neighborhood. The adversarial set will be denoted by $\mathcal{A}\subset\mathcal{V}$, and the remaining agents $\mathcal{R}=\mathcal{V}\setminus\mathcal{A}$ will be called the regular agents. 

Our immediate goals are as follows. (i) Devise an algorithm that enables each regular agent to asymptotically identify the true state with probability $1$, despite the presence of an $f$-local Byzantine adversarial set. (ii) Identify conditions on the observation model and the network structure that guarantee correctness of such an algorithm. Prior to addressing these goals, we briefly motivate the need for a novel Byzantine-resilient learning algorithm.

\textbf{Motivation}: A standard way to analyze the impact of adversarial agents while designing resilient distributed consensus-based protocols (for applications in consensus \cite{rescons,vaidyacons}, optimization \cite{suopt,sundaramopt}, hypothesis testing \cite{su1}, and multi-agent rendezvous \cite{seth1}) is to construct an equivalent matrix representation of the linear update rule that involves only the regular agents \cite{vaidyamatrix}. In particular, this requires expressing the iterates of a regular agent as a convex combination of the iterates of its regular neighbors, based on appropriate filtering techniques, and under certain assumptions on the network structure. While this can indeed be achieved efficiently for scalar consensus problems, for problems requiring consensus on vectors (like the belief vectors in our setting), such an approach typically requires the computation of sets known as  \textit{Tverberg partitions}. However, there is no known algorithm that can compute an exact Tverberg partition in polynomial time for a general $d$-dimensional finite point set \cite{tver}. Consequently, since the filtering approach developed in \cite{su1} requires each regular agent to compute a Tverberg partition at every iteration, the resulting computations are forbiddingly high. The authors in \cite{su1} do briefly discuss an alternate pairwise learning rule that requires agents to perform scalar consensus on relative confidence levels (instead of beliefs) of one hypothesis over another. Under such a rule, for each regular agent, its relative confidence on the true state over every false hypothesis approaches infinity - a condition that is difficult to verify in practice. Moreover, the pairwise learning rule in \cite{su1} requires each agent to maintain and update at each time-step a vector of dimension $O(m^2)$. In contrast, we propose a simple, light-weight Byzantine-resilient learning rule that avoids the computation of Tverberg partitions, and requires agents to update two $m$-dimensional belief vectors.
\begin{algorithm}[t]
\caption{Belief update rule for each $i\in\mathcal{R}$}
\label{Algo:Algo2}
\begin{algorithmic}[1]
\State \textbf{Initialization:}  $\mu_{i,0}(\theta)>0$,  $\pi_{i,0}(\theta)>0$, $\forall \theta\in\Theta$, and  $\sum_{\theta\in\Theta}\mu_{i,0}(\theta)=1$, $\sum_{\theta\in\Theta}\pi_{i,0}(\theta)=1$
 
\State Transmit  $\boldsymbol{\mu}_{i,0}$ to  out-neighbors
  
\For{$t+1\in\mathbb{N}_{+}$}
\For{$\theta\in\Theta$}
   
\State  Update local belief on $\theta$ as per \eqref{eqn:Bayes}

\If{$|\mathcal{N}_i| \geq (2f+1)$}
\State Sort   $\mu_{j,t}(\theta),j\in\mathcal{N}_i$ from highest to lowest, and reject the highest $f$ and the lowest $f$ of such beliefs.
             
\State Let $\mathcal{M}^{\theta}_{i,t}$ be the set of agents whose beliefs are not rejected in the previous step. Update $\mu_{i,t+1}(\theta)$ as 
\begin{equation}
\mu_{i,t+1}(\theta)=\frac{\min\{\{\mu_{j,t}(\theta)\}_{j\in\mathcal{M}^{\theta}_{i,t}},\pi_{i,t+1}(\theta)\}}{\sum\limits_{p=1}^{m}\min\{\{\mu_{j,t}(\theta_p)\}_{j\in\mathcal{M}^{\theta_p}_{i,t}},\pi_{i,t+1}(\theta_p)\}}
\label{eqn:rule2}
\end{equation}
\Else
\State Update $\mu_{i,t+1}(\theta)$ as
\begin{equation}
\mu_{i,t+1}(\theta)=\pi_{i,t+1}(\theta)
\label{eqn:rule3}
\end{equation}
\EndIf
\EndFor
\State Transmit $\boldsymbol{\mu}_{i,t+1}$ to out-neighbors
\EndFor
\end{algorithmic}
\end{algorithm}

\subsection{A Byzantine-Resilient Distributed Learning Rule}
In this section, we develop an easy to implement and computationally-efficient extension of Algorithm \ref{Algo:Algo1} that guarantees learning despite the presence of Byzantine adversaries. We call it the Local-Filtering based Resilient Hypothesis Elimination (LFRHE) algorithm (Algorithm \ref{Algo:Algo2}). Like Algorithm \ref{Algo:Algo1}, the LFRHE algorithm requires every regular agent $i$
to maintain and update (at every time-step $t$) a local belief vector $\boldsymbol{\pi}_{i,t}$, and an actual belief vector $\boldsymbol{\mu}_{i,t}$. While $\boldsymbol{\pi}_{i,t}$ is updated as before via \eqref{eqn:Bayes}, the update of $\boldsymbol{\mu}_{i,t}$ is the key feature of Algorithm \ref{Algo:Algo2}. To update $\mu_{i,t+1}(\theta)$, agent $i\in\mathcal{R}$ first checks whether it has at least $2f+1$ neighbors. If it does, then it rejects the highest $f$ and the lowest $f$ neighboring beliefs $\mu_{j,t}(\theta),j \in \mathcal{N}_i$ (line 7 in Algo. \ref{Algo:Algo2}), and employs a min-rule as before, but using only the remaining beliefs (line 8 in Algo. \ref{Algo:Algo2}). Thus, agent $i$
filters out the most extreme neighboring beliefs on each hypothesis, and retains only the moderate ones to update its own actual belief. If agent $i$ has strictly fewer than $2f+1$ neighbors, then it decides against using neighboring information and, instead, updates its actual belief vector to be equal to its local belief vector (line 10 in Algo. \ref{Algo:Algo2}).

To state our main result concerning the correctness of Algorithm \ref{Algo:Algo2}, we require the following definitions. 
\begin{definition}($r$-\textbf{reachable set}) \cite{rescons} For a graph $\mathcal{G}=(\mathcal{V,E})$, a set $\mathcal{C} \subseteq \mathcal{V}$, and an integer $r \in \mathbb{N}_{+}$, $\mathcal{C}$ is an \textit{$r$-reachable set} if there exists an $i \in \mathcal{C}$ such that $|\mathcal{N}_i \setminus \mathcal{C}| \geq r$.
\label{defn:rreachable}
\end{definition}
\begin{definition}(\textbf{strongly} $r$-\textbf{robust graph} \textit{w.r.t.} $\mathcal{S}(\theta_p,\theta_q)$) For $r \in \mathbb{N}_{+}$ and $\theta_p,\theta_q\in\Theta$, a graph $\mathcal{G}=(\mathcal{V,E})$ is \textit{strongly $r$-robust w.r.t. the set of source agents $\mathcal{S}(\theta_p,\theta_q)$}, if for every non-empty subset $\mathcal{C} \subseteq \mathcal{V}\setminus\mathcal{S}(\theta_p,\theta_q)$, $\mathcal{C}$ is $r$-reachable.
\label{defn:strongrobust}
\end{definition}
\begin{theorem}
Suppose that for every pair of hypotheses $\theta_p,\theta_q\in\Theta$, the graph $\mathcal{G}$ is strongly $(2f+1)$-robust w.r.t. the  source set $\mathcal{S}(\theta_p,\theta_q)$.
Then, Algorithm \ref{Algo:Algo2} guarantees the following despite the actions of any $f$-local set of Byzantine adversaries.
\begin{itemize}
    \item \textbf{(Consistency)}: For each agent $i\in\mathcal{R}$,  $\mu_{i,t}(\theta^{\star}) \rightarrow 1$ a.s.
    \item \textbf{(Asymptotic Rate of Rejection of False Hypotheses)}: Consider any false hypothesis $\theta\in\Theta\setminus\{\theta^{\star}\}$. Then, the following holds for each agent $i\in\mathcal{R}$.
    \begin{equation}
        \liminf_{t\to\infty}-\frac{\log\mu_{i,t}(\theta)}{t} \geq \min_{v\in\mathcal{S}(\theta^{\star},\theta)\cap\mathcal{R}}K_v(\theta^{\star},\theta) \hspace{1mm} a.s.
        \label{eqn:rate_Byz}
    \end{equation}
\end{itemize}
\label{thm:Byz} 
\end{theorem}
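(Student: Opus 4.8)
The plan is to peel off an almost-sure event on which the purely-local Bayesian beliefs are well behaved, and then argue deterministically on that event using the structure of Algorithm \ref{Algo:Algo2} and the strong-robustness hypothesis — no probabilistic input beyond the strong law of large numbers is needed. Concretely, let $\Omega_0$ be the almost-sure event on which, for every regular agent $i$: (a) $\pi_{i,t}(\theta^{\star})$ converges to a strictly positive limit, so that $\psi:=\inf_t\min_{i\in\mathcal{R}}\pi_{i,t}(\theta^{\star})>0$; and (b) for every $\theta\neq\theta^{\star}$ and every $v\in\mathcal{S}(\theta^{\star},\theta)$, $\liminf_{t\to\infty}-\tfrac1t\log\pi_{v,t}(\theta)\geq K_v(\theta^{\star},\theta)$ — both obtained exactly as in the proof of Theorem \ref{thm:main}. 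I would also record one structural fact at the outset: since $\mathcal{A}$ is $f$-local and $\mathcal{G}$ is strongly $(2f+1)$-robust w.r.t. each source set, $\mathcal{S}(\theta^{\star},\theta)\cap\mathcal{R}\neq\emptyset$ (otherwise $\mathcal{C}=\mathcal{R}$ would be a non-empty subset of $\mathcal{V}\setminus\mathcal{S}(\theta^{\star},\theta)$, hence $(2f+1)$-reachable, forcing some regular agent to have $2f+1$ neighbours in $\mathcal{A}$ — impossible). Thus $R^{\star}:=\min_{v\in\mathcal{S}(\theta^{\star},\theta)\cap\mathcal{R}}K_v(\theta^{\star},\theta)$ is well defined and strictly positive.

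The technical core is two elementary facts about the discard step (lines 7–8). Fix a regular $i$ with $|\mathcal{N}_i|\geq 2f+1$, a hypothesis $\theta$, a time $t$, and let $\mathcal{M}^{\theta}_{i,t}$ be the retained set. (i) Because exactly the $f$ smallest neighbouring values are discarded and at most $f$ of $i$'s neighbours are adversarial, $\min_{j\in\mathcal{M}^{\theta}_{i,t}}\mu_{j,t}(\theta)\geq\min_{j\in\mathcal{R}\cap\mathcal{N}_i}\mu_{j,t}(\theta)$. (ii) If $G\subseteq\mathcal{R}\cap\mathcal{N}_i$ with $|G|\geq f+1$, then since only $f$ values are discarded from the bottom, some element of $G$ survives that cut, which gives $\min_{j\in\mathcal{M}^{\theta}_{i,t}}\mu_{j,t}(\theta)\leq\max_{g\in G}\mu_{g,t}(\theta)$. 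Next I would neutralise the normalization: writing $\bar\mu_{i,t+1}(\theta)$ for the numerator in \eqref{eqn:rule2}, the normalizer $Z_{i,t+1}=\sum_p\bar\mu_{i,t+1}(\theta_p)$ satisfies $Z_{i,t+1}\leq\sum_p\pi_{i,t+1}(\theta_p)=1$, so $\mu_{i,t+1}(\theta)\geq\bar\mu_{i,t+1}(\theta)$; feeding in (i) and \eqref{eqn:rule3}, the quantity $\phi_t:=\min_{i\in\mathcal{R}}\mu_{i,t}(\theta^{\star})$ satisfies $\phi_{t+1}\geq\min\{\phi_t,\psi\}$, hence $\phi_t\geq c:=\min\{\phi_0,\psi\}>0$ for all $t$ on $\Omega_0$. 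Therefore $Z_{i,t+1}\geq\bar\mu_{i,t+1}(\theta^{\star})\geq c$ for every regular $i$ and $t$ (the $|\mathcal{N}_i|<2f+1$ case is immediate from \eqref{eqn:rule3}), so the normalization only ever costs a fixed factor $1/c$, a bound used throughout what follows.

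With these ingredients, the rate bound \eqref{eqn:rate_Byz} follows by a robustness-driven contagion argument run entirely on $\Omega_0$. Call a regular agent $i$ \emph{good} (for the fixed $\theta$) if $\liminf_{t\to\infty}-\tfrac1t\log\mu_{i,t}(\theta)\geq R^{\star}$. Every $v\in\mathcal{S}(\theta^{\star},\theta)\cap\mathcal{R}$ is good: from \eqref{eqn:rule2}/\eqref{eqn:rule3} and $Z\geq c$ one gets $\mu_{v,t+1}(\theta)\leq\pi_{v,t+1}(\theta)/c$, so property (b) yields $\liminf-\tfrac1t\log\mu_{v,t}(\theta)\geq K_v(\theta^{\star},\theta)\geq R^{\star}$. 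Now suppose the set $\mathcal{C}$ of regular agents that are \emph{not} good were non-empty; then $\mathcal{C}\subseteq\mathcal{V}\setminus\mathcal{S}(\theta^{\star},\theta)$, so strong $(2f+1)$-robustness gives an $i^{\star}\in\mathcal{C}$ with $|\mathcal{N}_{i^{\star}}\setminus\mathcal{C}|\geq 2f+1$; discarding the at most $f$ adversaries there, $i^{\star}$ has a set $G$ of at least $f+1$ good regular neighbours (and $|\mathcal{N}_{i^{\star}}|\geq 2f+1$). By (ii) and the normalization bound, $\mu_{i^{\star},t+1}(\theta)\leq\tfrac1c\max_{g\in G}\mu_{g,t}(\theta)$; since $G$ is finite and each $g\in G$ is good, $\liminf_t\min_{g\in G}\bigl(-\tfrac1t\log\mu_{g,t}(\theta)\bigr)\geq R^{\star}$, so $i^{\star}$ is good — a contradiction. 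Hence every regular agent is good, i.e. \eqref{eqn:rate_Byz} holds; and since $R^{\star}>0$ forces $\mu_{i,t}(\theta)\to0$ for every $\theta\neq\theta^{\star}$, we get $\mu_{i,t}(\theta^{\star})=1-\sum_{\theta\neq\theta^{\star}}\mu_{i,t}(\theta)\to1$ for each $i\in\mathcal{R}$, which is Consistency.

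I expect the main obstacle to be the contagion step — not its arithmetic but its logical hygiene. Since ``good'' is a property of the whole trajectory and the $\boldsymbol{\mu}$-recursion couples agents through \emph{normalized} beliefs, one must (a) secure the uniform-in-$t$ bound $Z_{i,t}\geq c$ \emph{before} attempting propagation, and (b) phrase the spread as the above proof-by-contradiction (equivalently, a layered induction fanning out from the source agents along the $(2f+1)$-robustness structure) rather than a naive time-recursion, which would be circular. A lesser but necessary point is the interchange $\liminf_t\bigl(-\tfrac1t\log\max_{g\in G}\mu_{g,t}(\theta)\bigr)=\liminf_t\min_{g\in G}\bigl(-\tfrac1t\log\mu_{g,t}(\theta)\bigr)\geq R^{\star}$, valid because $G$ is finite. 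Verifying facts (i)–(ii) so that they cleanly isolate the unknown adversaries from the deterministic discard rule is routine, but it is exactly where the $2f+1$ and $f+1$ thresholds earn their keep.
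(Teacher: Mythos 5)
Your proof is correct, and its first half matches the paper's: the almost-sure event carrying the local-belief decay rates from Lemma \ref{lemma:Bayes}, a uniform positive lower bound on every regular agent's belief on $\theta^{\star}$ (your uniform-in-$t$ bound $\phi_t\geq c=\min\{\phi_0,\psi\}$ is a slightly cleaner variant of the paper's $\eta(\omega)$-after-$t'(\omega)$ bound), and your trimming facts (i)--(ii), which are exactly the paper's convex-hull property \eqref{eqn:LHRHE_property} together with its treatment of the sets $\mathcal{U}^{\theta}_{i,t},\mathcal{J}^{\theta}_{i,t}$. Where you genuinely diverge is the propagation step. The paper builds explicit layers $\mathcal{L}_r(\theta^{\star},\theta)$ of agents having at least $2f+1$ neighbours in earlier layers and proves, by induction on $r$, pathwise bounds of the form $\mu_{i,t}(\theta)<(C_1(\omega))^{r+1}(C_2(\theta,\epsilon))^{r}e^{-(\tilde K(\theta^{\star},\theta)-\epsilon)t}$ for $t\geq\tilde t_2+(r+1)$; you instead run a minimal-counterexample argument: assume the set $\mathcal{C}$ of non-good regular agents is non-empty, invoke strong $(2f+1)$-robustness once to extract $i^{\star}\in\mathcal{C}$ with $2f+1$ neighbours outside $\mathcal{C}$ (hence at least $f+1$ good regular ones, and automatically $|\mathcal{N}_{i^{\star}}|\geq 2f+1$ so the trimming update applies), and contradict via the finite-set interchange of $\liminf$ and $\min$. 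Both routes are sound. The paper's induction buys explicit constants and explicit times, i.e., a quantitative transient estimate along each sample path; your extremal argument is shorter, is purely asymptotic (which suffices for the theorem as stated), subsumes the paper's separate Case 1 for regular agents with fewer than $2f+1$ neighbours, and makes explicit a point the paper leaves implicit, namely that $\mathcal{S}(\theta^{\star},\theta)\cap\mathcal{R}\neq\emptyset$, without which the right-hand side of \eqref{eqn:rate_Byz} would be vacuous.

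One justification deserves a touch-up: in fact (ii), the element of $G$ that survives the bottom-$f$ cut may still be removed by the top-$f$ cut, so it need not lie in $\mathcal{M}^{\theta}_{i,t}$. The inequality $\min_{j\in\mathcal{M}^{\theta}_{i,t}}\mu_{j,t}(\theta)\leq\max_{g\in G}\mu_{g,t}(\theta)$ still holds, because every retained value is at most any top-discarded value — this is precisely the case the paper handles by picking $u\in\mathcal{U}^{\theta}_{i,t}\cap\{\mathcal{S}(\theta^{\star},\theta)\cap\mathcal{R}\}$ — but your stated reason only covers the case in which the surviving good neighbour is actually retained.
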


\begin{proof} See Appendix \ref{sec:App3}. \end{proof} 

\begin{remark} 
For any pair $\theta_p,\theta_q\in\Theta$, notice that the strong-robustness condition in Theorem \ref{thm:Byz} (together with Def. \ref{defn:strongrobust}) requires $|\mathcal{S}(\theta_p,\theta_q)|\geq(2f+1)$, if $\mathcal{V}\setminus\mathcal{S}(\theta_p,\theta_q)$ is non-empty. In particular, it blends requirements on the signal structures of the agents with those on the communication graph. To gain intuition about this condition, suppose $\Theta=\{\theta_1,\theta_2\}$, and consider an agent $i\in\mathcal{V}\setminus\mathcal{S}(\theta_1,\theta_2)$. To enable $i$ to learn the truth despite potential adversaries in its neighborhood, one requires (i) redundancy in the signal structures of the agents, and (ii) redundancy in the network structure to ensure reliable information flow from $\mathcal{S}(\theta_1,\theta_2)$ to agent $i$. These requirements are encapsulated by Theorem \ref{thm:Byz}. For a fixed source set $\mathcal{S}(\theta_p,\theta_q)$, checking whether $\mathcal{G}$ is strongly $(2f+1)$-robust w.r.t. $\mathcal{S}(\theta_p,\theta_q)$ can be done in polynomial time by drawing connections to the process of bootstrap percolation on networks \cite[Proposition 5]{mitraarxiv}. Since the source sets for each pair $\theta_p,\theta_q\in\Theta$ can also be computed in polynomial time via a simple inspection of the agents' signal structures, it follows that the strong-robustness condition in  Theorem \ref{thm:Byz} can be checked in polynomial time.
\end{remark}

Leveraging Theorem \ref{thm:Byz}, we can characterize the rate of decay of the collective uncertainty of the regular agents regarding the true state. To do so, we employ the following modification of the metric \eqref{eqn:TV_err}:
\begin{equation}
    e^{\mathcal{R}}_t(\theta^{\star})\triangleq\frac{1}{2}\sum\limits_{i\in\mathcal{R}}{\Vert\boldsymbol{\mu}_{i,t}(\cdot)-\mathbf{1}_{\theta^{\star}}(\cdot)\Vert}_{1}=\sum\limits_{i\in\mathcal{R}}\sum\limits_{\theta\neq\theta^{\star}}\mu_{i,t}(\theta).
    \label{eqn:TV_err_Byz}
\end{equation}
Note that this metric only considers the beliefs of the regular agents, as the Byzantine agents can update their beliefs however they wish. 
With $\theta^{\star}$ as the true state, we define the rate of social learning in the presence of Byzantine adversaries as:
\begin{equation}
    \rho^{\mathcal{R}}_L(\theta^{\star})\triangleq \liminf_{t\to\infty} -\frac{1}{t}\log e^{\mathcal{R}}_t(\theta^{\star}).
\end{equation}
We have the following immediate corollary of Theorem \ref{thm:Byz}.
\begin{corollary}
Suppose the conditions stated in Theorem \ref{thm:Byz} are met. Then, Algorithm \ref{Algo:Algo2} guarantees:
\begin{equation}
    \rho^{\mathcal{R}}_L(\theta^{\star}) \geq \min_{\theta\neq\theta^{\star}} \min_{v\in\mathcal{S}(\theta^{\star},\theta)\cap\mathcal{R}}K_v(\theta^{\star},\theta) \hspace{1mm} a.s. 
\label{eqn:corr_Byz}
\end{equation}
\end{corollary}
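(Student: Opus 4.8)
The plan is to reduce the social-learning rate $\rho^{\mathcal{R}}_L(\theta^{\star})$ directly to the per-hypothesis decay rates already furnished by Theorem \ref{thm:Byz}, exploiting the fact that the error metric $e^{\mathcal{R}}_t(\theta^{\star})$ is a \emph{finite} sum of the quantities $\mu_{i,t}(\theta)$. The guiding principle is the elementary observation that a finite sum of nonnegative sequences decays exponentially at precisely the rate of its slowest-decaying summand, so no new probabilistic input beyond Theorem \ref{thm:Byz} is needed.

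First I would set the target rate $\bar{K} \triangleq \min_{\theta \neq \theta^{\star}} \min_{v \in \mathcal{S}(\theta^{\star},\theta) \cap \mathcal{R}} K_v(\theta^{\star}, \theta)$ and invoke Theorem \ref{thm:Byz}, which guarantees that for every regular agent $i \in \mathcal{R}$ and every false hypothesis $\theta \neq \theta^{\star}$,
\[
\liminf_{t\to\infty} -\frac{1}{t}\log \mu_{i,t}(\theta) \geq \min_{v \in \mathcal{S}(\theta^{\star},\theta)\cap\mathcal{R}} K_v(\theta^{\star},\theta) \geq \bar{K}
\]
almost surely. Since $\mathcal{R}$ and $\Theta$ are finite, there are only finitely many pairs $(i,\theta)$; each bound holds on a probability-one event, and a finite intersection of probability-one events is again of probability one, so all these bounds may be assumed to hold simultaneously a.s.

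Next I would bound the sum by a multiple of its largest term. Writing $N = |\mathcal{R}|(m-1)$ for the number of summands in $e^{\mathcal{R}}_t(\theta^{\star}) = \sum_{i \in \mathcal{R}} \sum_{\theta \neq \theta^{\star}} \mu_{i,t}(\theta)$, we have $e^{\mathcal{R}}_t(\theta^{\star}) \leq N \max_{i \in \mathcal{R},\, \theta \neq \theta^{\star}} \mu_{i,t}(\theta)$, hence
\[
-\frac{1}{t}\log e^{\mathcal{R}}_t(\theta^{\star}) \geq -\frac{\log N}{t} + \min_{i \in \mathcal{R},\, \theta \neq \theta^{\star}} \left(-\frac{1}{t}\log \mu_{i,t}(\theta)\right).
\]
Taking $\liminf$ of both sides, using $\log N / t \to 0$ together with superadditivity of $\liminf$, reduces the claim to lower-bounding $\liminf_{t\to\infty}\min_{i,\theta}(-t^{-1}\log\mu_{i,t}(\theta))$ by $\bar{K}$.

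The only point requiring care — and the step most easily glossed over — is interchanging the $\liminf$ with the minimum over the finite index set $\{(i,\theta)\}$, which I would handle by an $\epsilon$-argument rather than by naively swapping the operators. Fixing $\epsilon > 0$, the per-pair bound from Theorem \ref{thm:Byz} supplies for each $(i,\theta)$ a time $T_{i,\theta}$ beyond which $-t^{-1}\log\mu_{i,t}(\theta) \geq \bar{K} - \epsilon$; taking $T = \max_{i,\theta} T_{i,\theta}$, which is finite precisely because there are finitely many pairs, forces $\min_{i,\theta}(-t^{-1}\log\mu_{i,t}(\theta)) \geq \bar{K} - \epsilon$ for all $t \geq T$, so its $\liminf$ is at least $\bar{K} - \epsilon$; letting $\epsilon \downarrow 0$ yields $\bar{K}$. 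Combining this with the preceding display gives $\rho^{\mathcal{R}}_L(\theta^{\star}) \geq \bar{K}$ almost surely, which is exactly \eqref{eqn:corr_Byz}. I do not anticipate any genuine obstacle: the entire argument is a bookkeeping exercise layered on top of Theorem \ref{thm:Byz}, and the finiteness of $\mathcal{R}$ and $\Theta$ is what simultaneously validates both the finite-intersection step (preserving the almost-sure guarantee) and the $\liminf$–$\min$ interchange.
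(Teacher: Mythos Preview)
Your proposal is correct and is precisely the argument the paper has in mind: the paper states this result as an ``immediate corollary of Theorem~\ref{thm:Byz}'' with no proof given (analogous to Corollaries~\ref{corr_main} and~\ref{corr_minimal}, whose proofs are explicitly declared trivial and omitted). Your careful treatment of the finite-intersection of almost-sure events and the $\epsilon$-argument for swapping $\liminf$ with a finite minimum simply makes explicit the routine bookkeeping that the authors left implicit.
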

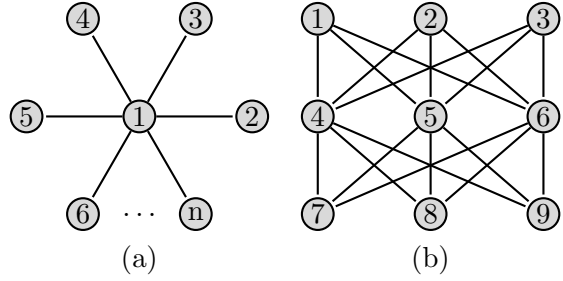
\begin{figure}[t]
\begin{center}
\begin{tabular}{c c}
\begin{tikzpicture}
[->,shorten >=1pt,scale=.75,inner sep=1pt, minimum size=12pt, auto=center, node distance=3cm,
  thick, node/.style={circle, draw=black, thick},]
\node [circle, draw, fill=gray!30](n1) at (0,0)  (1)  {1};
\node [circle, draw, fill=gray!30](n2) at (2,0)   (2)  {2};
\node [circle, draw, fill=gray!30](n3) at (1,1.7321)   (3)  {3};
\node [circle, draw, fill=gray!30](n4) at (-1,1.7321)  (4)  {4};
\node [circle, draw, fill=gray!30](n5) at (-2,0)   (5)  {5};   
\node [circle, draw, fill=gray!30](n6) at (-1,-1.7321)   (6)  {6}; 
\node [circle, draw, fill=gray!30](n7) at (1,-1.7321)   (n)  {n}; 
\node [black] at (0,-1.732) () {$\ldots$};
\draw [-]
(1) -- (2); 
\draw [-]
(1) -- (3); 
\draw [-]
(1) -- (4); 
\draw [-]
(1) -- (5); 
\draw [-]
(1) -- (6); 
\draw [-]
(1) -- (n); 
\end{tikzpicture}
&
\begin{tikzpicture}
[->,shorten >=1pt,scale=.75,inner sep=1pt, minimum size=12pt, auto=center, node distance=3cm,
  thick, node/.style={circle, draw=black, thick},]
\node [circle, draw, fill=gray!30](n1) at (-2,1.732)  (1)  {1};
\node [circle, draw, fill=gray!30](n2) at (0,1.732)   (2)  {2};
\node [circle, draw, fill=gray!30](n3) at (2,1.732)   (3)  {3};
\node [circle, draw, fill=gray!30](n4) at (-2,0)  (4)  {4};
\node [circle, draw, fill=gray!30](n5) at (0,0)   (5)  {5};
\node [circle, draw, fill=gray!30] (n6) at (2,0)   (6)  {6};
\node [circle, draw, fill=gray!30](n7) at (-2,-1.732)   (7)  {7};
\node [circle, draw, fill=gray!30] (n8) at (0,-1.732)  (8)  {8};
\node [circle, draw, fill=gray!30] (n9) at (2,-1.732)  (9)  {9};
\draw[-]
 (1) -- (4);
\draw[-]
 (1) -- (5);
 \draw[-]
 (1) -- (6);
 \draw[-]
 (2) -- (4);
 \draw[-]
 (2) -- (5);
 \draw[-]
 (2) -- (6);
 \draw[-]
 (3) -- (4);
 \draw[-]
 (3) -- (5);
 \draw[-]
 (3) -- (6);
 \draw[-]
 (4) -- (7);
\draw[-]
 (4) -- (8);
 \draw[-]
 (4) -- (9);
 \draw[-]
 (5) -- (7);
 \draw[-]
 (5) -- (8);
 \draw[-]
 (5) -- (9);
 \draw[-]
 (6) -- (7);
 \draw[-]
 (6) -- (8);
 \draw[-]
 (6) -- (9);
\end{tikzpicture}\\
(a) & (b)
\end{tabular}
\end{center}
\caption{Figures \ref{fig:networks}(a) and \ref{fig:networks}(b) represent the network models for simulation examples 1 and 2, respectively.}
\label{fig:networks}
\end{figure}
\begin{figure}[t]
\begin{tabular}{c c}
\includegraphics[scale=0.35]{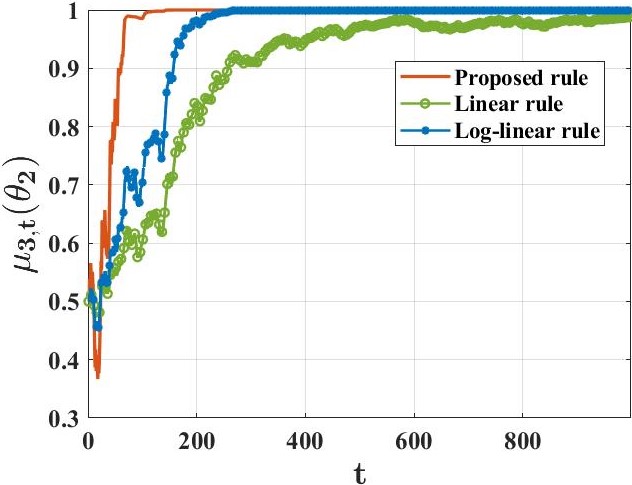}&\hspace{-3mm}\includegraphics[scale=0.35]{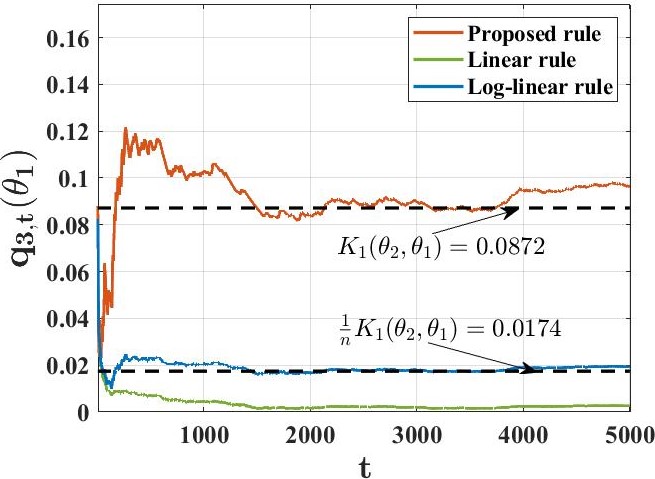}\\
(a)&(b)
\end{tabular}
\caption{Consider the setup of simulation  example 1 with $n=5$ agents. Fig. \ref{fig:case1}(a) depicts the evolution of agent 3's belief on the true state $\theta_2$, and Fig. \ref{fig:case1}(b) depicts the evolution of the instantaneous rate of rejection of $\theta_1$ for agent 3, namely $q_{3,t}(\theta_1)=-\log\mu_{3,t}(\theta_1)/{t}$.}
\label{fig:case1}
\end{figure}
\begin{figure}[t]
\begin{tabular}{c c}
\includegraphics[scale=0.35]{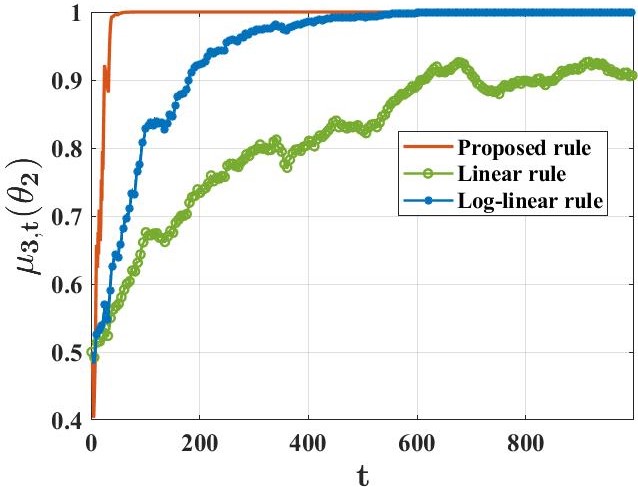}&\hspace{-3mm}\includegraphics[scale=0.35]{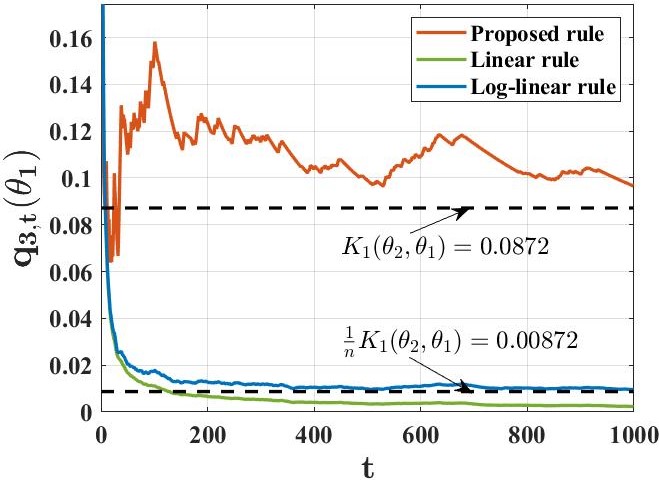}\\
(a)&(b)
\end{tabular}
\caption{Consider the setup of simulation example 1 with $n=10$ agents. Fig. \ref{fig:case2} illustrates the dilution in the rates of social learning for the linear and log-linear rules with an increase in the number of uninformative agents. Figures \ref{fig:case2}(a) and \ref{fig:case2}(b) are analogous to those in Figure \ref{fig:case1}.}
\label{fig:case2}
\end{figure}
\begin{figure}[hbt!]
\begin{tabular}{c c}
\includegraphics[scale=0.35]{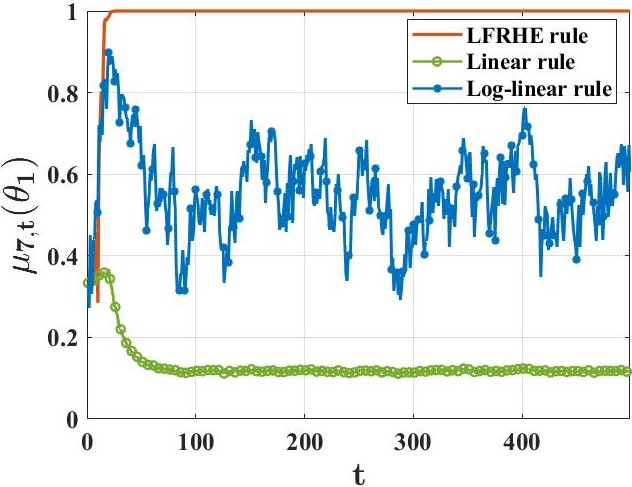}&\hspace{-3mm}\includegraphics[scale=0.35]{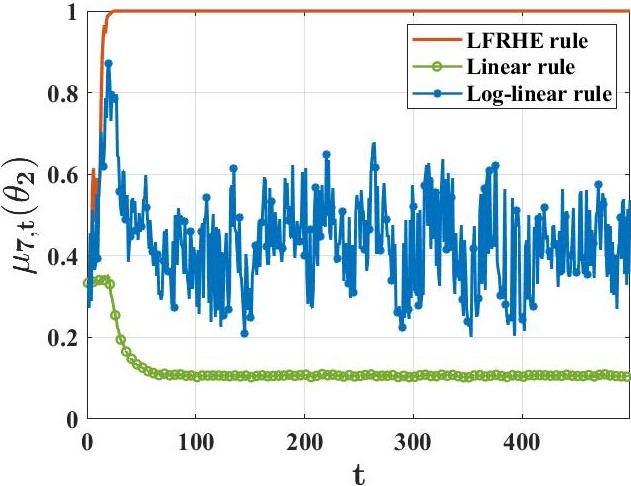}\\
(a)&(b)
\end{tabular}
\caption{Consider the setup of simulation example 2, where agent 5 acts as an adversary. Figures \ref{fig:case2}(a) and \ref{fig:case2}(b) depict the evolution of agent 7's belief on the true state, when $\theta^{\star}=\theta_1$, and $\theta^{\star}=\theta_2$, respectively.}
\label{fig:Byz}
\end{figure}
\section{Simulations}
\noindent \textbf{Example 1 (Impact of Network Size on Rate of Convergence):} For our first simulation study, we consider a binary hypothesis testing problem, i.e., $\Theta=\{\theta_1,\theta_2\}$, where the signal space for each agent is identical and comprises of signals $w_1$ and $w_2$. The (time-invariant) undirected network for this example is depicted in Figure \ref{fig:networks}(a). The likelihood models of the agents are as follows:  $l_1(w_1|\theta_1)=0.7, l_1(w_1|\theta_2)=0.5$, and $l_i(w_1|\theta_1)=l_i(w_1|\theta_2)=0.5, \forall i\in\mathcal{V}\setminus\{1\}$, i.e., agent 1 is the only informative agent. In order to compare the performance of Algorithm \ref{Algo:Algo1} to the linear and log-linear belief update rules in \cite{GEBjad} and \cite{nedic}, we implement the latter assuming consensus weights are assigned based on the lazy Metropolis scheme (see \cite{nedic} for details). Based on this weight assignment, it is easy to verify that the eigenvector centrality of each agent is $1/n$. All agents start out with uniform priors. With $\theta^{\star}=\theta_2$, and $n=5$, Figure \ref{fig:case1} illustrates the performance of the three algorithms w.r.t. agent 3. In particular, Figure \ref{fig:case1}(a) reveals that based on our approach, agent 3's belief on the true state $\theta_2$ converges to 1 faster than the other algorithms. Figure \ref{fig:case1}(b) makes this observation  precise by plotting the instantaneous rate of rejection of $\theta_1$ for agent 3, namely $q_{3,t}(\theta_1)=-\log\mu_{3,t}(\theta_1)/{t}$. Consistent with the respective theoretical findings, $q_{3,t}(\theta_1)$ is eventually lower-bounded by $K_1(\theta_2,\theta_1)$ for our algorithm (see Theorem \ref{thm:main}), approaches $K_1(\theta_2,\theta_1)/n$ for the log-linear rule in  \cite{nedic}, and is eventually upper-bounded by $K_1(\theta_2,\theta_1)/n$ for the linear rule in \cite{GEBjad}. Similar conclusions hold for the other agents. 

Suppose we now double the number of agents in the network. Agent 1 continues to remain the only informative agent. Figure \ref{fig:case2} compares the performances of the three algorithms for this case. Notably, the convergence rate for our approach remains unaffected, whereas that for the linear and log-linear rules gets diluted. This observation can be attributed to the fact that while the rate provided by our algorithm is both network-structure and network-size independent for strongly-connected networks (see Section \ref{sec:discussion}), the rates of the linear and log-linear rules depend crucially on the eigenvector centralities of the agents, which, in this case, correspond to $1/n$. Thus, the gap between the performance of our algorithm, and that of the linear and log-linear update rules (as measured by convergence rates), becomes more pronounced as the number of uninformative agents increase (i.e., as $n$ increases, but the total information content of the network remains the same). 

\noindent \textbf{Example 2 (Impact of Adversaries):} While the previous example highlighted the benefits of Algorithm \ref{Algo:Algo1}, we now focus on an example that demonstrates the resilience of its variant, namely the LFRHE algorithm (Algorithm \ref{Algo:Algo2}), to the presence of Byzantine adversaries. To this end, consider the undirected network in Figure \ref{fig:networks}(b). For this example, $\Theta=\{\theta_1,\theta_2,\theta_3\}$, and $\mathcal{S}_i=\{w_1,w_2\}, \forall i \in \mathcal{V}$. Suppose the agent likelihood models are given by $l_i(w_1|\theta_1)=3/4, l_i(w_1|\theta_2)=l_i(w_1|\theta_3)=1/3, \forall i\in\{1,2,3\}$, $l_i(w_1|\theta_1)=l_i(w_1|\theta_2)=2/5, l_i(w_1|\theta_3)=1/7, \forall i\in\{4,5,6\}$, and $l_i(w_1|\theta_1)=l_i(w_1|\theta_2)=1/2, l_i(w_1|\theta_3)=5/6, \forall i\in\{7,8,9\}$. Suppose $f=1$ and agent 5 is the only adversarial agent. It is easy to see that condition (i) in Theorem \ref{thm:Byz} is met. We will compare the performance of Algorithm \ref{Algo:Algo2} with the linear rule in \cite{GEBjad}, and the log-linear rule in \cite{nedic}. For implementing the latter, we again assign consensus weights based on the lazy Metropolis scheme. All agents start out with uniform priors. The adversary, agent 5, maintains a belief of $0.1$ on the true state, and $0.45$ on each of the false hypotheses, for all $t\geq 20$. Figures \ref{fig:Byz}(a) and  \ref{fig:Byz}(b) illustrate the repercussions of this action on agent 7, when $\theta^{\star}=\theta_1$ and $\theta^{\star}=\theta_2$, respectively: while the linear and log-linear rules fail to recover from the attack, Algorithm \ref{Algo:Algo2} enables agent 7 to infer the truth. Similar conclusions hold for the other regular agents.
\section{Conclusion}
We proposed and analyzed a novel algorithm for addressing the problem of distributed hypothesis testing. The key distinguishing feature of our learning algorithm is that it does not employ any linear consensus-based data aggregation protocol. Instead, it relies on a ``min-rule" to spread beliefs through the network. Under mild assumptions of global identifiability and joint strong-connectivity, we established consistency of our learning rule. In particular, we showed that the rate of learning resulting from our approach strictly improves upon all existing rates. For static networks, we established consistency of our algorithm under minimal requirements on the observation model and the network structure. Finally, we proposed a simple and computationally-efficient version of our learning rule that accounts for worst-case adversarial behavior on the part of certain agents in the network. As future work, we plan to investigate the impact of communication constraints on the performance of distributed inference/estimation algorithms. 
\appendix
\section{Proof of Theorem \ref{thm:main}}
\label{sec:App1}
The proof of Theorem \ref{thm:main} is based on several intermediate results. We start with the following simple lemma that characterizes the asymptotic behavior of the local belief sequences generated based on \eqref{eqn:Bayes}; we provide a proof (adapted to our notation) to keep the paper self-contained, and to introduce certain quantities that will be referenced later in our analysis. 
\begin{lemma}
Consider a false hypothesis $\theta\in\Theta\setminus\{\theta^{\star}\}$, and an
agent $i\in\mathcal{S}(\theta^{\star},\theta)$. Suppose $\pi_{i,0}(\theta_p) > 0, \forall \theta_p\in\Theta$. Then, the update rule \eqref{eqn:Bayes} ensures that (i) $\pi_{i,t}(\theta) \rightarrow 0$ a.s., (ii) $\pi_{i,\infty}(\theta^{\star})\triangleq\lim_{t\to\infty}\pi_{i,t}(\theta^{\star})$ exists a.s. and satisfies $\pi_{i,\infty}(\theta^{\star})\geq \pi_{i,0}(\theta^{\star})$, and (iii) the following holds:
\begin{equation}
\lim_{t\to\infty}\frac{1}{t}\log\frac{\pi_{i,t}(\theta)}{\pi_{i,t}(\theta^{\star})}=-K_i(\theta^{\star},\theta) \hspace{1mm} a.s.
\label{eqn:localrate}
\end{equation}
\label{lemma:Bayes}
\end{lemma}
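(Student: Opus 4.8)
The plan is to analyze the Bayesian recursion \eqref{eqn:Bayes} by passing to log-ratios, where the multiplicative normalization cancels. For any pair of hypotheses $\theta_p,\theta_q$, define $\phi_{i,t}(\theta_p,\theta_q)\triangleq\log(\pi_{i,t}(\theta_p)/\pi_{i,t}(\theta_q))$. Dividing the two instances of \eqref{eqn:Bayes} for $\theta_p$ and $\theta_q$, the denominator disappears and one obtains the telescoping recursion $\phi_{i,t}(\theta_p,\theta_q)=\phi_{i,0}(\theta_p,\theta_q)+\sum_{\tau=1}^{t}\log\frac{l_i(s_{i,\tau}|\theta_p)}{l_i(s_{i,\tau}|\theta_q)}$. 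In particular, with $\theta_p=\theta$ and $\theta_q=\theta^{\star}$, the increments $X_\tau\triangleq\log\frac{l_i(s_{i,\tau}|\theta)}{l_i(s_{i,\tau}|\theta^{\star})}$ are i.i.d.\ under $\mathbb{P}^{\theta^{\star}}$ (by assumption (iii) on the observation model), bounded by $L$ in absolute value (by \eqref{eqn:bounded_lograt}), and have mean $\mathbb{E}_{\theta^{\star}}[X_\tau]=\sum_{w_i}l_i(w_i|\theta^{\star})\log\frac{l_i(w_i|\theta)}{l_i(w_i|\theta^{\star})}=-K_i(\theta^{\star},\theta)$. Since $i\in\mathcal{S}(\theta^{\star},\theta)$, we have $K_i(\theta^{\star},\theta)>0$, so the mean increment is strictly negative.

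For part (iii), I would apply the strong law of large numbers to $\frac{1}{t}\phi_{i,t}(\theta,\theta^{\star})=\frac{1}{t}\phi_{i,0}(\theta,\theta^{\star})+\frac{1}{t}\sum_{\tau=1}^{t}X_\tau$; the first term vanishes and the second converges a.s.\ to $-K_i(\theta^{\star},\theta)$, giving \eqref{eqn:localrate} directly. For part (i), note that $\pi_{i,t}(\theta)=\frac{e^{\phi_{i,t}(\theta,\theta^{\star})}}{\sum_{p=1}^{m}e^{\phi_{i,t}(\theta_p,\theta^{\star})}}\leq e^{\phi_{i,t}(\theta,\theta^{\star})}$ (since $\phi_{i,t}(\theta^{\star},\theta^{\star})=0$ makes the denominator at least $1$), and by part (iii) the exponent tends to $-\infty$ a.s., so $\pi_{i,t}(\theta)\to 0$ a.s.

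For part (ii), observe that $1/\pi_{i,t}(\theta^{\star})=\sum_{p=1}^{m}e^{\phi_{i,t}(\theta_p,\theta^{\star})}=\sum_{p=1}^{m}\pi_{i,t}(\theta_p)/\pi_{i,t}(\theta^{\star})$. The plan here is to argue that for each $p$, $\pi_{i,t}(\theta_p)/\pi_{i,t}(\theta^{\star})$ converges a.s.\ (possibly to $0$): if $\theta_p\in\Theta^{\theta^{\star}}_i$ then the increments $\log\frac{l_i(s_{i,\tau}|\theta_p)}{l_i(s_{i,\tau}|\theta^{\star})}$ are identically $0$, so the ratio stays constant at $\pi_{i,0}(\theta_p)/\pi_{i,0}(\theta^{\star})$; if $\theta_p\notin\Theta^{\theta^{\star}}_i$, then the mean increment is $-K_i(\theta^{\star},\theta_p)<0$ (KL-divergence is strictly positive when the distributions differ), so by the same SLLN argument the ratio tends to $0$ a.s. Hence $1/\pi_{i,t}(\theta^{\star})$ converges a.s.\ to $1+\sum_{\theta_p\in\Theta^{\theta^{\star}}_i\setminus\{\theta^{\star}\}}\pi_{i,0}(\theta_p)/\pi_{i,0}(\theta^{\star})=\sum_{\theta_p\in\Theta^{\theta^{\star}}_i}\pi_{i,0}(\theta_p)/\pi_{i,0}(\theta^{\star})$, a finite positive constant; therefore $\pi_{i,\infty}(\theta^{\star})$ exists a.s.\ and equals $\pi_{i,0}(\theta^{\star})/\sum_{\theta_p\in\Theta^{\theta^{\star}}_i}\pi_{i,0}(\theta_p)\geq\pi_{i,0}(\theta^{\star})$, since the denominator is a sum of terms including $\pi_{i,0}(\theta^{\star})$ and all terms are at most their total which is at most $1$.

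The only mild subtlety — and the step I would be most careful about — is part (ii): one must handle all hypotheses observationally equivalent to $\theta^{\star}$ from agent $i$'s perspective, not just $\theta$ itself, and verify that the limiting normalization constant is finite and at least $1$ so that $\pi_{i,\infty}(\theta^{\star})\geq\pi_{i,0}(\theta^{\star})$. Everything else is a routine application of the strong law of large numbers to bounded i.i.d.\ increments, and no new machinery is needed.
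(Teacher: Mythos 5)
Your proposal is correct and follows essentially the same route as the paper's proof: pass to the log-likelihood ratio $\log\bigl(\pi_{i,t}(\theta)/\pi_{i,t}(\theta^{\star})\bigr)$, telescope the Bayesian recursion, and apply the strong law of large numbers to the bounded i.i.d.\ increments with mean $-K_i(\theta^{\star},\theta)$, with part (i) following from the divergence of the log-ratio and part (ii) from the dichotomy between hypotheses in $\Theta^{\theta^{\star}}_i$ (constant ratios) and the rest (ratios vanishing). Your treatment of part (ii) is in fact slightly more explicit than the paper's, which states the existence of the limiting belief vector without computing the limit $\pi_{i,0}(\theta^{\star})/\sum_{\theta_p\in\Theta^{\theta^{\star}}_i}\pi_{i,0}(\theta_p)$; this is a welcome refinement, not a deviation.
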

\begin{proof}
Consider any agent $i\in\mathcal{S}(\theta^{\star},\theta)$, and define:
\begin{equation}
\rho^{}_{i,t}(\theta)\triangleq \log\frac{\pi_{i,t}(\theta)}{\pi_{i,t}(\theta^{\star})}, \hspace{2mm} \lambda^{}_{i,t}(\theta) \triangleq \log\frac{l_i(s_{i,t}|\theta)}{l_i(s_{i,t}|\theta^{\star})}.
\end{equation}
Then, based on \eqref{eqn:Bayes}, we obtain the following recursion:
\begin{equation}
\rho_{i,t+1}(\theta)=\rho_{i,t}(\theta)+\lambda_{i,t+1}(\theta), \forall t\in\mathbb{N}.
\label{eqn:recur}
\end{equation}
Rolling out the above equation over time yields
\begin{equation}
\rho_{i,t}(\theta)=\rho_{i,0}(\theta)+\sum \limits_{k=1}^{t}\lambda_{i,k}(\theta), \forall t\in\mathbb{N}_{+}.
\label{eqn:recursion}
\end{equation}
Notice that $\{\lambda_{i,t}(\theta)\}$ is a sequence of i.i.d. random variables with finite means (see equation \eqref{eqn:bounded_lograt}). In particular, it is easy to verify that each random variable $\lambda_{i,t}(\theta)$ has mean\footnote{More precisely, the mean here is obtained by using the expectation operator $\mathbb{E}^{\theta^{\star}}[\cdot]$ associated with the measure $\mathbb{P}^{\theta^{\star}}$.} given by $-K_i(\theta^{\star},\theta)$. Thus, based on the strong law of large numbers, we have $\frac{1}{t}\sum \limits_{k=1}^{t}\lambda_{i,k}(\theta) \rightarrow -K_i(\theta^{\star},\theta)$ almost surely. Dividing both sides of \eqref{eqn:recursion} by $t$, and taking the limit as $t$ goes to infinity, we then obtain
\begin{equation}
\lim_{t\to\infty}\frac{1}{t}\rho_{i,t}(\theta)=-K_i(\theta^{\star},\theta) \hspace{1mm} a.s.,
\label{eqn:limit}
\end{equation}
establishing part (iii) of the lemma. Now note that based on the definition of the set $\mathcal{S}(\theta^{\star},\theta)$, $K_i(\theta^{\star},\theta) > 0$. It then follows from \eqref{eqn:limit} that $\rho_{i,t}(\theta) \rightarrow -\infty$ almost surely, and hence $\pi_{i,t}(\theta) \rightarrow 0$ almost surely. This establishes part (i) of the lemma. For any $\theta\in\Theta^{\theta^{\star}}_i$, observe that $\lambda_{i,t}(\theta)=0, \forall t\in\mathbb{N}_{+}$. It then follows from \eqref{eqn:recur} that for each $\theta\in\Theta^{\theta^{\star}}_i$, $\rho_{i,t}(\theta)=\rho_{i,0}(\theta), \forall t\in\mathbb{N}_{+}$. From the above discussion, we conclude that a limiting belief vector $\boldsymbol{\pi}_{i,\infty}$ exists almost surely, with non-zero entries corresponding to each $\theta\in\Theta^{\theta^{\star}}_i$. Part (ii) of the lemma then follows readily.
\end{proof}

While our proposed learning rule is tailored to facilitate propagation of low beliefs on false hypotheses, it is crucial to also ensure that the beliefs of all agents on the true state remain bounded away from zero. In particular, consider the following scenario. During a transient phase, certain agents see private signals that cause them to temporarily lower their local beliefs on the true state. This effect manifests itself in the actual beliefs of the agents via the min-rule \eqref{eqn:rule1}. We ask: can such a transient phenomenon trigger a cascade of progressively lower beliefs on the true state? The next important result asserts that this will almost surely never be the case.

\begin{lemma}
Suppose the conditions stated in Theorem \ref{thm:main} hold, and Algorithm \ref{Algo:Algo1} is employed by each agent. Then, there exists a set $\bar{\Omega}\subseteq\Omega$ with the following properties: (i) $\mathbb{P}^{\theta^{\star}}(\bar{\Omega})=1$, and (ii) for each $\omega\in\bar{\Omega}$, there exist constants $\eta(\omega)\in(0,1)$ and $t'(\omega)\in(0,\infty)$ such that on the sample path $\omega$,
\begin{equation}
\pi_{i,t}(\theta^{\star}) \geq \eta(\omega), \mu_{i,t}(\theta^{\star}) \geq \eta(\omega), \forall t \geq t'(\omega),\forall i\in\mathcal{V}.
\label{eqn:lowerbound}
    \end{equation}
\label{lemma:lower_bound}
\end{lemma}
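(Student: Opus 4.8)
The plan is to show that the actual belief on $\theta^{\star}$ at every agent is bounded below, uniformly in time, along almost every sample path, by propagating a lower bound from the ``least pessimistic'' agent outward through the network. First I would invoke Lemma 1 (part (ii)): for every agent $i$ and every false hypothesis $\theta$, the local belief ratio $\pi_{i,t}(\theta)/\pi_{i,t}(\theta^{\star})$ converges a.s.\ to $0$ if $i$ is a source agent for $(\theta^{\star},\theta)$, and stays constant otherwise; in either case $\pi_{i,\infty}(\theta^{\star})$ exists a.s.\ and is strictly positive. Taking the intersection over the finitely many pairs $(i,\theta)$ of the probability-one events from Lemma 1, together with the strong law of large numbers events, defines the set $\bar{\Omega}$ with $\mathbb{P}^{\theta^{\star}}(\bar{\Omega})=1$. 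Fix $\omega\in\bar{\Omega}$. Since $\pi_{i,t}(\theta^{\star})\to\pi_{i,\infty}(\theta^{\star})>0$ for each $i$, there is a time $t_1(\omega)$ and a constant $\eta_1(\omega)\in(0,1)$ with $\pi_{i,t}(\theta^{\star})\ge\eta_1(\omega)$ for all $t\ge t_1(\omega)$ and all $i\in\mathcal{V}$. This handles the local beliefs; the work is in the actual beliefs.

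Next I would exploit the structure of the normalization in \eqref{eqn:rule1}. The key observation is that the unnormalized min is bounded above by $\pi_{i,t+1}(\theta_p)\le 1$ for each $p$, so the denominator in \eqref{eqn:rule1} is at most $m$; hence
\[
\mu_{i,t+1}(\theta^{\star}) \ge \frac{1}{m}\min\bigl\{\{\mu_{j,t}(\theta^{\star})\}_{j\in\mathcal{N}_i[t]\cup\{i\}},\ \pi_{i,t+1}(\theta^{\star})\bigr\}.
\]
So if at some time $t$ every agent satisfies $\mu_{j,t}(\theta^{\star})\ge c$ and also $\pi_{i,t+1}(\theta^{\star})\ge c$, then every agent satisfies $\mu_{i,t+1}(\theta^{\star})\ge c/m$ — a one-step contraction by a factor $1/m$, but no worse. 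To turn this into a time-uniform bound I would iterate over one connectivity window: under Assumption~\ref{assump:connectivity}, in $T$ steps information from any agent reaches any other. Starting from a time $t'\ge t_1(\omega)$ after which all local beliefs on $\theta^{\star}$ exceed $\eta_1(\omega)$, and letting $c_0=\min_{i\in\mathcal{V}}\mu_{i,t'}(\theta^{\star})>0$ (strictly positive since at initialization all entries are positive and, inductively, the min-rule of a collection of positive numbers followed by normalization by at most $m$ stays positive), I would show by induction that $\mu_{i,t}(\theta^{\star})\ge \min\{c_0/m^{\,t-t'},\ \eta_1(\omega)/m^{T}\}$. The point is that once $t-t'\ge T$, the local-belief floor $\eta_1(\omega)$ ``refreshes'' the bound every $T$ steps, so the geometric decay $m^{-(t-t')}$ cannot continue indefinitely: after the transient of length at most $T$, the bound stabilizes at $\eta(\omega)\triangleq \eta_1(\omega)/m^{T}$ (one can absorb the constants to get a single $\eta(\omega)\in(0,1)$ and $t'(\omega)$). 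Applying the same argument to $\pi$ directly gives the $\pi_{i,t}(\theta^{\star})\ge\eta(\omega)$ half (indeed that is just $\eta_1(\omega)\ge\eta(\omega)$).

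The main obstacle is making the ``refresh every $T$ steps'' argument rigorous in the time-varying directed setting: a single min-step passes a lower bound only from in-neighbors, so I need to compose $T$ consecutive min-steps along a directed path guaranteed by the union graph over $[rT,(r+1)T)$ being strongly connected, and simultaneously track that each such composition costs at most a factor $m^{-T}$ from normalization while the injected term $\pi_{i,t+1}(\theta^{\star})\ge\eta_1(\omega)$ prevents the product from shrinking below $\eta_1(\omega)m^{-T}$. Care is needed because the path from $j$ to $i$ may traverse different edges at different times within the window, so the bookkeeping must follow a time-respecting walk; this is where I would be most careful, but it is a standard joint-connectivity propagation argument and the factor-of-$m$ loss per step is the only quantitative ingredient. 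I would also note that $\mu_{i,t+1}(\theta^{\star})>0$ for all $t$ (so $c_0>0$) follows by a trivial induction from the positivity of the initial beliefs and the fact that a min of finitely many strictly positive numbers is strictly positive and normalization preserves positivity.
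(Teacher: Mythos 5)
Your setup (defining $\bar{\Omega}$ via Lemma \ref{lemma:Bayes}, extracting a uniform floor $\eta_1(\omega)$ on the local beliefs $\pi_{i,t}(\theta^{\star})$ after some finite time, and noting strict positivity of all beliefs at that time) matches the paper and is fine. The gap is in the quantitative step for the actual beliefs. You bound the denominator of \eqref{eqn:rule1} by $m$, obtaining the one-step estimate $\mu_{i,t+1}(\theta^{\star}) \geq \frac{1}{m}\min\{\{\mu_{j,t}(\theta^{\star})\}_{j\in\mathcal{N}_i[t]\cup\{i\}},\pi_{i,t+1}(\theta^{\star})\}$, and then claim the local-belief floor ``refreshes'' the bound every $T$ steps so that it stabilizes at $\eta_1(\omega)/m^{T}$. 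No such refresh mechanism exists: the injected term $\pi_{i,t+1}(\theta^{\star})\geq\eta_1(\omega)$ sits \emph{inside a minimum} together with the neighbors' already-degraded actual beliefs, and a min can only be pulled down, never up. Under your own recursion, if every agent satisfies $\mu_{j,t}(\theta^{\star})\geq \eta_1(\omega)/m^{k}$ with $k\geq 1$, the next step only yields $\eta_1(\omega)/m^{k+1}$, so the bound decays geometrically forever; joint strong-connectivity and time-respecting walks cannot rescue this, because connectivity in a min-rule propagates low values, not high ones. As written, the argument does not deliver a time-uniform $\eta(\omega)$.

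The fix is to tighten the denominator bound, which is exactly what the paper's proof does: each term of the denominator satisfies $\min\{\{\mu_{j,t}(\theta_p)\}_{j\in\mathcal{N}_i[t]\cup\{i\}},\pi_{i,t+1}(\theta_p)\}\leq \pi_{i,t+1}(\theta_p)$, and summing over $p$ gives at most $\sum_{p=1}^{m}\pi_{i,t+1}(\theta_p)=1$ — you observed the first inequality but then bounded each local belief by $1$ instead of summing them to $1$. Hence there is no per-step loss at all: $\mu_{i,t+1}(\theta^{\star})\geq \min\{\{\mu_{j,t}(\theta^{\star})\}_{j\in\mathcal{N}_i[t]\cup\{i\}},\pi_{i,t+1}(\theta^{\star})\}$. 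With $\eta(\omega)$ taken as the minimum of the post-$t'(\omega)$ local-belief floor and $\min_{i\in\mathcal{V}}\mu_{i,t'(\omega)}(\theta^{\star})>0$, the induction closes immediately: if all actual beliefs on $\theta^{\star}$ at time $t\geq t'(\omega)$ and all local beliefs beyond $t'(\omega)$ are at least $\eta(\omega)$, so are all actual beliefs at $t+1$. No connectivity, windowing, or $m^{-T}$ factor is needed, and this yields the constant $\eta(\omega)$ claimed in the lemma.
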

\begin{proof} Let $\bar{\Omega}\subseteq\Omega$ denote the set of sample paths for which assertions (i)-(iii) in Lemma \ref{lemma:Bayes} hold for each false hypothesis $\theta\in\Theta\setminus\{\theta^{\star}\}$. Based on Lemma \ref{lemma:Bayes}, we note that $\mathbb{P}^{\theta^{\star}}(\bar{\Omega})=1$. Consequently, to prove the result, it suffices to establish the existence of $\eta(\omega) \in (0,1)$, and $t'(\omega)\in (0,\infty)$ for each sample path $\omega \in \bar{\Omega}$, such that \eqref{eqn:lowerbound} holds. To this end, fix an arbitrary sample path $\omega\in\bar{\Omega}.$ We first argue that the local beliefs of every agent on the true state $\theta^{\star}$ are bounded away from $0$ on $\omega$. To see this, pick any agent $i\in\mathcal{V}$. Suppose there exists some $\theta\in\Theta\setminus\{\theta^{\star}\}$ for which $i\in\mathcal{S}(\theta^{\star},\theta)$. Then, based on our choice of $\omega$,  Lemma \ref{lemma:Bayes} implies that $\pi_{i,\infty}(\theta^{\star})\geq\pi_{i,0}(\theta^{\star})>0$, where the last inequality follows from the requirement of non-zero priors in line 1 of Algo. \ref{Algo:Algo1}. In particular, given the structure of the update rule \eqref{eqn:Bayes}, it follows that $\pi_{i,t}(\theta^{\star}) > 0$ for all time. This is true  
since if $\pi_{i,t}(\theta^{\star})=0$ at any instant, then the corresponding belief would remain at $0$ for all subsequent time-steps, thereby violating the fact that $\pi_{i,\infty}(\theta^{\star})\geq\pi_{i,0}(\theta^{\star})>0$. 
Now consider the scenario where there exists no  $\theta\in\Theta\setminus\{\theta^{\star}\}$ for which $i\in\mathcal{S}(\theta^{\star},\theta)$, i.e.,  every hypothesis in $\Theta$ is observationally equivalent to $\theta^{\star}$ from the point of view of agent $i$. In this case, it is easy to see that based on \eqref{eqn:Bayes}, $\boldsymbol{\pi}_{i,t}=\boldsymbol{\pi}_{i,0}, \forall t\in\mathbb{N}_{+}$. In particular, this implies $\pi_{i,t}(\theta^{\star})=\pi_{i,0}(\theta^{\star})>0, \forall t \in \mathbb{N}_{+}$. This establishes our claim that on $\omega$, $\pi_{i,t}(\theta^{\star})$ remains bounded away from zero $\forall i\in\mathcal{V}$. 

 To proceed, define $\gamma_1\triangleq\min_{i\in\mathcal{V}} \pi_{i,0}(\theta^{\star})>0$, where the inequality follows from line 1 in Algo \ref{Algo:Algo1}. Pick a small number $\delta > 0$ such that $\delta < \gamma_1$, and notice that our discussion concerning the evolution of the local beliefs readily implies the existence of a time-step $t'(\omega)$, such that for all $t \geq t'(\omega)$, $\pi_{i,t}(\theta^{\star}) \geq  \gamma_1-\delta > 0, \forall i\in\mathcal{V}$. With  $\gamma_2(\omega)\triangleq\min_{i\in\mathcal{V}}\{\mu_{i,t'(\omega)}(\theta^{\star})\}$, we claim that $\gamma_2(\omega) > 0$. The claim follows by noting that given the structure of the update rule \eqref{eqn:rule1}, and the requirement of non-zero priors in Algo \ref{Algo:Algo1},  $\gamma_2(\omega)$ can equal $0$
if and only if some agent in the network sets its local belief on $\theta^{\star}$ to $0$ at some time-step prior to $t'(\omega)$. However, this possibility is ruled out in view of the previously established fact that on $\omega$, $\pi_{i,t}(\theta^{\star})>0, \forall t\in\mathbb{N}, \forall i\in\mathcal{V}$.
Let $\eta(\omega)=\min\{\gamma_1-\delta,\gamma_2(\omega)\} > 0$.
In words, $\eta(\omega)$ lower-bounds the lowest belief (considering both local and actual beliefs) on the true state $\theta^{\star}$ held by an agent at time-step $t'(\omega)$. It is apparent from the preceding discussion that $\pi_{i,t}(\theta^{\star})\geq\eta(\omega),\forall t\geq t'(\omega),\forall i\in\mathcal{V}$. Thus, to complete the proof, it remains to establish that $\mu_{i,t}(\theta^{\star})\geq\eta(\omega),\forall t\geq t'(\omega),\forall i\in\mathcal{V}$. To this end, let us fix an agent $i$ and observe the following:
\begin{equation}
\begin{aligned}
\mu_{i,{t}'(\omega)+1}(\theta^{\star})&\overset{(a)}{=}\frac{\min\{\{\mu_{j,t'(\omega)}(\theta^{\star})\}_{{j\in\mathcal{N}_i[t'(\omega)]\cup\{i\}}},\pi_{i,t'(\omega)+1}(\theta^{\star})\}}{\sum\limits_{p=1}^{m}\min\{\{\mu_{j,t'(\omega)}(\theta_p)\}_{{j\in\mathcal{N}_i[t'(\omega)]\cup\{i\}}},\pi_{i,t'(\omega)+1}(\theta_p)\}}\\
&\overset{(b)}{\geq}\frac{\eta(\omega)}{\sum\limits_{p=1}^{m}\min\{\{\mu_{j,t'(\omega)}(\theta_p)\}_{{j\in\mathcal{N}_i[t'(\omega)]\cup\{i\}}},\pi_{i,t'(\omega)+1}(\theta_p)\}}\\
&\overset{}{\geq}\frac{\eta(\omega)}{\sum\limits_{p=1}^{m}\pi_{i,t'(\omega)+1}(\theta_p)}
\overset{(c)}{=}\eta(\omega),
\end{aligned}
\label{eqn:lower1}
\end{equation}
where $(a)$ is given by \eqref{eqn:rule1}, $(b)$ follows from the way $\eta(\omega)$ is defined and by noting that $\pi_{i,t}(\theta^{\star}) \geq \eta(\omega), \forall t\geq t'(\omega), \forall i\in\mathcal{V}$, and $(c)$ follows by noting that the local belief vectors generated via \eqref{eqn:Bayes} are valid probability distributions over the hypothesis set $\Theta$ at each time-step, and hence $\sum\limits_{p=1}^{m}\pi_{i,t'(\omega)+1}(\theta_p)=1$. The above reasoning applies to every agent in the network, and can be repeated to establish \eqref{eqn:lowerbound} via induction.
\end{proof}

The next result establishes that the intrinsic discriminatory capabilities of an agent are preserved under our learning rule.
\begin{lemma}
Suppose the conditions stated in Theorem \ref{thm:main} hold, and Algorithm \ref{Algo:Algo1} is employed by each agent. Consider any false hypothesis $\theta\in\Theta\setminus\{\theta^{\star}\}$, and an
agent $i\in\mathcal{S}(\theta^{\star},\theta)$. Then, 
 \begin{equation}
        \liminf_{t\to\infty}-\frac{\log\mu_{i,t}(\theta)}{t} \geq K_i(\theta^{\star},\theta) \hspace{1mm} a.s.
        \label{eqn:source_rate}
    \end{equation}
    \label{lemma:source}
\end{lemma}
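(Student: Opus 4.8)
The plan is to exploit the structural feature of the min-rule \eqref{eqn:rule1} that agent $i$'s own locally generated belief $\pi_{i,t+1}(\theta)$ always participates in the minimum defining the (unnormalized) actual belief on $\theta$; hence, before normalization, $\mu_{i,t+1}(\theta)$ can never exceed $\pi_{i,t+1}(\theta)$, which by Lemma \ref{lemma:Bayes} already decays to $0$ at the desired exponential rate. The only mechanism that could spoil this inheritance is the normalization factor appearing in the denominator of \eqref{eqn:rule1}, so the entire burden of the proof is to show that this denominator stays bounded away from zero along almost every sample path. First I would restrict attention to the almost-sure event $\bar{\Omega}$ furnished by Lemma \ref{lemma:lower_bound} (on which, by construction, the conclusions of Lemma \ref{lemma:Bayes} also hold for every false hypothesis), fix a sample path $\omega\in\bar{\Omega}$, and recall the associated constants $\eta(\omega)\in(0,1)$ and $t'(\omega)\in(0,\infty)$.

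Next, for every $t\geq t'(\omega)$ I would lower-bound the denominator $\sum_{p=1}^{m}\min\{\{\mu_{j,t}(\theta_p)\}_{j\in\mathcal{N}_i[t]\cup\{i\}},\pi_{i,t+1}(\theta_p)\}$ in \eqref{eqn:rule1} by retaining only its $\theta^{\star}$ term, i.e.\ by $\min\{\{\mu_{j,t}(\theta^{\star})\}_{j\in\mathcal{N}_i[t]\cup\{i\}},\pi_{i,t+1}(\theta^{\star})\}$. By Lemma \ref{lemma:lower_bound}, since $t\geq t'(\omega)$ (so also $t+1\geq t'(\omega)$), we have $\mu_{j,t}(\theta^{\star})\geq\eta(\omega)$ for all $j\in\mathcal{V}$ and $\pi_{i,t+1}(\theta^{\star})\geq\eta(\omega)$, whence this term, and therefore the full denominator, is at least $\eta(\omega)$. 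Combining this with the numerator bound $\min\{\{\mu_{j,t}(\theta)\}_{j\in\mathcal{N}_i[t]\cup\{i\}},\pi_{i,t+1}(\theta)\}\leq\pi_{i,t+1}(\theta)$ gives $\mu_{i,t+1}(\theta)\leq \pi_{i,t+1}(\theta)/\eta(\omega)$ for all $t\geq t'(\omega)$ on $\omega$.

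To finish, I would take $-\frac{1}{t}\log(\cdot)$ of this inequality, yielding $-\frac{\log\mu_{i,t+1}(\theta)}{t}\geq -\frac{\log\pi_{i,t+1}(\theta)}{t}+\frac{\log\eta(\omega)}{t}$, and let $t\to\infty$; the last term vanishes, so it remains to evaluate $\lim_{t\to\infty}-\frac{1}{t}\log\pi_{i,t}(\theta)$. This follows from Lemma \ref{lemma:Bayes}: part (iii) gives $\frac{1}{t}\log\frac{\pi_{i,t}(\theta)}{\pi_{i,t}(\theta^{\star})}\to -K_i(\theta^{\star},\theta)$, while parts (i)--(ii) show $\pi_{i,t}(\theta^{\star})$ converges to the strictly positive constant $\pi_{i,\infty}(\theta^{\star})$, so $\frac{1}{t}\log\pi_{i,t}(\theta^{\star})\to 0$; adding these limits gives $-\frac{1}{t}\log\pi_{i,t}(\theta)\to K_i(\theta^{\star},\theta)$. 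Hence $\liminf_{t\to\infty}-\frac{\log\mu_{i,t}(\theta)}{t}\geq K_i(\theta^{\star},\theta)$ on every $\omega\in\bar{\Omega}$, and since $\mathbb{P}^{\theta^{\star}}(\bar{\Omega})=1$, the claimed bound \eqref{eqn:source_rate} holds almost surely.

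The only genuine obstacle here is the control of the normalization denominator, and that has already been handled once and for all by Lemma \ref{lemma:lower_bound} (the non-cascading property of beliefs on the true state); given that lemma, the remainder of the argument is essentially bookkeeping with limits, so I do not anticipate any further difficulty.
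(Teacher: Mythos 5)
Your proposal is correct and follows essentially the same route as the paper: bound the numerator of the min-rule by $\pi_{i,t+1}(\theta)$, bound the normalization denominator below by $\eta(\omega)$ via Lemma \ref{lemma:lower_bound}, and invoke the local-belief rate from Lemma \ref{lemma:Bayes}. The only (cosmetic) difference is that you pass directly to the limit $-\tfrac{1}{t}\log\pi_{i,t}(\theta)\to K_i(\theta^{\star},\theta)$ using parts (i)--(iii) of Lemma \ref{lemma:Bayes}, whereas the paper fixes $\epsilon>0$ and works with the explicit bound $\pi_{i,t}(\theta)<e^{-(K_i(\theta^{\star},\theta)-\epsilon)t}$ before letting $\epsilon\to 0$.
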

\begin{proof}
With $\bar{\Omega}$ defined as in Lemma \ref{lemma:lower_bound}, recall that $\mathbb{P}^{\theta^{\star}}(\bar{\Omega})=1$, and pick any $\omega\in\bar{\Omega}.$ Now consider any false hypothesis $\theta\in\Theta\setminus\{\theta^{\star}\}$, and an
agent $i\in\mathcal{S}(\theta^{\star},\theta)$.  Fix any $\epsilon > 0$, and notice that since $i\in\mathcal{S}(\theta^{\star},\theta)$, Eq. \eqref{eqn:localrate} in Lemma \ref{lemma:Bayes} implies that there exists ${t}_i(\omega,\theta,\epsilon)$, such that
\begin{equation}
    \pi_{i,t}(\theta) <  e^{-(K_i(\theta^{\star},\theta)-\epsilon)t}, \forall t \geq {t}_i(\omega,\theta,\epsilon).
    \label{eqn:source_bound1}
\end{equation}
Furthermore, since $\omega\in\bar{\Omega}$, Lemma \ref{lemma:lower_bound} guarantees the existence of a time-step $t'(\omega) \in (0,\infty)$, and a constant $\eta(\omega)\in(0,1)$, such that on $\omega$, $\pi_{i,t}(\theta^{\star}) \geq \eta(\omega), \mu_{i,t}(\theta^{\star}) \geq \eta(\omega), \forall t\geq t'(\omega), \forall i\in\mathcal{V}$. Let $\bar{t}_i(\omega,\theta,\epsilon)=\max\{t'(\omega),t_i(\omega,\theta,\epsilon)\}$. Let us suppress the dependence of $\bar{t}_i(\omega,\theta,\epsilon)$ on $i,\omega,\theta$ and $\epsilon$ for simplicity of notation, and observe the following inequalities:
\begin{equation}
\begin{aligned}
\mu_{i,\bar{t}+1}(\theta)&\overset{(a)}{\leq}\frac{\pi_{i,\bar{t}+1}(\theta)}{\sum\limits_{p=1}^{m}\min\{\{\mu_{j,\bar{t}}(\theta_p)\}_{{j\in\mathcal{N}_i[\bar{t}]\cup\{i\}}},\pi_{i,\bar{t}+1}(\theta_p)\}}\\
&\overset{}{\leq}\frac{\pi_{i,\bar{t}+1}(\theta)}{\min\{\{\mu_{j,\bar{t}}(\theta^{\star})\}_{{j\in\mathcal{N}_i[\bar{t}]\cup\{i\}}},\pi_{i,\bar{t}+1}(\theta^{\star})\}}\\
&\overset{(b)}{<}\frac{e^{-(K_i(\theta^{\star},\theta)-\epsilon)(\bar{t}+1)}}{\eta(\omega)}.
\end{aligned}
\label{eqn:src_upperbnd1}
\end{equation}
In the above inequalities, (a) follows from \eqref{eqn:rule1}, whereas (b) follows from \eqref{eqn:source_bound1} and by noting that all agents have both their local and actual beliefs lower bounded by $\eta(\omega)$ beyond time-step $\bar{t}.$ In particular, it is easy to see that the arguments used to arrive at \eqref{eqn:src_upperbnd1} apply to each time-step $t\geq\bar{t}+1.$ Based on \eqref{eqn:src_upperbnd1}, we then obtain that $\forall t\geq\bar{t}+1$: 
\begin{equation}
-\frac{\log\mu_{i,t}(\theta)}{t} > {(K_i(\theta^{\star},\theta)-\epsilon)}+\frac{\log \eta(\omega)}{t}.
\label{eqn:src_bnd2}
\end{equation}
Taking the limit inferior on both sides of \eqref{eqn:src_bnd2}, and noting that $\epsilon$ can be made arbitrarily small, readily leads to \eqref{eqn:source_rate}.
\end{proof}

For the subsequent discussion, let us fix a particular false hypothesis $\theta\in\Theta\setminus\{\theta^{\star}\}$, and assume that global identifiability holds. Let $v_{\theta}\in\argmax_{l\in\mathcal{S}(\theta^{\star},\theta)}K_l(\theta^{\star},\theta)$ represent any agent with the best discriminatory power w.r.t. the false hypothesis $\theta$, given that $\theta^{\star}$ gets realized. Based on Lemma \ref{lemma:source}, we have
\begin{equation}
        \liminf_{t\to\infty}-\frac{\log\mu_{v_{\theta},t}(\theta)}{t} \geq K_{v_{\theta}}(\theta^{\star},\theta) \hspace{1mm} a.s.
        \label{eqn:bst_src}
    \end{equation}
Our goal is to now establish that each agent $i\in\mathcal{V}\setminus\{v_{\theta}\}$ inherits the same asymptotic rate of rejection of $\theta$ as that of agent $v_{\theta}$ in \eqref{eqn:bst_src}. Roughly speaking, we will achieve this by showing that under the assumption of joint strong-connectivity, the belief of any agent $i\in\mathcal{V}\setminus\{v_{\theta}\}$ on $\theta$ is ``not too far off" from the belief of agent $v_{\theta}$ on $\theta$. In what follows, we make this idea precise. First, we require some additional notation: with each agent $i\in\mathcal{V}$, we associate a non-negative scalar $c_{i,t}(\theta)\in[0,\infty]$. These parameters evolve based on the following rules.\footnote{Note that the agents do not actually maintain or update the parameters $c_{i,t}(\theta)$. Instead, they have been introduced solely for the purpose of analysis.} 
\begin{enumerate}
    \item[(i)] $c_{v_{\theta},t}(\theta)=0, \forall t\in\mathbb{N}$.
    \item[(ii)] $c_{i,0}(\theta)=\infty, \forall i\in\mathcal{V}\setminus\{v_{\theta}\}$. 
    \item[(iii)] For each $i\in\mathcal{V}\setminus\{v_{\theta}\}$ and $t\in\mathbb{N}$, define $\tau_{i,t}(\theta)\triangleq\min_{j\in\mathcal{N}_i[t]\cup\{i\}}c_{j,t}(\theta)$, and 
    \begin{equation}
        c_{i,t+1}(\theta)\triangleq\tau_{i,t}(\theta)+1.
        \label{eqn:freshness_update}
    \end{equation}
\end{enumerate}
To explain the purpose of the above rules, we will adhere to the following terminology. We say that there exists a path of length $m\in\mathbb{N}_{+}$ from $v_{\theta}$ to $i\in\mathcal{V}\setminus\{{v}_{\theta}\}$ over $[t-m,t-1]$, if there exist agents $x(t-m+1),\ldots,x(t)\in\mathcal{V}\setminus\{{v}_{\theta}\}$, such that $(x(\tau-1),x(\tau))\in\mathcal{E}[\tau-1]$, where $\tau\in\{t-m+1,\ldots,t\}, x(t-m)=v_{\theta}$, and $x(t)=i$. Note that the agents appearing in the path need not be distinct, and that we have assumed the presence of self-loops in each graph $\mathcal{G}[t],t\in\mathbb{N}$. Rules (i)-(iii) have been designed in a manner such that if $c_{i,t}(\theta)$ is finite at any time-step $t\in\mathbb{N}$ for any agent $i\in\mathcal{V}\setminus\{{v}_{\theta}\}$, then there exists a path of length $c_{i,t}(\theta)$ from $v_{\theta}$ to $i$ over $[t-c_{i,t}(\theta),t-1]$, in the sense described above. Analyzing the time-evolution of $c_{i,t}(\theta)$ enables us to then relate the belief  $\mu_{i,t}(\theta)$ of agent $i$ to a delayed-version of the belief $\mu_{v_{\theta},t}(\theta)$ of agent $v_{\theta}$, where the delay is precisely $c_{i,t}(\theta)$ (the above statements are formalized and proven in Lemma \ref{lemma:belief_relation}). Since agent $v_{\theta}$ is the reference agent here, its delay w.r.t. its own belief on $\theta$ is set to $0$ for all time, thus explaining rule (i). Initially, all agents in $\mathcal{V}\setminus\{v_{\theta}\}$ start out with an ``infinite-delay
" w.r.t. the belief of agent $v_{\theta}$; this is captured by rule (ii). Finally, the rationale behind updating $c_{i,t}(\theta)$ via rule (iii) is to formalize the intuition that under the assumption of joint strong-connectivity, the lengths of paths linking $v_{\theta}$ to agents in $\mathcal{V}\setminus\{v_{\theta}\}$ (and hence, the corresponding delays) should eventually remain uniformly bounded; we begin by establishing this fact in the following lemma.

\begin{lemma}
Consider any $\theta\in \Theta\setminus\{\theta^{\star}\}$ and suppose the joint strong-connectivity assumption (Assumption \ref{assump:connectivity}) holds. Then, the following is true:
   \begin{equation}
        c_{i,t}(\theta) \leq 2(n-1)T, \forall i\in\mathcal{V}, \forall t \geq (n-1)T,
        \label{eqn:delay_bnd}
    \end{equation}
where $T$ is the constant appearing in Assumption \ref{assump:connectivity}.
\label{lemma:delay_bound}
\end{lemma}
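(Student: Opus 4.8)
The plan is to read $c_{i,t}(\theta)$ as an upper bound on the ``time‑respecting reachability delay'' of agent $i$ from $v_{\theta}$, and to reduce the lemma to a purely combinatorial statement: under joint strong‑connectivity, every agent becomes ``reached'' from $v_{\theta}$ within a bounded number of steps, no matter how late we start the clock. To make this precise, I would fix $t\geq (n-1)T$, let $t_0$ be the largest integer multiple of $T$ with $t_0 \leq t-(n-1)T$ (well defined and $\geq 0$ because $t\geq (n-1)T$), and introduce the ``controlled‑delay set'' $A_a \triangleq \{i\in\mathcal{V} : c_{i,a}(\theta) \leq a - t_0\}$ for $a\geq t_0$. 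Since $c_{v_{\theta},a}(\theta)=0$ for all $a$ while $c_{i,0}(\theta)=\infty$ and $c_{i,a}(\theta)\geq 1$ for $a\geq 1,\ i\neq v_{\theta}$ (as $\tau_{i,a-1}(\theta)\geq 0$ in \eqref{eqn:freshness_update}), we get $A_{t_0}=\{v_{\theta}\}$. Using the self‑loop present in each $\mathcal{G}[t]$ together with \eqref{eqn:freshness_update} one checks $c_{i,a+1}(\theta)\leq c_{i,a}(\theta)+1$, so $A_a$ is nondecreasing in $a$; and if $(u,w)\in\mathcal{E}[a]$ with $u\in A_a$, then $c_{w,a+1}(\theta)\leq c_{u,a}(\theta)+1$ forces $w\in A_{a+1}$, i.e. $A_a$ absorbs every out‑neighbour of its members.

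\textbf{Key step (one new agent per window).} I would next show that if $A_a\neq\mathcal{V}$ and $a$ is a multiple of $T$, then $A_{a+T}\supsetneq A_a$. Indeed $A_a$ is a nonempty ($v_{\theta}\in A_a$) proper subset of $\mathcal{V}$, and the union graph over $[a,a+T)$ is strongly‑connected by Assumption \ref{assump:connectivity}; hence there exist $u\in A_a$, $w\in\mathcal{V}\setminus A_a$ and $\tau\in[a,a+T)$ with $(u,w)\in\mathcal{E}[\tau]$ (take the first edge leaving $A_a$ along a path in the union graph from $A_a$ to a fixed vertex outside $A_a$). By monotonicity $u\in A_{\tau}$, so by the absorption property $w\in A_{\tau+1}\subseteq A_{a+T}$. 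Iterating this over the $n-1$ consecutive length‑$T$ windows $[t_0+kT,\,t_0+(k+1)T)$, $k=0,\dots,n-2$ — each of the form required by Assumption \ref{assump:connectivity} because $t_0$ is a multiple of $T$ — and starting from $|A_{t_0}|=1$ gives, by a trivial induction, $A_{t_0+(n-1)T}=\mathcal{V}$.

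\textbf{Conclusion.} Since $t_0+(n-1)T\leq t$ and $A_a$ is nondecreasing, $A_t=\mathcal{V}$, i.e. $c_{i,t}(\theta)\leq t-t_0$ for every $i\in\mathcal{V}$. Maximality of $t_0$ yields $t_0+T>t-(n-1)T$, hence $t-t_0<nT\leq 2(n-1)T$ for $n\geq 2$; the case $n=1$ is immediate since then $v_{\theta}$ is the only agent and $c_{v_{\theta},t}(\theta)=0$. This establishes \eqref{eqn:delay_bnd}.

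\textbf{Main obstacle.} The only genuine subtlety is the mismatch between the target time $t$ and the fixed grid of windows $\{[rT,(r+1)T)\}$ on which Assumption \ref{assump:connectivity} supplies strong‑connectivity: a generic interval of length $(n-1)T$ ending at $t$ need not contain $n-1$ complete grid‑aligned windows, so one is forced to begin the reachability analysis at a grid point $t_0\leq t-(n-1)T$ and absorb an extra $T$ — which is exactly what turns the natural bound $(n-1)T$ into the stated $2(n-1)T$ (in fact into $nT-1$). Everything else is a routine induction once $A_a$ and its monotonicity/absorption properties are in place.
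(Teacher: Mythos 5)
Your proof is correct. It rests on the same two combinatorial ingredients as the paper's argument --- the growth inequality $c_{i,t+1}(\theta)\leq c_{i,t}(\theta)+1$ implied by \eqref{eqn:freshness_update}, and the fact that over each grid-aligned window $[rT,(r+1)T)$ the set of agents already ``reached'' from $v_{\theta}$ must gain at least one member --- but you organize them differently, and the difference buys something. The paper anchors its layered induction (the sets $\mathcal{L}_r(\theta^{\star},\theta)$, then $\mathcal{C}_r(\theta^{\star},\theta)$, and so on) at the times $0,(n-1)T,2(n-1)T,\ldots$, proving $c_{i,m(n-1)T}(\theta)\leq (n-1)T$ for every block index $m$, and then pays an additional $(n-1)T$ via the growth inequality to cover times strictly between consecutive multiples of $(n-1)T$; that extra slack is exactly where the factor $2$ in \eqref{eqn:delay_bnd} comes from, and the induction has to be repeated block after block. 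You instead re-anchor per target time $t$: choosing $t_0$ as the largest multiple of $T$ with $t_0\leq t-(n-1)T$ keeps the $T$-windows aligned with Assumption \ref{assump:connectivity} (the alignment issue you correctly single out as the only subtlety), and tracking the sublevel sets $A_a=\{i: c_{i,a}(\theta)\leq a-t_0\}$ through their monotonicity and edge-absorption properties gives $c_{i,t}(\theta)\leq t-t_0<nT$ in one pass. Since $nT\leq 2(n-1)T$ for $n\geq 2$ (and the case $n=1$ is trivial), this implies the stated bound and is in fact strictly sharper, while avoiding the paper's repetition of the argument over successive blocks.
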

\begin{proof}
Observe that the conclusion in \eqref{eqn:delay_bnd} is trivially true for agent $v_{\theta}$ since $c_{v_{\theta},t}(\theta)=0, \forall t\in\mathbb{N}$. To prove the result for agents in the set $\mathcal{V}\setminus\{v_{\theta}\}$, we begin by claiming that
\begin{equation}
    c_{i,(n-1)T}(\theta) \leq (n-1)T, \forall i\in\mathcal{V}.
        \label{eqn:delay_finite}
\end{equation}
To prove this claim, let $\mathcal{L}_0(\theta^{\star},\theta)=\{v_{\theta}\}$, and define
\begin{equation}
    \mathcal{L}_1(\theta^{\star},\theta)\triangleq\{i\in\mathcal{V}\setminus\mathcal{L}_0(\theta^{\star},\theta): \{\bigcup \limits_{\tau=0}^{T-1}\mathcal{N}_i[\tau]\}\cap\mathcal{L}_0(\theta^{\star},\theta)\neq\emptyset\}
\end{equation}
as the set of agents in $\mathcal{V}\setminus\{v_{\theta}\}$ that have a direct edge from agent $v_{\theta}$ at least once over the interval $[0,T)$. Assumption \ref{assump:connectivity} implies that $ \mathcal{L}_1(\theta^{\star},\theta)$ is non-empty (barring the trivial case when $\mathcal{V}=\{v_{\theta}\}$). Now pick any agent $i\in \mathcal{L}_1(\theta^{\star},\theta)$, and notice that since $v_{\theta}\in\mathcal{N}_{i}[\tau]$ for some $\tau\in[0,T)$, update rule \eqref{eqn:freshness_update} implies $c_{i,\tau+1}(\theta)=1$.\footnote{Notice that based on the update rule \eqref{eqn:freshness_update}, $c_{i,t}(\theta) \geq 1,  \forall i\in\mathcal{V}\setminus\{v_{\theta}\}$. Thus, $\argmin _{{j\in\mathcal{N}_i[t]\cup\{i\}}}c_{j,t}(\theta)=v_{\theta}$ whenever $v_{\theta}\in\mathcal{N}_{i}[t]$, since $c_{v_{\theta},t}(\theta)=0, \forall t\in\mathbb{N}.$} In particular, based on \eqref{eqn:freshness_update},
\begin{equation}
    c_{i,t+1}(\theta)\leq c_{i,t}(\theta)+1.
    \label{eqn:freshness_growth}
\end{equation}
Based on the above discussion, it follows that for each agent $i\in\mathcal{L}_1(\theta^{\star},\theta)$, $c_{i,T}(\theta)\leq T.$ The claim in \eqref{eqn:delay_finite} follows readily for each agent $i\in\mathcal{L}_1(\theta^{\star},\theta)$ by appealing to \eqref{eqn:freshness_growth}. 
Let us now recursively define the sets $\mathcal{L}_r(\theta^{\star},\theta),1\leq r \leq (n-1)$, as
\begin{equation}
    \mathcal{L}_r(\theta^{\star},\theta)\triangleq\{i\in\mathcal{V}\setminus\bigcup \limits_{q=0}^{(r-1)}\mathcal{L}_q(\theta^{\star},\theta):\{\hspace{-2.5mm}\bigcup \limits_{\tau=(r-1)T}^{rT-1}\hspace{-3mm}\mathcal{N}_i[\tau]\}\cap\{\bigcup \limits_{q=0}^{(r-1)}\mathcal{L}_q(\theta^{\star},\theta)\}\neq \emptyset\}.
\end{equation}
In words, $\mathcal{L}_r(\theta^{\star},\theta)$ are those agents belonging to $\mathcal{V}\setminus\bigcup \limits_{q=0}^{(r-1)}\mathcal{L}_q(\theta^{\star},\theta)$ that each have at least one neighbor from the set $\bigcup \limits_{q=0}^{(r-1)}\mathcal{L}_q(\theta^{\star},\theta)$ over the interval $[(r-1)T,rT-1]$. We complete the proof of the claim by inducting on $r$. The base case with $r=1$ has already been proven above. Now suppose the following is true: $c_{i,rT}(\theta)\leq rT,\forall i\in \mathcal{L}_r(\theta^{\star},\theta)$, where $r\in\{1,\ldots,m-1\}$, and $m\in\{2,\ldots,n-1\}.$ Let $r=m.$ If $\mathcal{V}\setminus\bigcup \limits_{q=0}^{(m-1)}\mathcal{L}_q(\theta^{\star},\theta)$ is empty, then we are done. Else, based on Assumption \ref{assump:connectivity}, it must be that $\mathcal{L}_{m}(\theta^{\star},\theta)$ is non-empty. Pick any agent $i\in\mathcal{L}_{m}(\theta^{\star},\theta)$, and notice that it has a neighbor $j$ (say) from the set $\bigcup\limits_{q=0}^{(m-1)}\mathcal{L}_q(\theta^{\star},\theta)$ at some time-step $\tau \in [(m-1)T,mT)$. The induction hypothesis coupled with \eqref{eqn:freshness_growth} implies that $c_{j,\tau}(\theta) \leq \tau$, and hence $c_{i,\tau+1}(\theta)\leq c_{j,\tau}(\theta)+1\leq \tau+1$ based on \eqref{eqn:freshness_update}. Appealing to \eqref{eqn:freshness_growth} then reveals that $c_{i,mT}(\theta) \leq mT$, thus completing the induction step. Finally, noting that $\bigcup\limits_{q=0}^{(n-1)}\mathcal{L}_q(\theta^{\star},\theta)=\mathcal{V}$ completes our proof of the claim \eqref{eqn:delay_finite}. An identical line of argument as above can be employed to show that $c_{i,2(n-1)T} \leq (n-1)T, \forall i\in\mathcal{V}$. In particular, this can be done by first taking $\mathcal{C}_0(\theta^{\star},\theta)=\{v_{\theta}\}$, and recursively defining the sets 
 $\mathcal{C}_r(\theta^{\star},\theta),1\leq r \leq (n-1)$ as
\begin{equation}
    \mathcal{C}_r(\theta^{\star},\theta)\triangleq\{i\in\mathcal{V}\setminus\bigcup \limits_{q=0}^{(r-1)}\mathcal{C}_q(\theta^{\star},\theta):\{\hspace{-3.5mm}\bigcup \limits_{\tau=(n+r-2)T}^{(n+r-1)T-1}\hspace{-4.5mm}\mathcal{N}_i[\tau]\}\cap\{\bigcup \limits_{q=0}^{(r-1)}\mathcal{C}_q(\theta^{\star},\theta)\}\neq \emptyset\}.
\end{equation}
One can then easily prove via induction that
$c_{i,(n-1+r)T}(\theta) \leq rT, \forall i\in \mathcal{C}_{r}(\theta^{\star},\theta)$, where $1\leq r \leq (n-1)$. The rest then follows from \eqref{eqn:freshness_growth}.

We can repeat the above argument to establish that $c_{i,m(n-1)T}(\theta)\leq(n-1)T, \forall i\in\mathcal{V}, \forall m\in\mathbb{N}_{+}$. Finally, based on the above bound and \eqref{eqn:freshness_growth}, it follows that for each agent $i\in\mathcal{V}$, $c_{i,t}(\theta)$ is upper-bounded by $2(n-1)T$ at any time-step $t\in(m(n-1)T,(m+1)(n-1)T)$, where $m\in\mathbb{N}_{+}$. This establishes \eqref{eqn:delay_bnd} and completes the proof.
\end{proof}

The next lemma relates $\mu_{i,t}(\theta), i\in\mathcal{V}\setminus\{v_{\theta}\}$ to $\mu_{v_{\theta},t}(\theta)$ in terms of the parameter $c_{i,t}(\theta)$ and, in turn, provides the final ingredient required to prove Theorem \ref{thm:main}.   

\begin{lemma}
Consider any $\theta\in \Theta\setminus\{\theta^{\star}\}$. Suppose the joint strong-connectivity assumption holds (Assumption \ref{assump:connectivity}), and each agent applies Algorithm \ref{Algo:Algo1}. Suppose $c_{i,t}(\theta)$ is finite, where $i\in\mathcal{V}\setminus\{v_{\theta}\}$, and $t\in\mathbb{N}$. Then, the following are true.
\begin{itemize}
    \item[(i)] There exists a path of length $c_{i,t}(\theta)$ from $v_{\theta}$ to $i$ over $[t-c_{i,t}(\theta),t-1]$.
    \item[(ii)] Let the path linking $v_{\theta}$ to $i$ over $[t-c_{i,t}(\theta),t-1]$ in part (i) be denoted $x(t-c_{i,t}(\theta)),x(t-c_{i,t}(\theta)+1),\ldots,x(t)$, where $x(t-c_{i,t}(\theta))=v_{\theta}$ and $x(t)=i$. Then
\begin{equation}
    \mu_{i,t}(\theta) \leq \frac{\mu_{v_{\theta},a_{i,t}(\theta)}(\theta)}{\prod\limits_{\tau=a_{i,t}(\theta)+1}^{t}\eta_{x(\tau),\tau}(\theta^{\star})}, 
    \label{eqn:belief_relation}
\end{equation}
where $a_{i,t}(\theta)=t-c_{i,t}(\theta)$, and
\begin{equation}
    \eta_{i,t}(\theta^{\star})\triangleq\min\{\{\mu_{j,{t-1}}(\theta^{\star})\}_{{j\in\mathcal{N}_i[{t-1}]\cup\{i\}}},\pi_{i,{t}}(\theta^{\star})\}, \forall i\in\mathcal{V}.
    \label{eqn:eta_defn}
\end{equation}
\end{itemize}
\label{lemma:belief_relation}
\end{lemma}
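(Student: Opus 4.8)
The plan is to establish part (i) by induction on the finite value $c_{i,t}(\theta)$, and then derive part (ii) by iterating the min-rule \eqref{eqn:rule1} along the path produced in (i). Before starting, I would record one elementary fact about the auxiliary parameters: from rules (i)--(iii) defining $c_{\cdot,\cdot}(\theta)$ one has $c_{k,s}(\theta)=0$ exactly when $k=v_{\theta}$ (since $c_{v_{\theta},\cdot}(\theta)\equiv 0$, whereas any $k\neq v_{\theta}$ satisfies $c_{k,0}(\theta)=\infty$ and $c_{k,s+1}(\theta)\geq 1$ by \eqref{eqn:freshness_update}); also, if $c_{i,t}(\theta)<\infty$ with $i\neq v_{\theta}$, then $t\geq 1$.

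For part (i), I would fix such $i,t$ and induct on $c\triangleq c_{i,t}(\theta)\in\mathbb{N}_{+}$. In the base case $c=1$, \eqref{eqn:freshness_update} forces $\tau_{i,t-1}(\theta)=0$, so some $j\in\mathcal{N}_i[t-1]\cup\{i\}$ has $c_{j,t-1}(\theta)=0$, i.e.\ $j=v_{\theta}$; as $i\neq v_{\theta}$ this gives $(v_{\theta},i)\in\mathcal{E}[t-1]$, a path of length $1$ over $[t-1,t-1]$. For $c\geq 2$, \eqref{eqn:freshness_update} gives $\tau_{i,t-1}(\theta)=c-1$, attained at some $j\in\mathcal{N}_i[t-1]\cup\{i\}$ with $c_{j,t-1}(\theta)=c-1\geq 1$, hence $j\neq v_{\theta}$; the induction hypothesis applied to $(j,t-1)$ furnishes a path of length $c-1$ from $v_{\theta}$ to $j$ over $[t-c,t-2]$, and appending the edge $(j,i)\in\mathcal{E}[t-1]$ (or, if $j=i$, the self-loop at $i$) extends it to a path of length $c$ from $v_{\theta}$ to $i$ over $[t-c,t-1]$.

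For part (ii), the only calculation I would actually carry out is the one-step estimate implied by \eqref{eqn:rule1}: for any agent $u$, any time $\tau\geq 1$, and any $w\in\mathcal{N}_u[\tau-1]\cup\{u\}$, bounding the numerator of \eqref{eqn:rule1} by its single $w$-term and the denominator from below by retaining only its $\theta^{\star}$-summand yields $\mu_{u,\tau}(\theta)\leq \mu_{w,\tau-1}(\theta)/\eta_{u,\tau}(\theta^{\star})$, the denominator bound being exactly \eqref{eqn:eta_defn}. Writing the path of (i) as $x(a),\dots,x(t)$ with $a=a_{i,t}(\theta)$, $x(a)=v_{\theta}$, $x(t)=i$, and noting $x(\tau-1)\in\mathcal{N}_{x(\tau)}[\tau-1]\cup\{x(\tau)\}$ for $a+1\leq \tau\leq t$, I would apply this estimate with $u=x(\tau)$, $w=x(\tau-1)$ for $\tau=t,t-1,\dots,a+1$ and telescope, obtaining exactly \eqref{eqn:belief_relation}.

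The main obstacle is organizational rather than conceptual: one must keep the length of the inductively built path synchronized with both the recursion $c_{i,t}(\theta)=\tau_{i,t-1}(\theta)+1$ and the sliding window $[t-c_{i,t}(\theta),t-1]$, and handle the degenerate step $j=i$ so that consecutive path vertices may coincide (legitimate because every $\mathcal{G}[\tau]$ is assumed to carry self-loops). Beyond that the argument is routine; I note that the usefulness of \eqref{eqn:belief_relation} in the proof of Theorem \ref{thm:main} hinges on the $\eta_{\cdot,\cdot}(\theta^{\star})$ factors staying bounded away from $0$ for large $t$, which is exactly the content of Lemma \ref{lemma:lower_bound}.
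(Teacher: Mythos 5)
Your proposal is correct and follows essentially the same route as the paper's own proof: part (i) by induction on the value of $c_{i,t}(\theta)$ using the recursion \eqref{eqn:freshness_update} to extract a neighbor with counter value one less and append the edge at time $t-1$, and part (ii) by bounding the numerator of \eqref{eqn:rule1} by the predecessor's belief and the denominator below by its $\theta^{\star}$-summand $\eta_{x(\tau),\tau}(\theta^{\star})$, then telescoping along the path. Your explicit handling of the possible self-loop step $j=i$ and the observation that finiteness of $c_{i,t}(\theta)$ forces $t\geq 1$ are minor bookkeeping points that the paper treats implicitly; nothing of substance differs.
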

\begin{proof}
We prove part (i) by inducting on the value of $c_{i,t}(\theta)$. For the base case, suppose $c_{i,t}(\theta)=1$ for some agent $i\in\mathcal{V}\setminus\{v_{\theta}\}$ at some time-step $t$. Based on \eqref{eqn:freshness_update}, notice that this can happen if and only if $v_{\theta}\in\mathcal{N}_i[t-1]$; the claim in part (i) then follows readily for the base case. Fix an integer $m\geq2$, and suppose that the assertion of part (i) holds for any agent $i\in\mathcal{V}\setminus\{v_{\theta}\}$ and at any time-step $t$, whenever $c_{i,t}(\theta)\in\{1,\dots,m-1\}$. Now suppose that at some time-step $t$, $c_{i,t}(\theta)=m$ for some agent $i\in\mathcal{V}\setminus\{v_{\theta}\}.$ Referring to \eqref{eqn:freshness_update}, this is true only if $c_{l,t-1}(\theta)=m-1$ for some $l\in\mathcal{N}_{i}[t-1]\cup\{i\}.$ Since $m\geq2$, we have $c_{l,t-1}(\theta)\geq 1$, and hence $l\in\mathcal{V}\setminus\{v_{\theta}\}$. The induction hypothesis thus applies to agent $l$, implying the existence of a path of length $m-1$ from $v_{\theta}$ to $l$ over $[(t-1)-c_{l,t-1}(\theta),t-2]$, i.e., over $[t-m,t-2]$. Appending this path with the edge $(l,i)\in\mathcal{E}[t-1]$ immediately leads to the desired conclusion. 

For part (ii), consider the path $x(t-c_{i,t}(\theta)),x(t-c_{i,t}(\theta)+1),\ldots,x(t)$ from $v_{\theta}$ to $i$ over $[t-c_{i,t}(\theta),t-1]$, where $x(t-c_{i,t}(\theta))=v_{\theta}$ and $x(t)=i$. By definition of this path, $x(\tau-1)\in\mathcal{N}_{x(\tau)}[\tau-1]\cup\{x(\tau)\}$, for all $\tau\in\{a_{i,t}(\theta)+1,\ldots,t\}$. Thus, referring to \eqref{eqn:rule1}, we obtain
\begin{equation}
\begin{aligned}
\mu_{x(\tau),{\tau}}(\theta)&\overset{}{\leq}\frac{\mu_{x(\tau-1),\tau-1}(\theta)}{\sum\limits_{p=1}^{m}\min\{\{\mu_{j,{\tau-1}}(\theta_p)\}_{{j\in\mathcal{N}_{x(\tau)}[{\tau-1}]\cup\{x(\tau)\}}},\pi_{x(\tau),{\tau}}(\theta_p)\}}\\
&\overset{}{\leq}\frac{\mu_{x(\tau-1),\tau-1}(\theta)}{\eta_{x(\tau),\tau}(\theta^{\star})}.
\end{aligned}
\end{equation}
Using the above inequality recursively with $\tau\in\{a_{i,t}(\theta)+1,\ldots,t\}$ immediately leads to \eqref{eqn:belief_relation}. 
\end{proof}
\begin{proof}
\textbf{(Theorem 1)}: Fix a false hypothesis $\theta\in\Theta\setminus\{\theta^{\star}\}$. Based on the assumption of global identifiability, note that the set $\mathcal{S}(\theta^{\star},\theta)$ is non-empty. Recall that $v_{\theta}$ is any agent for which $K_i(\theta^{\star},\theta),i\in\mathcal{S}(\theta^{\star},\theta)$ is maximum, and note that we have already established that the assertion of Theorem \ref{thm:main}, namely inequality \eqref{eqn:asymprate}, holds for agent $v_{\theta}$ in Lemma \ref{lemma:source}. Now consider an agent $i\in\mathcal{V}\setminus\{v_{\theta}\}$, and notice that if $t\geq(n-1)T$, then $c_{i,t}(\theta)$ is uniformly bounded based on Lemma \ref{lemma:delay_bound}. Thus, the assertions in Lemma \ref{lemma:belief_relation} hold for all $t\geq(n-1)T$. Taking the natural log on both sides of \eqref{eqn:belief_relation}, dividing throughout by $t$, and simplifying, we obtain the following for all $t\geq (n-1)T$:
\begin{equation}
    -\frac{\log\mu_{i,t}(\theta)}{t}\geq -\frac{\log\mu_{v_{\theta},a_{i,t}(\theta)}(\theta)}{t}+\hspace{-4mm}\sum\limits_{\tau=a_{i,t}(\theta)+1}^{t}\hspace{-3mm}\frac{\log\eta_{x(\tau),\tau}(\theta^{\star})}{t},
    \label{eqn:ineq1}
\end{equation}
where $a_{i,t}(\theta)=t-c_{i,t}(\theta)$, $\eta_{i,t}(\theta^{\star})$ is as defined in \eqref{eqn:eta_defn}, and $x(\tau),\tau\in\{a_{i,t}(\theta)+1,\ldots,t\}$, are agents in the path linking $v_{\theta}$ to $i$ over $[a_{i,t}(\theta),t-1]$. For the remainder of the proof, to lighten the notation, let us drop the subscript on $v_{\theta}$, and let $a(t)=a_{i,t}(\theta)$. Based on \eqref{eqn:rule1}, we then have:
\begin{equation}
    \mu_{v,a(t)}(\theta) \leq \frac{\pi_{v,a(t)}(\theta)}{\eta_{v,a(t)}(\theta^{\star})}. 
\end{equation}
A bit of straightforward algebra then yields:
\begin{equation}
    -\frac{\log\mu_{v,a(t)}(\theta)}{t} \geq -\frac{\log \pi_{v,t}(\theta)}{t}+\frac{\log\frac{\pi_{v,t}(\theta)}{\pi_{v,a(t)}(\theta)}}{t}+\frac{\log\eta_{v,a(t)}(\theta^{\star})}{t}.
    \label{eqn:ineq2}
\end{equation}
Combining \eqref{eqn:ineq1} and \eqref{eqn:ineq2}, we obtain for $t\geq(n-1)T$:
\begin{equation}
   -\frac{\log\mu_{i,t}(\theta)}{t}\geq -\frac{\log \pi_{v,t}(\theta)}{t}+b(t),
\label{eqn:ineq_main}
\end{equation}
where $b(t)=b_1(t)+b_2(t)+b_3(t)$,
\begin{equation}
    b_1(t)=\hspace{-3mm}\sum\limits_{\tau=a(t)+1}^{t}\hspace{-3mm}\frac{\log\eta_{x(\tau),\tau}(\theta^{\star})}{t}, \hspace{1mm} b_2(t)=\frac{\log\frac{\pi_{v,t}(\theta)}{\pi_{v,a(t)}(\theta)}}{t},
    \label{eqn:terms1}
\end{equation}
and 
\begin{equation}
    b_3(t)=\frac{\log\eta_{v,a(t)}(\theta^{\star})}{t}.
    \label{eqn:terms2}
\end{equation}
We now argue that each of the terms $b_1(t),b_2(t)$ and $b_3(t)$ converge to 0 almost surely as $t\to\infty$. To do so, recall that the set $\bar{\Omega}\subseteq\Omega$ in Lemma \ref{lemma:lower_bound} has measure 1. In what follows, we prove that $b_1(t),b_2(t)$ and $b_3(t)$ converge to 0 for each sample path $\omega\in\bar{\Omega}.$ Accordingly, fix $\omega\in\bar{\Omega}$, and recall $\eta(\omega)\in(0,1)$ and $t'(\omega)\in(0,\infty)$ from Lemma \ref{lemma:lower_bound}. Suppose $t > t'(\omega)+2\bar{T}$, where $\bar{T}=(n-1)T$. We then claim the following: 
\begin{equation}
    \pi_{l,\tau}(\theta^{\star}) \geq \eta(\omega), \mu_{l,\tau}(\theta^{\star}) \geq \eta(\omega), \forall l\in\mathcal{V}, \forall \tau \geq a(t).
    \label{eqn:lowerbnd2}
\end{equation}
To see why this is true, notice that based on Lemma \ref{lemma:delay_bound}, the following holds when $t > t'(\omega)+2\bar{T}$: 
\begin{equation}
a(t)=t-c_{i,t}(\theta)\geq t-2\bar{T}>t'(\omega).
\label{eqn:time_ineq}
\end{equation}
The claim regarding \eqref{eqn:lowerbnd2} then follows readily from equation \eqref{eqn:lowerbound} in Lemma \ref{lemma:lower_bound}. Based on the above discussion, and referring to \eqref{eqn:eta_defn}, we immediately note that when $t > t'(\omega)+2\bar{T}$, 
\begin{equation}
    \eta_{l,\tau}(\theta^{\star}) \geq \eta(\omega), \forall l\in\mathcal{V}, \forall \tau \geq a(t).
    \label{eqn:ineq_eta}
\end{equation}
For establishing the convergence of $b_1(t),b_2(t)$ and $b_3(t)$, suppose $t > t'(\omega)+2\bar{T}$. Regarding $b_1(t)$, we then observe:
\begin{equation}
    \begin{aligned}
    |b_1(t)|&\overset{}{=}\left|\sum\limits_{\tau=a(t)+1}^{t}\hspace{-3mm}\frac{\log\eta_{x(\tau),\tau}(\theta^{\star})}{t}\right|\\
    &\overset{(a)}{\leq}\sum\limits_{\tau=a(t)+1}^{t}\hspace{-3mm}\frac{\left|\log\eta_{x(\tau),\tau}(\theta^{\star})\right|}{t}\\
    &\overset{(b)}{\leq}\frac{(t-a(t))}{t}\log\frac{1}{\eta(\omega)}\\
    &\overset{(c)}{\leq}\frac{2\bar{T}}{t}\log\frac{1}{\eta(\omega)},
    \end{aligned}
    \label{eqn:b_1}
\end{equation}
where (a) follows from the triangle inequality, (b) follows from \eqref{eqn:ineq_eta}, and (c) follows from \eqref{eqn:time_ineq}. From \eqref{eqn:b_1}, we immediately note that $b_1(t)\to 0$ along $\omega$. Let us now turn our attention to $b_2(t)$, and take note of the following:
\begin{equation}
 \begin{aligned}
    |b_2(t)|&\overset{(a)}{=}\frac{1}{t}\left|\log\frac{\pi_{v,t}(\theta^{\star})}{\pi_{v,a(t)}(\theta^{\star})}+\hspace{-2.5mm}\sum\limits_{\tau=a(t)+1}^{t}\hspace{-2.5mm}\log\frac{l_v(s_{v,\tau}|\theta)}{l_v(s_{v,\tau}|\theta^{\star})}\right|\\
    &\overset{(b)}{\leq}\frac{1}{t}\left|\log\frac{\pi_{v,t}(\theta^{\star})}{\pi_{v,a(t)}(\theta^{\star})}\right|+\frac{1}{t}\hspace{-2mm}\sum\limits_{\tau=a(t)+1}^{t}\left| \log\frac{l_v(s_{v,\tau}|\theta)}{l_v(s_{v,\tau}|\theta^{\star})}\right|\\
    &\overset{(c)}{\leq}\frac{2}{t}\log\frac{1}{\eta(\omega)}+\frac{(t-a(t))L}{t}\\
    &\overset{(d)}{\leq}\frac{2}{t}\left(\log\frac{1}{\eta(\omega)}+L\bar{T}\right),
    \end{aligned}
    \label{eqn:b_2}
\end{equation}
where (a) follows from \eqref{eqn:recursion} and some simple manipulations, (b) is a consequence of the triangle inequality, (c) follows from \eqref{eqn:bounded_lograt} and \eqref{eqn:lowerbnd2}, and (d) follows from \eqref{eqn:time_ineq}. Based on \eqref{eqn:b_2}, we then note that $b_2(t)\to 0$ along $\omega$. Finally, the fact that $b_3(t)$ converges to 0 along $\omega$ follows immediately by appealing to \eqref{eqn:ineq_eta}. We have thus established that $b(t)\to 0$ almost surely. The desired conclusion then follows by taking the limit inferior on both sides of \eqref{eqn:ineq_main}, and noting that
\begin{equation}
  \lim_{t\to\infty} -\frac{\log \pi_{v,t}(\theta)}{t}=  \lim_{t\to\infty}-\frac{1}{t}\rho_{v,t}(\theta)=K_v(\theta^{\star},\theta)\hspace{1mm} a.s.,
\end{equation}
where $\rho_{v,t}(\theta)$ is as defined in Lemma \ref{lemma:Bayes}. The fact that $\mu_{i,t}(\theta)\to 0$ is immediate, since $K_v(\theta^{\star},\theta)>0$ based on global identifiability. The above analysis applies identically to each $\theta\in\Theta\setminus\{\theta^{\star}\}$. This establishes consistency of our rule, and completes the proof. 
\end{proof}

\section{Proof of Theorem
\ref{thm:conc}}
\label{sec:App2}
\noindent To prove Theorem \ref{thm:conc}, we will make use of one of Littlewood's three principles: every pointwise convergent sequence of measurable functions is nearly uniformly convergent.
\begin{theorem} (\textbf{Egoroff's Theorem}) \cite[Chapter 18]{royden} Let $(X,\mathcal{M},\mu)$ be a finite measure space and $\{f_n\}$ a sequence of measurable functions on $X$ that converge pointwise a.e. (almost everywhere) on $X$ to a function $f$ that is finite a.e. on $X$. Then for each $\epsilon >0$, there is a measurable subset $X_{\epsilon}$ of $X$
for which ${f_n}\rightarrow f$ uniformly on $X_{\epsilon}$, and $\mu(X_{\epsilon}) \geq 1-\epsilon.$
\end{theorem}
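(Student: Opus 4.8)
The plan is to prove Egoroff's Theorem by constructing the good set $X_{\epsilon}$ explicitly as the complement of a countable union of small ``bad'' sets, exploiting the finiteness of $\mu$ through continuity of measure from above. First I would dispose of the almost-everywhere caveats: let $N\in\mathcal{M}$ be the null set off which $f_n\to f$ pointwise and $f$ is finite. Since $\mu(N)=0$, I can work on $X\setminus N$ throughout without affecting any of the measures involved, and thereby assume genuine pointwise convergence to a finite limit at every point.

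The core construction is as follows. For positive integers $n,k$, define
\begin{equation}
E_n^k \triangleq \bigcup_{j\geq n}\left\{x\in X : |f_j(x)-f(x)|\geq \frac{1}{k}\right\}.
\end{equation}
For each fixed $k$, the sets $\{E_n^k\}_n$ are nested and decreasing in $n$, and $\bigcap_{n}E_n^k$ is precisely the set of points at which $|f_j-f|\geq 1/k$ for infinitely many $j$. Pointwise convergence forces this intersection to be empty (in particular, null) for every $k$.

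The key step, and the main obstacle, is to conclude that $\mu(E_n^k)\to 0$ as $n\to\infty$ for each fixed $k$. This is exactly continuity of measure from above applied to the decreasing sequence $\{E_n^k\}_n$, and it requires $\mu(E_1^k)<\infty$ — which is supplied by the finiteness hypothesis $\mu(X)<\infty$. This is the only place finiteness enters, and it is indispensable: continuity from above (hence the whole theorem) fails on infinite measure spaces, so I would flag this as the crux of the argument. Granting it, I would invoke a summability trick: given $\epsilon>0$, for each $k$ choose $n_k$ with $\mu(E_{n_k}^k)<\epsilon/2^k$, and set
\begin{equation}
X_{\epsilon} \triangleq X\setminus\bigcup_{k\geq 1}E_{n_k}^k, \qquad \mu(X\setminus X_{\epsilon}) \leq \sum_{k\geq 1}\mu(E_{n_k}^k) < \epsilon,
\end{equation}
using countable subadditivity for the bound.

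It remains to verify uniform convergence on $X_{\epsilon}$, which falls out of the construction. If $x\in X_{\epsilon}$, then $x\notin E_{n_k}^k$ for every $k$, so for each $k$ and every $j\geq n_k$ we have $|f_j(x)-f(x)|<1/k$ simultaneously for all such $x$; since $n_k$ depends only on $k$ and not on $x$, this is exactly uniform convergence of $f_n$ to $f$ on $X_{\epsilon}$. Finally, when $\mu$ is a probability measure (as in the application, where $\mu=\mathbb{P}^{\theta^{\star}}$), the bound $\mu(X\setminus X_{\epsilon})<\epsilon$ reads as $\mu(X_{\epsilon})\geq 1-\epsilon$, which is the stated form.
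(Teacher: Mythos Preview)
Your proof is the standard textbook argument for Egoroff's Theorem and is correct. Note, however, that the paper does not actually prove this statement: it is quoted as a classical result with a citation to Royden, and is then applied as a black box in the proof of Theorem~\ref{thm:conc} (with $b(t)\to 0$ a.s.\ playing the role of $f_n\to f$). So there is no ``paper's own proof'' to compare against; your write-up simply supplies the omitted background. One minor remark: as you observe, the inequality $\mu(X_\epsilon)\geq 1-\epsilon$ in the stated form presumes $\mu(X)=1$, which is the case in the application since $\mu=\mathbb{P}^{\theta^{\star}}$; for a general finite measure space the natural conclusion of your argument is $\mu(X\setminus X_\epsilon)<\epsilon$, i.e., $\mu(X_\epsilon)>\mu(X)-\epsilon$.
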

\begin{proof} (\textbf{Theorem 2}): Consider a $\theta\in\Theta\setminus\{\theta^{\star}\}$, and recall that $K_{v_{\theta}}(\theta^{\star},\theta)=\max_{l\in\mathcal{S}(\theta^{\star},\theta)}K_l(\theta^{\star},\theta)=\bar{K}(\theta^{\star},\theta).$ We only prove the result for $i\in\mathcal{V}\setminus\{v_{\theta}\},$ since the argument for agent $v_{\theta}$ will be similar. To this end, let us fix an agent $i\in\mathcal{V}\setminus\{v_{\theta}\}$. We adhere to the notation used in the proof of Lemma \ref{lemma:Bayes}, and for simplicity assume that the initial local belief vectors $\boldsymbol{\pi}_{i,0}, i\in\mathcal{V}$ are uniform distributions over the hypothesis set $\Theta$; our subsequent arguments will continue to hold (with simple modifications) under the more general assumption on priors in line 1 of Algo \ref{Algo:Algo1}. We immediately note that based on the assumption of uniform priors, $\rho_{i,0}(\theta)=0, \forall i\in\mathcal{V}$. Now referring to inequality \eqref{eqn:ineq_main} in the proof of Theorem \ref{thm:main}, we obtain the following for $t\geq(n-1)T$:
\begin{equation}
    \begin{aligned}
    &\mathbb{P}^{\theta^{\star}}\left(-\frac{\log\mu_{i,t}(\theta)}{t} \leq \bar{K}(\theta^{\star},\theta)-\frac{\epsilon}{2}+b(t)\right)\\
    &\overset{(a)}{\leq}\mathbb{P}^{\theta^{\star}}\left(-\frac{\log\pi_{v_{\theta},t}(\theta)}{t} \leq \bar{K}(\theta^{\star},\theta)-\frac{\epsilon}{2}\right)\\
   &\overset{(b)}{\leq}\mathbb{P}^{\theta^{\star}}\left(-\frac{\rho_{v_{\theta},t}(\theta)}{t} \leq \bar{K}(\theta^{\star},\theta)-\frac{\epsilon}{2}\right)\\
   &\overset{(c)}{=}\mathbb{P}^{\theta^{\star}}\left(\frac{1}{t}\sum\limits_{k=1}^{t}\lambda_{v_{\theta},k}(\theta)-(-{K}_{v_{\theta}}(\theta^{\star},\theta)) \geq \frac{\epsilon}{2}\right)\\
   &\overset{(d)}{\leq}\exp(-\frac{\epsilon^2 t}{8L^2}).
   \label{eqn:conc_ineq1}
    \end{aligned}
\end{equation}
In the above steps, (a) follows directly from \eqref{eqn:ineq_main}, and (b) follows by noting that based on the definition of $\rho_{v_{\theta},t}(\theta)$,
\begin{equation}
    \frac{\log\pi_{v_{\theta},t}(\theta)}{t} \leq \frac{\rho_{v_{\theta},t}(\theta)}{t}, \forall t\in\mathbb{N}.
\end{equation}
Step (c) follows directly from \eqref{eqn:recursion} with $\rho_{v_{\theta},0}(\theta)=0.$ Finally, noting that $\frac{1}{t}\sum\limits_{k=1}^{t}\lambda_{v_{\theta},k}(\theta)\to -{K}_{v_{\theta}}(\theta^{\star},\theta)$ a.s. (as argued in the proof of Lemma \ref{lemma:Bayes}), using the fact that $|\lambda_{v_{\theta},t}(\theta)|\leq L, \forall t\in\mathbb{N}_{+}$ based on \eqref{eqn:bounded_lograt}, and applying Hoeffding's inequality \cite[Theorem 2]{hoeffding}, leads to (d). Now recall from the proof of Theorem \ref{thm:main} that $b(t)\to 0$ almost surely. Appealing to Egoroff's theorem, we then infer that given any arbitrarily small $\delta\in (0,1)$, there exists a set $\Omega'(\delta)\subseteq\Omega$ of $\mathbb{P}^{\theta^{\star}}$-measure at least $(1-\delta)$, such that $b(t)$ converges to $0$ uniformly on $\Omega'(\delta)$. Thus, given any $\epsilon >0$, there exists a $\omega$-independent constant $t(\epsilon,\delta)\in(0,\infty)$, such that $|b(t)|\leq \frac{\epsilon}{2},\forall t\geq t(\epsilon,\delta)$, along each sample path $\omega\in\Omega'(\delta)$. Setting $t'(\epsilon,\delta,n,T)=\max\{t(\epsilon,\delta),(n-1)T\}$, and referring to \eqref{eqn:conc_ineq1}, we immediately obtain that $\forall t \geq t'(\epsilon,\delta,n,T)$,
\begin{equation}
    \begin{aligned}
    &\mathbb{P}^{\theta^{\star}}\left(\left\{-\frac{\log\mu_{i,t}(\theta)}{t} \leq \bar{K}(\theta^{\star},\theta)-\epsilon\right\}\cap\Omega'(\delta)\right)\\
    &\leq \mathbb{P}^{\theta^{\star}}\left(\left\{-\frac{\log\mu_{i,t}(\theta)}{t} \leq \bar{K}(\theta^{\star},\theta)-\frac{\epsilon}{2}+b(t)\right\}\cap\Omega'(\delta)\right)\\
    &\leq \mathbb{P}^{\theta^{\star}}\left(-\frac{\log\mu_{i,t}(\theta)}{t} \leq \bar{K}(\theta^{\star},\theta)-\frac{\epsilon}{2}+b(t)\right) \leq \exp(-\frac{\epsilon^2 t}{8L^2}).
    \end{aligned}
\end{equation}
Taking the natural log on both sides of the resulting inequality, dividing throughout by $t$, simplifying, and then taking the limit inferior on both sides, leads to the desired result.
\end{proof}

\section{Proof of Theorem
\ref{thm:Byz}}
\label{sec:App3}
\begin{proof}
Consider an $f$-local adversarial set $\mathcal{A}\subset\mathcal{V}$, and let $\mathcal{R}=\mathcal{V}\setminus\mathcal{A}$. We study two separate cases. 

\underline{\textbf{Case 1:}} Consider a regular agent $i\in\mathcal{R}$ such that $|\mathcal{N}_i| < (2f+1)$. Based on the hypothesis of the theorem, we claim that $i\in\mathcal{S}(\theta_p,\theta_q)$, for every pair $\theta_p,\theta_q \in \Theta$. We prove this claim via contradiction. To do so, suppose there exists a pair $\theta_p,\theta_q\in\Theta$, such that $i\in\mathcal{V}\setminus\mathcal{S}(\theta_p,\theta_q)$. As $|\mathcal{N}_i| < (2f+1)$, the set $\{i\}$ is clearly not $(2f+1)$-reachable (see Def. \ref{defn:rreachable}). Thus, $\mathcal{G}$ is not strongly $(2f+1)$-robust w.r.t. the source set $\mathcal{S}(\theta_p,\theta_q)$, a fact that contradicts the hypothesis of the theorem. Thus, we have established that if the graph-theoretic condition identified in the theorem is met, then regular agents with fewer than $(2f+1)$ neighbors can distinguish between every pair of hypotheses. For such agents, the assertion of the theorem then follows directly from Lemma \ref{lemma:Bayes}, and update rules \eqref{eqn:Bayes} and \eqref{eqn:rule3}.

\underline{\textbf{Case 2:}} We now focus only on regular agents $i$ satisfying $|\mathcal{N}_i| \geq (2f+1)$. A key property of the LFRHE algorithm (Algo. \ref{Algo:Algo2}) that will be used throughout the proof is as follows. For any $i\in\mathcal{R}$, and any $\theta\in\Theta$, the filtering operation in line 7 of Algo. \ref{Algo:Algo2} ensures that at each $t\in\mathbb{N}$, we have
\begin{equation}
\mu_{j,t}(\theta) \in Conv(\Psi^{\theta}_{i,t}), \forall j \in \mathcal{M}^{\theta}_{i,t},
\label{eqn:LHRHE_property}
\end{equation}
where 
\begin{equation}
\Psi^{\theta}_{i,t} \triangleq \{\mu_{l,t}(\theta) \hspace{1mm} {:} \hspace{1mm}  l\in\mathcal{N}_i\cap\mathcal{R}\},
\label{eqn:set}
\end{equation}
and $Conv(\Psi^{\theta}_{i,t})$ is used to denote the convex hull formed by the points in the set $\Psi^{\theta}_{i,t}$ (recall that $\mathcal{M}^{\theta}_{i,t}$ was defined in line 8 of Algo \ref{Algo:Algo2} to be the set of agents in $\mathcal{N}_i$ whose beliefs are retained by agent $i$ after it removes the highest $f$ and lowest $f$ beliefs $\mu_{j,t}(\theta), j\in\mathcal{N}_i$). In words, any neighboring belief (on a particular hypothesis) that agent $i$ uses in the update rule \eqref{eqn:rule2} lies in the convex hull of the actual beliefs of its regular neighbors (on that particular hypothesis). To see why \eqref{eqn:LHRHE_property} is true, partition the neighbor set $\mathcal{N}_i$ of a regular agent into three sets $\mathcal{U}^{\theta}_{i,t}, \mathcal{M}^{\theta}_{i,t}$, and $\mathcal{J}^{\theta}_{i,t}$ as follows. Sets $\mathcal{U}^{\theta}_{i,t}$ and $\mathcal{J}^{\theta}_{i,t}$ are each of cardinality $f$, and contain neighbors of agent $i$ that transmit the highest $f$ and the lowest $f$ actual beliefs respectively, on the hypothesis $\theta$, to agent $i$ at time-step $t$. The set $\mathcal{M}^{\theta}_{i,t}$ contains the remaining neighbors of agent $i$, and is non-empty at every time-step since $|\mathcal{N}_i| \geq (2f+1)$. If $\mathcal{M}^{\theta}_{i,t}\cap\mathcal{A}=\emptyset$, then \eqref{eqn:LHRHE_property} holds trivially. Thus, consider the case when there are adversaries in the set $\mathcal{M}^{\theta}_{i,t}$, i.e., $\mathcal{M}^{\theta}_{i,t}\cap\mathcal{A} \neq \emptyset$. Given the $f$-locality of the adversarial model, and the nature of the filtering operation in the LFRHE algorithm, we infer that for each $j\in\mathcal{M}^{\theta}_{i,t}\cap\mathcal{A}$, there exist regular agents $u,v\in\mathcal{N}_i\cap\mathcal{R}$, such that $u\in\mathcal{U}^{\theta}_{i,t}$, $v\in\mathcal{J}^{\theta}_{i,t}$, and $\mu_{v,t}(\theta) \leq \mu_{j,t}(\theta) \leq \mu_{u,t}(\theta)$. This establishes our claim regarding equation \eqref{eqn:LHRHE_property}.

With the above property in hand, let $\bar{\Omega}\subseteq\Omega$ denote the set of sample paths for which assertions (i)-(iii) in Lemma \ref{lemma:Bayes} (Appendix \ref{sec:App1}) hold when restricted to the set of regular agents $\mathcal{R}$. Since the evolution of the local beliefs are unaffected by the presence of adversaries, Lemma \ref{lemma:Bayes} implies $\mathbb{P}^{\theta^{\star}}(\bar{\Omega})=1$. Now as in Lemma \ref{lemma:lower_bound}, fix a sample path $\omega\in\bar{\Omega}$. Define $\gamma_1 \triangleq \min_{i\in\mathcal{R}} \pi_{i,0}(\theta^{\star})$, pick a small number $\delta > 0$ satisfying $\delta < \gamma_1$, and observe that arguments similar to those in the proof of Lemma \ref{lemma:lower_bound} imply the existence of a time-step $t'(\omega)$, 
such that for all $t\geq t'(\omega), \pi_{i,t}(\theta^{\star}) \geq \gamma_1-\delta>0, \forall i \in \mathcal{R}.$ Let $\gamma_2(\omega)\triangleq\min_{i\in\mathcal{R}}\{\mu_{i,t'(\omega)}(\theta^{\star})\}$. As before, we claim $\gamma_2(\omega) > 0$. To establish this claim, we need to answer the following question: can an adversarial agent cause its out-neighbors to set their actual beliefs on $\theta^{\star}$ to be $0$ by setting its own actual belief on $\theta^{\star}$ to be $0$? We argue that this is impossible under the LFRHE algorithm. By way of contradiction, suppose there exists a time-step $\bar{t}(\omega)$ satisfying:
\begin{equation}
\bar{t}(\omega)=\min\{t\in\mathbb{N}\hspace{1mm}{:} \hspace{1mm} \exists i \in \mathcal{R} \hspace{1mm}\textrm{with} \hspace{1mm}\mu_{i,t}(\theta^{\star})=0\}.\
\end{equation}
In words, $\bar{t}(\omega)$ represents the first time-step when some regular agent $i$ sets its actual belief on the true hypothesis to be zero. Clearly, $\bar{t}(\omega)\neq 0$ based on line 1 of Algo. \ref{Algo:Algo2}. Suppose  $\bar{t}(\omega)$ is some positive integer, and focus on how agent $i$ updates $\mu_{i,\bar{t}(\omega)}(\theta^{\star})$ based on \eqref{eqn:rule2}. Following similar arguments as in the proof of Lemma \ref{lemma:lower_bound}, we know that $\pi_{i,t}(\theta^{\star}) > 0, \forall t\in \mathbb{N}, \forall i \in \mathcal{R}.$ At the same time, every belief featuring in the set $\Psi^{\theta^{\star}}_{i,\bar{t}(\omega)-1}$ (as defined in equation \eqref{eqn:set}) is strictly positive based on the way $\bar{t}(\omega)$ is defined. In light of the above arguments, and based on \eqref{eqn:LHRHE_property}, \eqref{eqn:set}, we infer:
\begin{equation}
\min\{\{\mu_{j,\bar{t}(\omega)-1}(\theta^{\star})\}_{j\in\mathcal{M}^{\theta^{\star}}_{i,\bar{t}(\omega)-1}},\pi_{i,\bar{t}(\omega)}(\theta^{\star})\} > 0.
\end{equation}
Thus, based on \eqref{eqn:rule2}, we must have $\mu_{i,\bar{t}(\omega)}(\theta^{\star}) > 0$, yielding the desired contradiction. With $\eta(\omega)\triangleq\min\{\gamma_1-\delta,\gamma_2(\omega)\} > 0$, one can easily  verify the following by referring to \eqref{eqn:rule2}:
\begin{equation}
\mu_{i,t}(\theta^{\star}) \geq \eta(\omega), \forall t \geq t'(\omega), \forall i\in\mathcal{R}.
\label{eqn:lower2}
\end{equation}
In particular, \eqref{eqn:lower2} follows by (i) noting that for each $i \in \mathcal{R}$, $\pi_{i,t'(\omega)+1}(\theta^{\star}) \geq \eta(\omega)$, and each belief featuring in the set $\Psi^{\theta^{\star}}_{i,t'(\omega)}$ is lower bounded by $\eta(\omega)$, (ii) leveraging \eqref{eqn:LHRHE_property}, \eqref{eqn:set}, and (iii) using a similar string of arguments as those used to arrive at \eqref{eqn:lower1}. Thus, we have established an analogous result as in Lemma \ref{lemma:lower_bound} for the regular agents. 

To proceed, let us fix a false hypothesis $\theta \neq \theta^{\star}$, and define $\tilde{K}(\theta^{\star},\theta)\triangleq \min_{v\in\mathcal{S}(\theta^{\star},\theta)\cap\mathcal{R}}K_v(\theta^{\star},\theta)$. Then, given any $\epsilon >0$, Lemma \ref{lemma:Bayes} implies the existence of a time-step $\tilde{t}_1(\omega,\theta,\epsilon)$, such that:
\begin{equation}
\pi_{i,t}(\theta) <  e^{-(\tilde{K}(\theta^{\star},\theta)-\epsilon)t}, \forall t \geq \tilde{t}_1(\omega,\theta,\epsilon), \forall i\in\mathcal{S}(\theta^{\star},\theta)\cap\mathcal{R}.
\label{eqn:src_bnd_Byz}
\end{equation}
Let $\tilde{t}_2=\max\{t'(\omega),\tilde{t}_1(\omega,\theta,\epsilon)\}$,
where we have suppressed the dependence of $\tilde{t}_2$ on $\omega,\theta$ and $\epsilon$. For any agent $i\in\mathcal{S}(\theta^{\star},\theta)\cap\mathcal{R}$, observe that based on \eqref{eqn:LHRHE_property}, \eqref{eqn:set} and \eqref{eqn:lower2},
\begin{equation}
\min\{\{\mu_{j,t}(\theta^{\star})\}_{j\in\mathcal{M}^{\theta^{\star}}_{i,t}},\pi_{i,t+1}(\theta^{\star})\} \geq \eta(\omega), \forall t \geq \tilde{t}_2. 
\end{equation}
Combining the above with a similar line of argument as used to arrive at \eqref{eqn:src_upperbnd1}, we obtain:
\begin{equation}
\mu_{i,t}(\theta) < C_1(\omega)e^{-(\tilde{K}(\theta^{\star},\theta)-\epsilon)t}, \forall t \geq \tilde{t}_2+1, \forall i \in \mathcal{S}(\theta^{\star},\theta)\cap\mathcal{R},
\label{eqn:Byz_src_uppbnd2}
\end{equation}
where $C_1(\omega)={\eta(\omega)}^{-1}.$
If $\mathcal{V}\setminus\mathcal{S}(\theta^{\star},\theta)$ is empty, then we are essentially done. Else, define
\begin{equation}
\mathcal{L}_1{(\theta^{\star},\theta)}\triangleq \{i\in \mathcal{V}\setminus\mathcal{S}(\theta^{\star},\theta) \hspace{1mm} {:} \hspace{1mm} |\mathcal{N}_i\cap\mathcal{S}(\theta^{\star},\theta)| \geq (2f+1)\}.
\label{eqn:level1thm2}
\end{equation}
Whenever $\mathcal{V}\setminus\mathcal{S}(\theta^{\star},\theta)$ is non-empty, we claim that $\mathcal{L}_1{(\theta^{\star},\theta)}$ (as defined above) is also non-empty based on the hypothesis of the theorem. To see this, note that if $\mathcal{L}_1{(\theta^{\star},\theta)}$ is empty, then $\mathcal{C}=\mathcal{V}\setminus\mathcal{S}(\theta^{\star},\theta)$ is not $(2f+1)$-reachable, violating the fact that $\mathcal{G}$ is strongly $(2f+1)$-robust w.r.t. $\mathcal{S}(\theta^{\star},\theta)$. We claim
that the following holds for each $i\in\mathcal{L}_1{(\theta^{\star},\theta)}\cap\mathcal{R}$:
\begin{equation}
\min_{j\in\mathcal{M}^{\theta}_{i,t}}\mu_{j,t}(\theta)< C_1(\omega)e^{-(\tilde{K}(\theta^{\star},\theta)-\epsilon)t}, \forall t \geq \tilde{t}_2+1.
\label{eqn:Byz_bound1}
\end{equation}
To verify the above claim, pick any agent $i\in\mathcal{L}_1{(\theta^{\star},\theta)}\cap\mathcal{R}$, and suppose $t\geq \tilde{t}_2+1$. When $|\mathcal{M}^{\theta}_{i,t}\cap\{\mathcal{S}(\theta^{\star},\theta)\cap\mathcal{R}\}|>0$, the claim follows immediately based on \eqref{eqn:Byz_src_uppbnd2}. Consider the case when $|\mathcal{M}^{\theta}_{i,t}\cap\{\mathcal{S}(\theta^{\star},\theta)\cap\mathcal{R}\}|=0$. Since $i\in\mathcal{L}_1{(\theta^{\star},\theta)}$, it has at least $(2f+1)$ neighbors in $\mathcal{S}(\theta^{\star},\theta)$, out of which at least $f+1$ are regular based on the $f$-locality of the adversarial model. Since the set $\mathcal{J}^{\theta}_{i,t}$ has cardinality $f$, it must then be that $|\mathcal{U}^{\theta}_{i,t}\cap\{\mathcal{S}(\theta^{\star},\theta)\cap\mathcal{R}\}| > 0$. Let $u\in\mathcal{U}^{\theta}_{i,t}\cap\{\mathcal{S}(\theta^{\star},\theta)\cap\mathcal{R}\}$. Based on the way $\mathcal{M}^{\theta}_{i,t}$ is defined, it must  be that $\mu_{j,t}(\theta) \leq \mu_{u,t}(\theta) < C_1(\omega)e^{-(\tilde{K}(\theta^{\star},\theta)-\epsilon)t}, \forall j \in \mathcal{M}^{\theta}_{i,t}$, where the last inequality follows from \eqref{eqn:Byz_src_uppbnd2}. This establishes our claim regarding \eqref{eqn:Byz_bound1}. Now consider the update of $\mu_{i,t+1}(\theta)$ based on \eqref{eqn:rule2}, when $t\geq \tilde{t}_2+1$.  In light of the above arguments, the numerator of the fraction on the RHS of \eqref{eqn:rule2} is upper-bounded by $C_1(\omega)e^{-(\tilde{K}(\theta^{\star},\theta)-\epsilon)t}$, while the denominator is lower-bounded by $\eta(\omega)$. We conclude that for all $i \in \mathcal{L}_1{(\theta^{\star},\theta)}\cap\mathcal{R}$:
\begin{equation}
\mu_{i,t}(\theta) < {(C_1(\omega))}^2C_2(\theta,\epsilon)e^{-(\tilde{K}(\theta^{\star},\theta)-\epsilon)t}, \forall t \geq \tilde{t}_2+2,
\label{eqn:Byz_level1bnd}
\end{equation}
where $C_2(\theta,\epsilon)=e^{(\tilde{K}(\theta^{\star},\theta)-\epsilon)}$. 
With $\mathcal{L}_0{(\theta^{\star},\theta)}\triangleq\mathcal{S}(\theta^{\star},\theta)$, we recursively define the sets $\mathcal{L}_r{(\theta^{\star},\theta)}, 1\leq r \leq (n-1)$ as:
\begin{equation}
\mathcal{L}_r{(\theta^{\star},\theta)}\triangleq \{i\in\mathcal{V}\setminus\bigcup_{q=0}^{r-1}\mathcal{L}_q{(\theta^{\star},\theta)} \hspace{1mm} {:} \hspace{1mm}  |\mathcal{N}_i\cap \{\bigcup_{q=0}^{r-1}\mathcal{L}_q{(\theta^{\star},\theta)}\}| \geq (2f+1)\}.
\end{equation}
We claim that the following is true for all $i\in\mathcal{L}_r{(\theta^{\star},\theta)}\cap\mathcal{R}$: 
\begin{equation}
\mu_{i,t}(\theta) < {(C_1(\omega))}^{r+1}{(C_2(\theta,\epsilon))}^re^{-(\tilde{K}(\theta^{\star},\theta)-\epsilon)t}, \forall t \geq \tilde{t}_2+(r+1).
\label{eqn:Byz_levelrbnd}
\end{equation}
To prove the claim, we proceed via induction on $r$. The base cases when $r\in\{0,1\}$ have already been established. Suppose equation \eqref{eqn:Byz_levelrbnd} holds for all $r\in\{0,\ldots,m-1\}$, where $m\in\{2,\ldots,n-1\}.$ 
The claim easily extends to the case when $r=m$ by noting that (i)  $\mathcal{L}_m{(\theta^{\star},\theta)}$ is non-empty if $\mathcal{V}\setminus\{\bigcup_{q=0}^{(m-1)}\mathcal{L}_q{(\theta^{\star},\theta)}\}$ is non-empty (based on the hypothesis of the theorem), (ii) any agent $i\in\mathcal{L}_m{(\theta^{\star},\theta)}\cap\mathcal{R}$ has at least $(2f+1)$ neighbors in the set $\bigcup_{q=0}^{(m-1)}\mathcal{L}_q{(\theta^{\star},\theta)}$, of which at least $f+1$ are regular (based on the $f$-locality of the adversarial model), and (iii) using the induction hypothesis and arguments similar to those used to arrive at \eqref{eqn:Byz_level1bnd}. We have thus verified the correctness of \eqref{eqn:Byz_levelrbnd}. Now taking the natural log on both sides of \eqref{eqn:Byz_levelrbnd}, dividing throughout by $t$, simplifying, and then taking the limit inferior on both sides of the resulting inequality immediately leads to \eqref{eqn:rate_Byz}. Finally, to complete the proof, it suffices to note that $\bigcup_{q=0}^{(n-1)}\mathcal{L}_q{(\theta^{\star},\theta)}=\mathcal{R}$.
\end{proof}
\bibliographystyle{IEEEtran} 
\bibliography{refs}
\end{document}